\newtheorem{theorem}{Theorem}[section]
\newtheorem{lemma}[theorem]{Lemma}
\newtheorem{proposition}[theorem]{Proposition}
\theoremstyle{definition}
\newcommand{\rom}[1]{\text{\sf #1}}
\def\1{\mathbb 1}
\renewcommand{\d}{{\mathrm{d}}}
\newcommand{\R}{\mathbb{R}}
\newcommand{\N}{\mathbb{N}}
\newcommand{\Up}{U_+}
\newcommand{\dd}{{\mathrm{d}}}
\newcommand{\Con}{\ensuremath{\mathcal{C}}}
\newcommand{\diam}{{\mathrm{diam}}}
\newcommand{\lip}{{\mathrm{Lip}}}
\newcommand{\ep}{\varepsilon}
\numberwithin{equation}{section}
\newcommand{\supp}{\mathrm{supp}}
\theoremstyle{remark}
\newtheorem{remark}[theorem]{Remark}
\newcommand{\veps}{\varepsilon}
\theoremstyle{remark}
\newcommand{\eps}{\veps}
\title{Geodesics in nonexpanding impulsive gravitational
waves with $\Lambda$, Part I}
\author[C. S\"amann]{Clemens S\"amann}
\address{Faculty of Mathematics, University of Vienna, Oskar-Morgenstern-Platz
1, 1090 Vienna, Austria}
\email{clemens.saemann@univie.ac.at}
\author[R. Steinbauer]{Roland Steinbauer}
\address{Faculty of Mathematics, University of Vienna, Oskar-Morgenstern-Platz
1, 1090 Vienna, Austria}
\email{roland.steinbauer@univie.ac.at}
\author[A. Lecke]{Alexander Lecke}
\address{Faculty of Mathematics, University of Vienna, Oskar-Morgenstern-Platz
1, 1090 Vienna, Austria}
\email{alexander.lecke@univie.ac.at}
\author[J. Podolsk\'y]{Ji\v{r}\'i Podolsk\'y}
\address{Institute of Theoretical Physics, Faculty of Mathematics and Physics,
Charles University in Prague, V~Hole\v{s}ovi\v{c}k\'ach~2, 180~00 Praha 8, Czech
Republic }
\email{podolsky@mbox.troja.mff.cuni.cz}
\date{\today}
\begin{document}

\begin{abstract}
We investigate the geodesics in the entire class of nonexpanding impulsive
gravitational waves propagating in an (anti-)de Sitter universe using the
distributional form of the metric. Employing a $5$-dimensional embedding
formalism and a general regularisation technique we prove existence and uniqueness
of geodesics crossing the wave impulse leading to a completeness result.
We also derive the explicit form of the geodesics thereby confirming previous results
derived in a heuristic approach.

\vskip 1em

\noindent
\emph{Keywords:} impulsive gravitational waves, geodesics, low regularity
\medskip

\noindent 
\emph{MSC2010:} 83C15, %Exact solutions
		83C35, %Waves
		46F10, %Operations with distributions
		34A36  %discontinuous equations

\noindent
\emph{PACS numbers:} 02.30.Hq, %ODE
		     04.20.Jb, %Exact solutions
		     04.30.Nk  %Wave propagation and interactions

\end{abstract}

\maketitle

\section{Introduction}

Impulsive \emph{pp}-waves have now been studied for several decades and have
become textbook examples of exact radiative spacetimes modelling short but
intense bursts of gravitational radiation propagating in a Minkowski background
(see e.g.\ \cite[Sec.\ 20.2]{GP:09} and the references therein).
Such geometries have been introduced by Roger Penrose using his
`scissors and paste method' (see e.g.\ \cite{Pen72}) leading to the
\emph{distributional Brinkmann form} of the metric
\begin{equation}\label{ipp}
 \d s^2=\d x^2+\d y^2-2\d u \d v+f(x,y)\delta(u)\d u^2,
\end{equation}
i.e., impulsive limits of sandwich \emph{pp}-waves \cite{BPR59}.
Alternatively, almost at the same time, Aichelburg and Sexl in \cite{AS} have obtained a special impulsive
\emph{pp}-wave as the ultrarelativistic limit of the Schwarzschild geometry and
several authors have applied the same approach to other solutions of the
Kerr-Newman family (see e.g.\ \cite[Ch.\ 4]{BarHog2003} and \cite[Sec.\ 3.5.1]{Pod2002b}
for an overview). This procedure of boosting static sources to the speed of light was later generalised
to the case of a non-vanishing cosmological constant $\Lambda$ in the pioneering work \cite{HotTan93}
by Hotta and Tanaka (see also \cite{[B5],[B6]}) which lead to an increased interest in nonexpanding impulsive
waves in cosmological de Sitter and anti-de Sitter backgrounds. The Penrose `scissors and paste method'
for non-vanishing $\Lambda$ was described in \cite{Sfet,[B7]} while impulsive limits in the Kundt class
were considered in \cite{[B1]} and elsewhere, see \cite[Sec.\ 20.3]{GP:09} for an overview.
\medskip

Generally, nonexpanding impulsive waves in all backgrounds of constant curvature
can be described by a continuous as well as by a  distributional form of
the metric tensor. To give a brief discussion of these we start with the
conformally flat form of Minkowski ($\Lambda=0$) and (anti-)de Sitter ($\Lambda\not=0$)
background spacetimes
\begin{equation}
\d s_0^2= \frac{2\,\d\eta\,\d\bar\eta
-2\,\d{\mathcal U}\,\d{\mathcal V}}{[\,1+{\frac{1}{6}}\Lambda(\eta\bar\eta-{\mathcal U}{\mathcal
 V})\,]^2}\,.
 \label{conf*}
\end{equation}
As in \cite{[B7]}, here ${\mathcal U}, \mathcal{V}$ are the usual null and $\eta,\bar\eta$  the
usual complex spatial coordinates. Now for ${\mathcal U}>0$ we perform the transformation
\begin{equation}\label{ro:trsf}
 {\mathcal U}=U\,,\quad {\mathcal V}=V+H+UH_{,Z}H_{,\bar Z}\,,\quad
\eta=Z+UH_{,\bar Z}\,,
\end{equation}
where $H(Z,\bar Z)$ is an arbitrary real-valued function.
Combining this with the background line element \eqref{conf*} for ${U<0}$ in
which we set ${{\mathcal U}=U}$, ${{\mathcal V}=V}$, ${\eta=Z}$  we obtain the \emph{continuous}
form of the metric
\begin{equation}\label{conti}
\d s^2= \frac{2\,|\d Z+\Up(H_{,Z\bar Z}\d Z+{H}_{,\bar Z\bar Z}\d\bar Z)|^2-2\,\d
U\d V}
{[\,1+\frac{1}{6}\Lambda(Z\bar Z-UV+\Up G)\,]^2}\,,
 \end{equation}
where\footnote{This choice of sign of $G$ is in accordance with
\cite{PO:01}, which is our main point of reference, but different from more
recent papers, e.g.\ \cite{PSSS:15}.} ${G(Z,\bar Z)= ZH_{,Z}+\bar ZH_{,\bar
Z}-H}$ and $\Up\equiv\Up(U)=0$
for $U\leq 0$ and $\Up(U)=U$ for $U\geq0$. This \emph{`kink-function'} $\Up$ is Lipschitz continuous, hence
the spacetime (apart from possible poles of
$H$, which indeed occur in physically realistic models, see e.g.\
\cite{AS,HotTan93}, and Section \ref{sec2}, below)
is locally Lipschitz. Recall that a locally Lipschitz metric
possesses a locally bounded connection and hence a distributional curvature,
which, however, in general is unbounded. In fact the
discontinuity in the derivatives of the metric introduces impulsive components
in the Weyl and curvature tensors (see \cite{[B7]}), and the metric
(\ref{conti}) explicitly describes  impulsive waves in de~Sitter, anti-de~Sitter
or Minkowski backgrounds.

For ${\Lambda=0}$, \eqref{conti} reduces to the classic Rosen form of
impulsive {\it pp\,}-waves (cf.\ \cite[Sec.\ 17.5]{GP:09}).
In this special case the geodesic equation has been rigorously solved in
\cite{LSS:13} using Carath\'eodory's solution concept (see e.g.\
\cite[Ch.\ 1]{F:88}), which allows to deal with the locally bounded but
discontinuous right hand side of the equation. The geodesics thereby obtained
coincide with the limits of the  geodesics for the distributional form
\eqref{ipp} calculated in \cite{KS:99} which have been previously derived
heuristically
(e.g.\ in \cite{DT,Bal1997,[C2]}).

To deal, however, with the geodesic equation for the full class of nonexpanding
impulsive waves for arbitrary $\Lambda$, that is the complete metric \eqref{conti},
the more sophisticated solution concept
of Filippov (\cite[Ch.\ 2]{F:88}) has been applied recently in \cite{PSSS:15}.
Building on a general result for all locally Lipschitz spacetimes (\cite{S:14}),
existence, uniqueness and global ${\mathcal C}^1$-regularity of the geodesics has been
established. This, in particular, justifies the ${\mathcal C}^1$-matching procedure
which has been used before to explicitly derive the geodesics
in this and similar situations (\cite{[C2],bis-proc,[B7],PO:01,PS03,PS10}).
\medskip

On the other hand, the \emph{distributional form} of the impulsive metric arises
by writing the transformation relating (\ref{conf*}) and (\ref{conti}) in a
combined way for all $U$ using the Heaviside function $\Theta$ as
\begin{equation} \label{trans}
 {\mathcal U}=U\,,\quad
 {\mathcal V}=V+\Theta\,H+\Up\,H_{,Z}H_{,\bar Z}\,, \quad
 \eta=Z+\Up\,H_{,\bar Z}\,.
\end{equation}
Of course, \eqref{trans} is discontinuous in the coordinate ${\mathcal V}$ and
merely Lipschitz continuous in $\eta$ across ${\{{\mathcal U}=0\}}$  but
applying it \emph{formally} to \eqref{conti} we
arrive at
 \begin{equation}
\d s^2= \frac{2\,\d\eta\,\d\bar\eta-2\,\d {\mathcal U}\,\d {\mathcal V}
+2H(\eta,\bar\eta)\,\delta({\mathcal U})\,\d {\mathcal U}^2}
{[\,1+\frac{1}{6}\Lambda(\eta\bar\eta-{\mathcal U}{\mathcal V})\,]^2}\,,
 \label{confppimp}
 \end{equation}
which has the striking advantage of coinciding with the background metric
$\d s_0^2$ \eqref{conf*}  off the impulse located at ${\mathcal U}=0$.
This, however, comes at the price of introducing
a distributional coefficient in the metric which leads us
out of the Geroch--Traschen class (\cite{GT:87}) of metrics (of regularity
$W^{1,2}_{\mbox{\small loc}}\cap L^\infty_{\mbox{\small loc}}$), which
guarantees existence and stability of the curvature in distributions (see also
\cite{LFM:07,SV:09}). However, due to its simple
geometrical structure the metric (\ref{confppimp})  nevertheless allows to
calculate the curvature as a distribution, again leading to impulsive
components in the Weyl and curvature tensors (\cite{[B7]}). Also, in
Minkowski background the metric \eqref{confppimp} reduces to the Brinkmann form of
impulsive {\it pp\,}-waves (\ref{ipp}).
\medskip

Clearly, a mathematically sound treatment of the transformation \eqref{trans} is a
delicate matter. In the special case of \emph{pp}-waves
this has been achieved in \cite{KS:99a} using nonlinear distributional geometry
(\cite[Ch.\ 3]{GKOS:01}) based on algebras of generalised functions (\cite{C:85}).
More precisely, the `discontinuous coordinate change' was shown to be the
distributional limit of a `generalised diffeomorphism', a concept further studied in
\cite{EG:11,EG:13}. The key to these results was provided by a nonlinear distributional analysis
of the geodesics in the metric \eqref{ipp}, and, in particular, an existence, uniqueness
and completeness result for the geodesic equation in nonlinear generalised functions (\cite{Steina,KS:99}).

This suggests that a first step towards the long-term goal of understanding the
transformation \eqref{trans} for $\Lambda\not=0$ is to reach a mathematically
sound understanding of the geodesics in the distributional metric
\eqref{confppimp}. The geodesic equation for \eqref{confppimp}, however,
displays a very  singular behaviour including terms proportional to the square
of the Dirac-$\delta$ (cf.\ \cite{Sfet}). For this reason the authors of
\cite{PO:01} have employed a five-dimensional formalism (\cite{[B6],[B7]}) of
embedding (anti-)de~Sitter space into a $5$-dimensional \emph{pp}-wave spacetime
(see Section \ref{sec2} below). In this approach the geodesic equation takes a
form that is distributionally accessible at all, however, not mathematically
rigorously. In the absence of a valid solution concept for this nonlinear
distributional equations a natural ansatz was used to derive the geodesics and
to study them in detail in \cite[Sec.\ 4--5]{PO:01}. Nevertheless, a desirable nonlinear
distributional analysis of the geodesic equation in the ${\Lambda\not=0}$-cases,
which will eventually lead to a mathematical understanding of the transformation
\eqref{trans}, has been missing to date.

In this paper we provide such an analysis. Thereby, we follow \cite{PO:01} in
employing the five-dimensional formalism. We will, however, not use any
theory of nonlinear distributions leaving a detailed study of nonexpanding
impulsive waves in (anti-)de Sitter universe as (nonlinear) distributional geometries to a subsequent paper.
Instead, we will employ a regularisation approach and view \eqref{confppimp} as a spacetime with a short but finitely
extended impulse (i.e., a generic sandwich wave with support in a regularisation strip which we will also
call the `wave zone') and employ
an analysis in the spirit of \cite{SS12} where impulsive limits in a class of \emph{pp}-wave type
spacetimes with a more general wave surface but vanishing $\Lambda$
(\cite{CFS03,FS03,CFS04,FS06}) have been considered.
\medskip

We will detail this regularisation approach in the next section after
introducing the $5$-dimen\-sio\-nal formalism.In particular, we will replace the Dirac-$\delta$ in the metric
\eqref{classical} below by a very general regularisation $\delta_\eps$, thereby 
ensuring that our results are \emph{regularisation independent} (within the
class of so called a \emph{model delta nets}).
Then in Section \ref{sec3} we will
employ a fixed point argument to show that the regularised equations have unique smooth solutions
which cross the regularisation strip. This will lead to our main result on
completeness of nonexpanding impulsive gravitational
waves in a cosmological background.  The technical proofs allowing for the application of the fixed
point theorem are deferred to Appendix \ref{A}. In Section \ref{sec4} we study boundedness
properties of the regularised geodesics which are essential when dealing with their limits
in Section \ref{sec5}. There we show that the solutions of the regularised geodesic equation
converge, as the regularisation parameter goes to zero, to geodesics of the background (anti-)de Sitter
spacetime which have to be matched ap\-pro\-priately across the impulse and have been derived previously
in \cite{PO:01}. Overly technical calculations are collected in Appendix \ref{B}.

\section{The geodesic equation for $\Lambda\not=0$}\label{sec2}

In this section we detail our regularisation approach and derive the respective geodesic equations.
To begin with, however, we recall the $5$-dimensional formalism of
\cite{[B6],[B7]} for the full
class of nonexpanding impulsive waves with non-vanishing $\Lambda$. To this end
we start with the 5-dimensional impulsive \textit{pp}-wave spacetime
$\overline M$ with metric (extending \eqref{ipp})
\begin{equation}\label{classical}
\d s^{2}=\d Z_{2}^{2}+\d Z_{3}^{2}+ \sigma
\d Z_{4}^{2}-2\d U \d V +H(Z_{2},Z_{3},Z_{4})\delta(U)\d U^{2}
\end{equation}
and consider the four-dimensional (anti-)de Sitter hyperboloid $(M,g)$ given by the constraint
\begin{equation}\label{const}
Z_{2}^{2}+Z_{3}^{2}+ \sigma Z_{4}^{2}-2UV=\sigma a^{2},
\end{equation}
where $a:=\sqrt{3/(\sigma \Lambda)},\ \sigma:=\text{sign}(\Lambda)=\pm 1$, and
${U=\frac{1}{\sqrt2}(Z_0+Z_1)}$, ${V=\frac{1}{\sqrt2}(Z_0-Z_1)}$ are null-coordinates.\footnote{These coordinates are different from those used in the metric \eqref{conti}. Since in this paper we will not use the continuous form \eqref{conti} we simplify the notation by not distinguishing them by a bar (as we did in \cite{PSSS:15}).} Here
$(Z_0,\dots,Z_4)$ are global Cartesian coordinates of $\R^5$.
The impulse is located on the null hypersurface $\{U=0\}$ given by
 \begin{equation}
 {Z_2}^2+{Z_3}^2+\sigma{Z_4}^2=\sigma a^2\,,\label{surface}
 \end{equation}
which is a {\it nonexpanding} 2-sphere in the de~Sitter universe ($\Lambda>0$) and a
hyperboloidal 2-surface in the anti-de~Sitter universe ($\Lambda<0$), respectively.
Various 4-dimensional coordinate parametrizations of these spacetimes have been considered
e.g.\ in \cite{[B5]}. Physically the spacetime \eqref{classical}, \eqref{const} describes
impulsive gravitational waves as well as impulses of null matter.
Purely gravitational waves occur in case the vacuum field equations are satisfied.
It was demonstrated in  \cite{[B6],[B7]} that such solutions can be explicitly written as
\begin{equation}\label{eq:poles}
H(z,\phi)= \textstyle{ b_0\,Q_1(z)+\sum_{m=1}^\infty
b_m\,Q^m_1(z)\cos[m(\phi-\phi_m)]}\,,
\end{equation}
where  $z=Z_4/a$, $\tan\phi=Z_3/Z_2$ and $Q^m_1(z)$ are associated Legendre functions of the second kind generated
by the relation
${Q^m_1(z)=(-\sigma)^m|1-z^2|^{m/2}\frac{\d^m}{\d z^m}Q_1(z)}$.
The first term for ${m=0}$, i.e., ${Q_1(z)=\frac{z}{2}\log\left|\frac{1+z}{1-z}\right|-1}$,
represents the simplest axisymmetric Hotta--Tanaka solution (\cite{HotTan93}).
The components with ${m\ge1}$ describe nonexpanding impulsive gravitational
waves in (anti-)de~Sitter universe generated by null point sources with an
$m$-pole structure, localized on  the wave-front at the singularities
${z=\pm1}$. See \cite{[B7],[B6],Pod2002b} for more details.
\medskip

Now, the geodesics $\gamma$ of $M$ with tangent $T$ are characterized by
the condition that their $\overline M$-acceleration $A=\overline{\nabla}_{T}T$ is everywhere
normal to $M$. Denoting by $N$ the normal vector to
$M$ in $\overline M$ with $g(N,N)=\sigma$, we hence obtain
\begin{equation}
\overline{\nabla}_{T}T=-\sigma\, g(T,\overline{\nabla}_{T}N ) N.
\end{equation}
Using this identity and the constraint \eqref{const} the explicit form of the geodesic equation
was given in \cite[eq.\ (28)]{PO:01} as
\begin{align}\label{d'-geo-eq}
 \ddot U
  &= -\frac{1}{3}\,\Lambda\, U\, e\,,\nonumber\\
 \ddot V -
    \frac{1}{2}\, H\, \delta'\, \dot U^2
    -\delta^{pq}\,H_{,p}\, \delta\, \dot Z_q\, \dot U
  &=-\frac{1}{3}\,\Lambda\, V\, \big(e+\frac{1}{2}\,G\,\delta\, \dot U^2\big)
  \,,\nonumber\\
 \ddot Z_i -
     \frac{1}{2}\, H_{,i}\, \delta\, \dot U^2
  &=-\frac{1}{3}\,\Lambda\, Z_i\, \big(e+\frac{1}{2}\,G\,\delta\, \dot U^2\big)
  \,,\\
 \ddot Z_4 -
     \frac{\sigma}{2}\, H_{,4}\, \delta\, \dot U^2
  &=-\frac{1}{3}\,\Lambda\, Z_4\, \big(e+\frac{1}{2}\,G\,\delta\, \dot U^2\big)
  \,,\nonumber
\end{align}
where
\begin{equation}\label{eq:G}
 G:=\delta^{pq}\,  Z_p\, H_{,q}  - H,\quad \text{and}\quad e:=g(T,T)=\pm 1,0
\end{equation}
denotes the normalisation constant for spacelike (${e=1}$), timelike (${e=-1}$) and null (${e=0}$) geodesics
respectively. Observe, that  $\dot{}$ denotes the derivative with respect to an affine parameter $t$ which
we have suppressed in the equations. Here and in the following we also adopt the convention that Greek indices
$\alpha,\beta$ take all values $0,1,2,3,4$ while the indices $p,q, r$ are restricted to the
values $2,3,4$ and $i,j$ run from $2$ to $3$ only.

Obviously, these equations
reduce to the geodesic equations of the (anti-)de Sitter background off the impulse located at
$\{U=0\}$. Also observe that the first equation decouples from the rest of the
system and can be easily integrated. Consequently $U$ can be used as a parameter
of the remaining equations, a fact which is essential for the analysis
of the system \eqref{d'-geo-eq} presented in \cite[Sec.\ 4]{PO:01}. In fact,
there the geodesics of the (anti-)de Sitter background in front and behind the wave impulsive are
matched using a natural ansatz for solutions in the entire spacetime.
However, this procedure has to be viewed as being only heuristic since the
solution's $Z_p$-components are (assumed to be) continuous but not $\Con^1$, while
the $V$-component is (assumed to be) even discontinuous. Consequently the
solutions cannot be plugged back into the equations due to the occurrence of
undefined products of distributions, and so the question arises in which
sense they actually solve the equations, see the discussion at the end of Sec.\
4 of \cite{PO:01}. The situation is similar to the one encountered for impulsive \emph{pp}-waves
with $\Lambda=0$ and we refer to the discussion in \cite[Sec.\ II]{Steina} as well as to the general
discussion in \cite{H:85}.
\medskip

To resolve this open problem we now employ a regularisation approach and detail the setting we are working with:
To begin with we replace the  Dirac-$\delta$ in the line element \eqref{classical} by a fairly general
class of smooth approximations called \emph{model delta nets}. Chose an arbitrary smooth function
$\rho$ on $\R$ with unit integral and its support contained in $[-1,1]$. Then for $0<\eps\leq 1$ set
\begin{equation}\label{eq:delta}
\delta_\eps(x):=\frac{1}{\eps}\,\rho\Big(\frac{x}{\eps}\Big)\,.
\end{equation}
We now consider for fixed $\eps\in(0,1]$ the five-dimensional sandwich wave
\begin{equation}\label{5ipp}
\d s_{\eps}^{2}=\d Z_{2}^{2}+\d Z_{3}^{2}+ \sigma \d Z_{4}^{2}-2\d U \d V
 +H(Z_{2},Z_{3},Z_{4})\delta_{\eps}(U) \d U^{2}\,,
\end{equation}
and define the spacetime of our interest as $(M,g_\eps)$ given by the constraint \eqref{const}, i.e.,
\begin{equation}\label{const2}
F(U,V,Z_2,Z_3,Z_4):=-2UV+Z_{2}^{2}+Z_{3}^{2}+ \sigma Z_{4}^{2}-\sigma a^{2}=0\,.
\end{equation}
Observe that while the differential $\d F=2(-V,-U,Z_2,Z_3,\sigma Z_4)$
is independent of $\eps$, the normal vector $N^\alpha_\eps:=g^{\alpha\beta}_\eps \d F_\beta$
depends on $\eps$. Indeed we choose to work with the \emph{non-normalised} normal vector
$N_\eps$ to $M$ given by
\begin{equation}\label{eq:nv}
 N_\eps=(U,V+HU\delta_{\eps}(U),Z_p) \quad\text{with}\quad
 g_\eps(N_\eps,N_\eps)=\sigma a^2-U^2H\delta_\eps(U)\,.
\end{equation}
The non-zero Christoffel symbols of \eqref{5ipp} are given by
\begin{align}
\Gamma^V_{\eps\,UU}&=-\frac{1}{2}H\delta^{'}_{\eps}(U)\,,\quad
\Gamma^V_{\eps\,Up}=-\frac{1}{2}H_{,p}\delta_{\eps}(U)\,,\\
\Gamma^i_{\eps\,UU}&=-\frac{1}{2}H_{,i}\delta_{\eps}(U)\,,\quad
\Gamma^4_{\eps\,UU}=-\frac{1}{2}\sigma H_{,4}\delta_{\eps}(U)\,,
\end{align}
and the geodesics of $(M,g_\eps)$ are now characterised by
\begin{equation}
\overline{\nabla}^\eps_{T_{\eps}} T_{\eps}=-g_\eps(T_{\eps},\overline{\nabla}^\eps_{T_{\eps}} N_{\eps})\
\frac{N_{\eps}}{g_\eps( N_{\eps},N_{\eps})}\,,
\end{equation}
where (suppressing the parameter) we write
$\gamma_\eps=(U_\eps,V_\eps,Z_{p\eps})$ for the geodesics with tangent
$T_\eps=(\dot{U}_\eps,\dot{V_\eps},\dot Z_{p\eps})$ and $\overline{\nabla}^\eps$ denotes
the Levi-Civita connection of \eqref{5ipp}. By a straightforward calculation
we now obtain
\begin{equation}
 g_\eps(T_{\eps},\overline{\nabla}^\eps_{T_{\eps}} N_{\eps})
 =e+\frac{1}{2}\,\dot{U}_\eps^2\,\tilde{G_\eps}(U_\eps,Z_{p\eps})
 -\dot{U}_\eps\Big(H(Z_{p\eps})\,\delta_{\eps}(U_\eps)\,U_\eps\dot{\Big)},
\end{equation}
where we have used the abbreviations
\begin{equation}
 \tilde{G}_\eps(U,Z_r)
 :=\delta^{pq}\,H_{,p}(Z_r)\,\delta_{\eps}(U)\,Z_q
  + H(Z_r)\,\delta'_\eps(U)\,U\,,
 \quad \text{and}\quad
  e:=g_\eps(T_{\eps},T_{\eps})=\pm 1,0\,.
\end{equation}
Observe that since the $g_\eps$-norm of the tangent vector $T_\eps$ is constant
along the geodesic $\gamma_\eps$, we have chosen it also to be constant in $\eps$, which means
that we have fixed the normalisation independently of $\eps$.
Finally we obtain the following explicit form of the geodesic equations
\begin{align}\label{eq:geos}
\ddot{U}_\eps
&=-\Big( e + \frac{1}{2}\,\dot{U}_\eps^2\,\tilde{G_\eps}%(U_\eps,Z_p^\eps)
           - \dot{U}_\eps\,\big(H%(Z_p^\eps)
           \,\delta_{\eps}%(U_\eps)
           \,U_\eps\dot{\big)}\Big)\
           %\tilde F_\eps^{-1}(U_\eps,Z_p^\eps)\,
           \frac{U_\eps}{\sigma a^2-U_\eps^2H\delta_\eps}\,, \nonumber\\
\ddot{V}_\eps-\frac{1}{2}\,H%(Z^\eps_p)
           \,\delta^{'}_{\eps}%(U_\eps)
           \,\dot{U}_\eps^2 - \delta^{pq}H_{,p}%(Z^\eps_p)
           \,\delta_{\eps}%(U_\eps)
           \,\dot{Z}^\eps_q\,\dot{U}_\eps
&=-\Big(e + \frac{1}{2}\,\dot{U}_\eps^2\,\tilde{G}_\eps
          - \dot{U}_\eps\,
            \big( H\, \delta_\eps\, U_\eps \dot{\big)}\Big)\
   %\tilde F_\eps^{-1}(U_\eps,Z_p^\eps)\,
   \frac{V_\eps+H\,\delta_{\eps}U_\eps}
        {\sigma a^2-U_\eps^2H\delta_\eps}\,,\nonumber\\
\ddot{Z}_{i\eps}-\frac{1}{2}H_{,i}\,\delta_{\eps}\dot{U}_\eps^2
&=-\Big(e + \frac{1}{2}\,\dot{U}_\eps^2\,\tilde{G}_\eps
          - \dot{U}_\eps\,
            \big( H\, \delta_\eps\, U_\eps \dot{\big)}\Big)\
   %\tilde F_\eps^{-1}(U_\eps,Z_p^\eps)\,
   \frac{Z_{i\eps}}{\sigma a^2-U_\eps^2H\delta_\eps}\,,\\
\ddot{Z}_{4\eps}-\frac{\sigma}{2}\,H_{,4}\,\delta_{\eps}\dot{U}_\eps^2
&=-\Big(e + \frac{1}{2}\,\dot{U}_\eps^2\,\tilde{G}_\eps
          - \dot{U}_\eps\,
            \big( H\, \delta_\eps\, U_\eps \dot{\big)}\Big)\
   %\tilde F_\eps^{-1}(U_\eps,Z_p^\eps)\, Z^\eps_4\,,
    \frac{Z_{4\eps}}{\sigma a^2-U_\eps^2H\delta_\eps}\,, \nonumber
\end{align}
where we again have suppressed the parameter $t$ as well as the dependencies on the
variables. However, note that always
\begin{align}\label{eq:ucomp}
 \delta_\eps&=\delta_\eps(U_\eps(t))\,,\quad \delta'_\eps=\delta'_\eps(U_\eps(t))\,,\nonumber \\
 \tilde G_\eps&=\tilde G_\eps(U_\eps(t),Z_{p\eps}(t))\,,  \quad
  H=H(Z_{p\eps}(t))\,,\quad \text{and}\quad H_{,p}=H_{,p}(Z_{q\eps}(t))\,.
\end{align}

The right hand sides of these equations are considerably more complicated than their
`distributional counterparts' in \eqref{d'-geo-eq},
the reason being that in the regularised equations the distributional identities $\delta(U)U=0$
and $\delta'(U)U=-\delta(U)$ do not apply. Indeed the lack of the first one leads to the more complicated
form of the normal vector (see \eqref{eq:nv}) and is reflected in the second summand in the
denominator of \eqref{eq:geos}.
On the other hand the lack of the second one is responsible for the different form of the terms
involving $\tilde G_\eps$ as compared to the ones involving $G$ (cf.\ \eqref{eq:G}) in \eqref{d'-geo-eq},
to which they reduce in
the limit $\eps\to 0$. Finally the terms proportional $(H\delta_\eps U_\eps)\dot{}$ which occur in all four
equations vanish in the limit $\eps\to 0$ again due to the first identity. The
same holds true for the
term proportional to $\tilde G_\eps U_\eps$ contained in the first equation.
In this sense the equations \eqref{eq:geos} converge weakly to \eqref{d'-geo-eq} as $\eps\to 0$.

The more complicated form of the system \eqref{eq:geos}, in particular, results in the fact that
the $U_\eps$-equation does \emph{not decouple} from the rest of the system and consequently $U$ cannot be used
as a parameter along the geodesics. This issue greatly complicates our analysis. However,
the $V_\eps$-equation still is linear and decoupled, hence can be simply
integrated once the rest of the system is solved.

\section{Existence, uniqueness and completeness of geodesics}\label{sec3}
In this section we prove an existence and uniqueness result for solutions
of the initial value problem for \eqref{eq:geos} that additionally guarantees that the
geodesics that enter the sandwich region $U\in[-\eps,\eps]$ at one side exist
long enough to leave it on the other side. This allows us to obtain global
solutions of the geodesic equations, since outside the strip $\{-\eps\leq U\leq
\eps\}$ the spacetime coincides with the background (anti-)de Sitter universe.

Observe that for fixed $\eps$ the equations are smooth and hence a local
solution is guaranteed to exist. However, the time of existence might depend on
$\eps$ and in principle could even shrink to zero as $\eps\to 0$. So the main
objective here is to provide a result which guarantees that the time of
existence is independent of $\eps$, and large enough such that the solutions
pass through the regularisation sandwich (at least for all small $\eps$). To
this end we employ a fixed point argument in the spirit of \cite[Appendix
A]{SS12} based on Weissinger's fixed point theorem (\cite{Wei:52}). However, the
significant increase in the complexity of the equations forces the use of new
ideas to derive the required estimates. In particular, since it is not possible
to use the $U$-coordinate as a parameter along the geodesics, the `singular
terms' such as $\delta_\eps$ are composed with the $U$-coordinate of the
solution $U_\eps$, see \eqref{eq:ucomp}. We have separated the technical proofs
preparing the grounds for the application of the fixed point theorem from the
main line of arguments and have deferred them to Appendix~\ref{A}.
\medskip

Let us start by giving the general setup. Consider any geodesic
\begin{equation}\label{eq:gamma}
 \gamma=(U,V,Z_p)
\end{equation}
on the background (anti-)de Sitter universe \emph{without} impulsive wave but
reaching $U=0$. All other geodesics are not of interest to the present analysis
and will be dealt with separately. We choose an affine parameter $t$ in such a
way that $U(t=0)=0$ and assume $\dot \gamma$ to be normalised by $e=\pm 1,0$.
Such geodesics can explicitly be written as
\begin{equation}\label{eq:backgrdgeo}
 U=t\,,\qquad U=a\dot U^0\sinh(t/a)\,,\qquad U=a\dot U^0\sin(t/a)\,,
\end{equation}
in the cases $\sigma e=0$, $\sigma e<0$, and  $\sigma e>0$, respectively, see
\cite[eq.\ (33)]{PO:01}. Recall that the case $\sigma e=0$ corresponds to null
geodesics in both de Sitter and anti-de Sitter space, while the case $\sigma
e<0$ corresponds to timelike geodesics in de Sitter as well as  to spacelike
geodesics in anti-de Sitter spacetime, and finally $\sigma e>0$ describes
spacelike geodesics  in de Sitter and timelike geodesics in anti-de Sitter
space. Without loss of generality we assume the constant $\dot U^0$ to be
positive\footnote{The time reversed case with $\dot{U}^0<0$ can be
treated in complete analogy.} so that in all three cases $U$ is increasing (at
least for $t\in[-a\pi/2,a\pi/2]$). It is thus most convenient to prescribe
initial data at $t=0$, that is we set
\begin{equation}\label{eq:data0}
 \gamma(t=0)=:(0,V^0,Z^0_p)\,,\qquad \dot \gamma(t=0)=:(\dot U^0>0,\dot V^0,\dot Z^0_p)\,,
\end{equation}
where the constants satisfy the constraints
\begin{equation}\label{const1}
(Z^0_{2})^{2}+(Z^0_{3})^{2}+ \sigma (Z^0_{4})^{2}-2U^0V^0=\sigma a^{2},\quad
Z^0_{2} \dot Z^0_{2}+Z^0_{3} \dot Z^0_{3}+ \sigma Z_{4}^0\dot  Z_{4}^0-V^0\dot U^0-U^0\dot V^0=0\,,
\end{equation}
which are simply consequences of the fact that we are dealing with $\gamma$ on
the (anti-)de Sitter manifold, see \eqref{const}. Note however that $U^0=0$, so the last term on
the left hand side of either condition actually vanishes. In addition the normalisation
condition
\begin{equation}\label{eq:norm}
 -2 \dot U^0\dot V^0+(\dot Z^0_2)^2+(\dot Z^0_3)^2+\sigma(\dot Z^0_4)^2=e
\end{equation}
holds.

Now we start to think of $\gamma$ as a geodesic in the impulsive wave spacetime
\eqref{classical}, \eqref{const} \emph{`in front' of the impulse} that is for
$U<0$. Also, $\gamma$ is a geodesic in the regularised spacetime \eqref{5ipp},
\eqref{const2} \emph{`in front' of the sandwich wave}, that is for $U\leq
-\eps$. We will call it \emph{`seed geodesic'} and denote the affine parameter time
when $\gamma$ enters this regularisation wave region by $\alpha_\eps$,
\begin{equation}
 U(t=\alpha_\eps)=-\eps\,.
\end{equation}
Observe that, by continuity of $\gamma$, $\alpha_\eps\to 0$ from below
as $\eps\to 0$. More precisely, we have
$\alpha_\eps=-\eps$, $\alpha_\eps=-a\mbox{Arcsinh}(\eps/a\dot U^0)$, and
$\alpha_\eps=-a\arcsin(\eps/a\dot U^0)$, respectively for the three cases in
\eqref{eq:backgrdgeo}, leading to $\alpha_\eps=-\eps/\dot U^0+O(\eps^3)$ in the
latter cases and hence overall
\begin{equation}\label{eq:aeps}
 -C\eps\leq \alpha_\eps<0\,,
\end{equation}
for some positive constant $C$.

\begin{figure}[h]
\begin{center}
\def\svgwidth{0.73\textwidth}
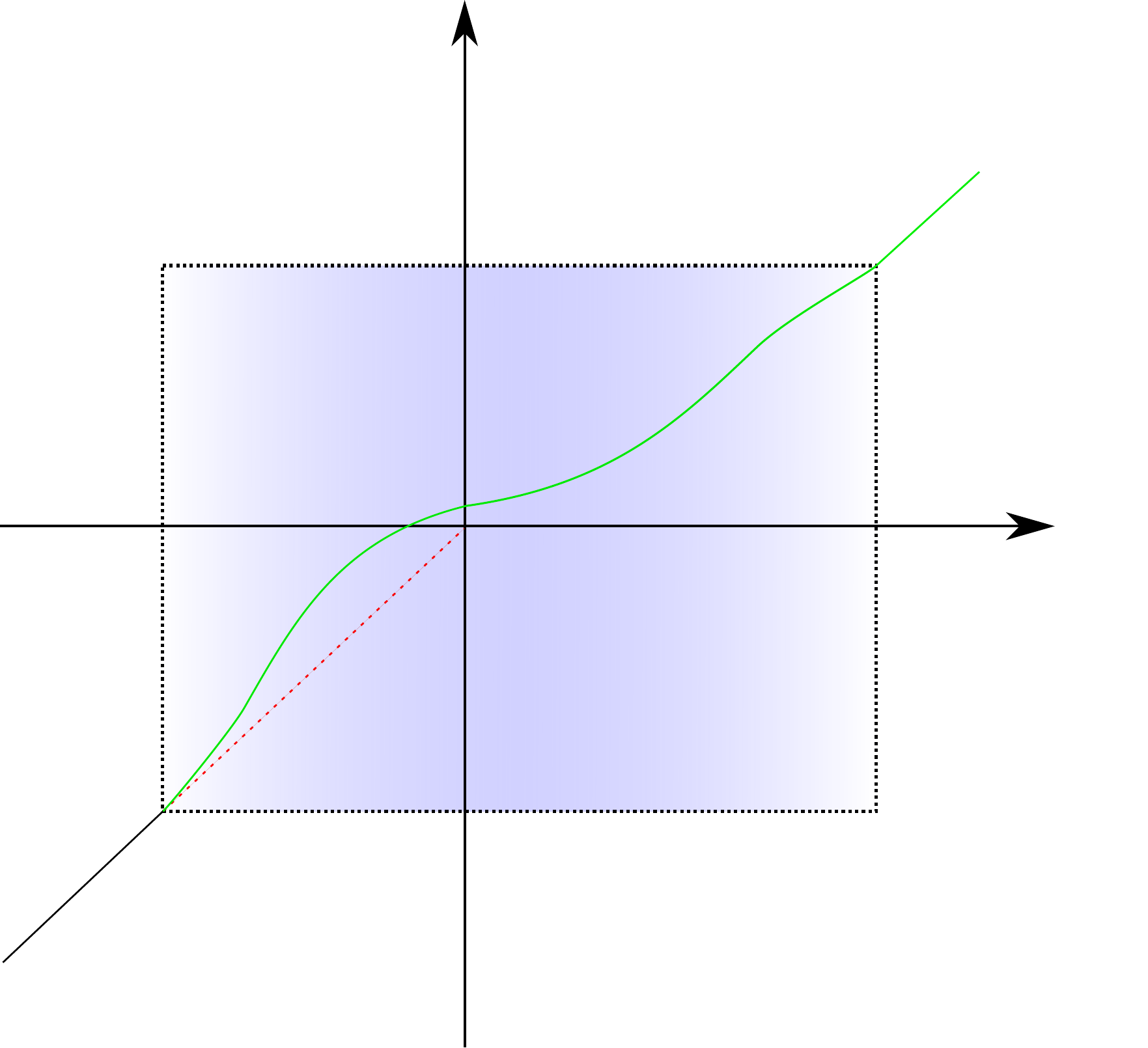
\caption{The $U$-component of the `seed geodesic' $\gamma$
is depicted in black until it reaches the regularisation sandwich at parameter
time $t=\alpha_\eps$, i.e., $U(\alpha_\eps)=-\eps$. While in the background
spacetime it would continue as the dotted red line to $U=0$ at $t=0$, in the regularised spacetime
it continues as $\gamma_\veps$ of \eqref{eq:gamma_eps} (depicted in
green) solving the equations
\eqref{eq:geos} with data \eqref{eq:dataeps}. Theorem
\ref{thm:global} guarantees that $\gamma_\eps$ (for $\eps$ small) leaves the
regularisation sandwich at $t=\beta_\eps$ and continues as background
geodesic $\gamma^+_\eps$ of \eqref{eq:gamma+} with data \eqref{eq:data+}.}
\end{center}\label{fig1}
\end{figure}

To investigate the geodesics in the regularised spacetime \eqref{5ipp}, \eqref{const2}, which is the main
topic of this paper, we follow $\gamma$ up to the beginning of wave zone, i.e., up to $t= \alpha_\eps$, and then
extend it (smoothly) to a geodesic
\begin{equation}\label{eq:gamma_eps}
 \gamma_\eps=(U_\eps,V_\eps,Z_{p\eps})
\end{equation}
solving the regularised geodesic equations \eqref{eq:geos}. This means
$\gamma_\eps$ at $\alpha_\eps$ assumes the data (see Figure~1)
\begin{equation}\label{eq:dataeps}
 \gamma_\eps(\alpha_\eps):=\gamma(\alpha_\eps)=:(-\eps,V^0_\eps,Z^0_{p\eps})\,, \quad
 \dot \gamma_\eps(\alpha_\eps):=\dot\gamma(\alpha_\eps)=:(\dot U_\eps^0,\dot V^0_\eps,\dot Z^0_{p\eps})\,.
\end{equation}
Observe that by smoothness of $\gamma$ the data $\gamma_\eps(\alpha_\eps)$ and  $\dot \gamma_\eps(\alpha_\eps)$
converge to $\gamma(0)$ and $\dot \gamma(0)$, respectively, as $\eps\to 0$. In
fact, by a mean value argument and \eqref{eq:aeps} we even have
\begin{align}\label{eq:dataconv}\nonumber
 |(-\eps,V^0_\eps,Z^0_{p\eps})-(0,V^0,Z^0_p)|
  &\leq \sup_{\alpha_\eps\leq t\leq 0}\|\dot\gamma(t)\|_h\, |\alpha_\eps|
  \ \leq \ C\eps\,,\\
 |(\dot U_\eps^0,\dot V^0_\eps,\dot Z^0_{p\eps})- (\dot U^0,\dot V^0,\dot Z^0_p)|
  &\leq \sup_{\alpha_\eps\leq t\leq 0}\|\ddot\gamma(t)\|_h\, |\alpha_\eps|
 \ \leq\ C\eps\,,
\end{align}
where $h$ is any Riemannian background metric and $C$ again is a generic constant.
\medskip

Based on Theorem \ref{thm:exmat} proved in Appendix \ref{A} we may now state and
prove our main results on the existence, uniqueness and completeness of the
geodesics in the \emph{regularised} spacetime \eqref{5ipp}, \eqref{const2}:

% To formulate the existence and uniqueness theorem for the geodesic equations \eqref{eq:geos}
% we have to set up appropriate initial condition `long before' the impulse.
% Chose any value $t^0$ of the affine parameter $t$ such that $U(t^0)<-\eps$ and set
% \begin{equation}\label{eq:data1}
%  (U_\eps,V_\eps,Z^\eps_p)(t^0)=(U^0,V^0,Z^0_p),\quad
%  (\dot U_\eps,\dot V_\eps,\dot Z^\eps_p)(t^0)=(\dot U^0,\dot V^0,\dot Z^0_p)\,,
% \end{equation}
% where the constants $(U^0,V^0,Z^0_p)$ and $(\dot U^0,\dot V^0,\dot Z^0_p)$ satisfy the constraints
% \begin{equation}\label{const1}
% (Z^0_{2})^{2}+(Z^0_{3})^{2}+ \sigma (Z^0_{4})^{2}-2U^0V^0=\sigma a^{2},\quad
% Z^0_{2} \dot Z^0_{2}+Z^0_{3} \dot Z^0_{3}+ \sigma Z_{4}\dot  Z_{4}-V^0\dot U^0-U^0\dot V^0=0\,,
% \end{equation}
%
% which secures that we are on the (anti-)de Sitter manifold, see \eqref{const}

\begin{theorem}[Existence and uniqueness]\label{thm:ex}
Consider the geodesic equations \eqref{eq:geos} with initial data \eqref{eq:dataeps}.
Then for all $\eps$ small enough (more precisely for all $\eps\leq\eps_0$,
where $\eps_0$ is constrained by \eqref{cond:eps0}), there exists a unique smooth solution
$\gamma_\eps=(U_\eps,V_\eps,Z_{p \eps})$ on
$[\alpha_\eps,\alpha_\eps+\eta]$, where $\eta$ is independent
of $\eps$ (and explicitly given by \eqref{eq:b}).
%Moreover, $U_\eps$ and $Z^\eps_p$ as well as their
%first order derivatives are uniformly bounded in $\eps$.
\end{theorem}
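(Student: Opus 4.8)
The plan is to recast the initial value problem for \eqref{eq:geos} with data \eqref{eq:dataeps} as a fixed point equation on a suitable complete metric space of continuous curves and to apply Weissinger's fixed point theorem, exactly in the spirit of \cite[Appendix A]{SS12}. First I would integrate the system twice against the affine parameter, starting at $t=\alpha_\eps$, so that a solution $\gamma_\eps=(U_\eps,V_\eps,Z_{p\eps})$ on $[\alpha_\eps,\alpha_\eps+\eta]$ is a fixed point of an integral operator $\mathcal{F}_\eps$ acting on, say, the $\R^4$-valued function $(U_\eps,Z_{2\eps},Z_{3\eps},Z_{4\eps})$ (the $V_\eps$-component being recovered afterwards by a single integration, since its equation is linear and decoupled). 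The natural domain is a closed ball in $C([\alpha_\eps,\alpha_\eps+\eta],\R^4)$ centered at the ``frozen'' seed data, of some fixed radius $R$, with $\eta$ and $R$ to be chosen independently of $\eps$. The key point, encoded in Theorem \ref{thm:exmat} of Appendix \ref{A}, is that all the $\eps$-dependent ``singular'' terms — $\delta_\eps(U_\eps)$, $\delta'_\eps(U_\eps)$, and the combinations entering $\tilde G_\eps$ and the denominator $\sigma a^2 - U_\eps^2 H\delta_\eps(U_\eps)$ — can be controlled uniformly in $\eps$ once one knows that $U_\eps$ stays comparable to the background behaviour, i.e. $U_\eps(t)\approx \dot U^0(t-\alpha_\eps)-\eps$, so that the substitution argument $U_\eps(t)/\eps$ ranges over a bounded set and $\int|\delta_\eps(U_\eps(t))\dot U_\eps(t)|\,dt$ is bounded by a change of variables.

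The key steps, in order, would be: (i) record the smoothness of the right-hand sides of \eqref{eq:geos} for fixed $\eps$, which gives local existence and uniqueness and reduces the problem to a uniform-in-$\eps$ lower bound on the existence time; (ii) set up $\mathcal{F}_\eps$ and the closed ball $B$, and verify the self-mapping property $\mathcal{F}_\eps(B)\subseteq B$ — this requires bounding, uniformly in $\eps$ and over $B$, the integrals of each term on the right of \eqref{eq:geos} over an interval of length $\eta$, using the estimates from Appendix \ref{A} (in particular that $\dot U_\eps$ stays near $\dot U^0>0$, so that $U_\eps$ is strictly increasing and the bad denominator stays bounded away from zero); (iii) verify a Weissinger-type contraction estimate, i.e. that the $n$-fold iterate $\mathcal{F}_\eps^n$ is Lipschitz with constant $a_n$ where $\sum_n a_n<\infty$, again uniformly in $\eps$ — here the $\delta'_\eps$-terms, which are formally of size $\eps^{-2}$, must be tamed by integrating them against the increasing function $U_\eps$ and exploiting $\int |\delta'_\eps(U_\eps)\dot U_\eps|\,|U_\eps|\,dt = O(1)$, the crucial cancellation being the extra factor of $U_\eps\sim\eps$ on the wave zone; (iv) conclude existence and uniqueness of the fixed point on $[\alpha_\eps,\alpha_\eps+\eta]$ with $\eta$ as in \eqref{eq:b}, and finally bootstrap smoothness of the solution from the integral equation and the smoothness of the data and coefficients.

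The main obstacle is step (iii) — obtaining the contraction estimates uniformly in $\eps$ in the presence of the $\delta'_\eps$-terms and the $U_\eps$-dependent denominator. Unlike the situation in \cite{SS12}, here $U$ cannot be used as a parameter, so one cannot simply pull back the singular terms to a fixed profile; instead every singular factor is composed with the unknown $U_\eps$, and Lipschitz estimates in the unknown generate differences of the form $\delta_\eps(U_\eps)-\delta_\eps(\tilde U_\eps)$, which a priori carry a factor $\eps^{-2}\|U_\eps-\tilde U_\eps\|_\infty$. The resolution, supplied by Theorem \ref{thm:exmat}, is to always estimate such differences after one integration in $t$ and to use the uniform lower bound $\dot U_\eps\gtrsim \dot U^0$ to convert $dt$ into $dU_\eps$, absorbing the dangerous powers of $\eps$ into the normalisation $\int\rho = 1$ and $\int |\rho'| <\infty$; the factor $U_\eps$ multiplying $\delta'_\eps(U_\eps)$ in $\tilde G_\eps$ and in $(H\delta_\eps U_\eps)\dot{}$ then provides precisely the extra $O(\eps)$ needed. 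Once these uniform estimates are in place, Weissinger's theorem applies on a common interval and the theorem follows.
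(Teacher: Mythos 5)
Your overall route is the paper's: Theorem \ref{thm:ex} is proved exactly by the reduction you describe --- solve the $(U_\eps,Z_{p\eps})$-subsystem by a Weissinger fixed point argument on an $\eps$-uniform interval $[\alpha_\eps,\alpha_\eps+\eta]$ (this is Theorem \ref{thm:exmat}, established via Propositions \ref{self} and \ref{prop:weiss}), and then integrate the linear, decoupled $V_\eps$-equation. One small structural correction: the ``solution space'' cannot be a mere $\mathcal{C}^0$-ball. Since every singular factor is composed with the unknown $U_\eps$, both the self-mapping and the iterate estimates require quantitative control of the derivatives, in particular the condition $\dot u_\eps\in[\tfrac12\dot u^0,\tfrac32\dot u^0]$ and $\|\dot z_\eps-\dot z^0\|_\infty\le C_2$ built into \eqref{space}, and the iterates are measured in the $\Con^1$-norm.

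The more substantive issue is your step (iii). You claim the Weissinger estimates can be made \emph{uniform in $\eps$}, the factor $U_\eps$ multiplying $\delta'_\eps$ in $\tilde G_\eps$ and in $(H\delta_\eps U_\eps)\dot{}$ supplying the needed $O(\eps)$. This works for the $U$-equation, but it fails for the $Z_p$-equations: there the term $\tfrac12 H_{,p}\delta_\eps\dot U_\eps^2$ and the terms in which $\tilde G_\eps$ is multiplied by $Z_{p\eps}$ (not by $U_\eps$) carry no compensating factor of $U_\eps$, so Lipschitz-differencing $\delta_\eps(U_\eps)-\delta_\eps(U_\eps^*)$ leaves, even after one $t$-integration and the change of variables, an irreducible factor $1/\eps$ --- this is precisely the point made after the proof of Proposition \ref{prop:weiss}, and accordingly the contraction estimate \eqref{eq:propA5} reads $\|(A_\eps)^n(x_\eps)-(A_\eps)^n(x_\eps^*)\|_{\Con^1}\le \tfrac{1}{\eps}\alpha_n\|x_\eps-x_\eps^*\|_{\Con^1}$. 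The gap is harmless only because $\eps$-uniformity of the contraction constant is not actually needed: Weissinger's theorem is applied for each fixed $\eps\le\eps_0$, and what must be independent of $\eps$ are the interval length $\eta$ of \eqref{eq:b}, the self-mapping constants $C_1,C_2$ of \eqref{space}, \eqref{cond:C2}, and the threshold \eqref{cond:eps0}; the $1/\eps$ merely slows the convergence of the iteration $(A_\eps)^n(x_\eps)$ to the fixed point. As written, your argument would stall at step (iii); with this reinterpretation it becomes the paper's proof.
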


\begin{proof}
As noted in the appendix it suffices to first solve the (simplified) model
system \eqref{eq:math} for $(u_\eps,z_\eps)$.
% since the sign-difference of the
% $Z_i$-equations and the $Z_4$-equation is irrelevant in the estimates leading to
% the fixed point argument
Identifying $(U_\eps,Z_{p\eps})$ with
$(u_\eps,z_\eps)$ the initial data \eqref{eq:data0} and \eqref{eq:dataeps}
transfer to the data \eqref{eq:zdata}, \eqref{eq:zdata_eps}. Then \eqref{eq:dataconv}
implies \eqref{eq:udata} and Theorem \ref{thm:exmat} applies to provide a unique smooth
solution $(U_\eps, Z_{p\eps})$ of \eqref{eq:geos}, \eqref{eq:dataeps} on $[\alpha_\eps,\alpha_\eps+\eta]$,
with $\eta$ given by \eqref{eq:b}.

Finally we solve the linear equation for $V_\eps$ to obtain the claimed smooth solution
$\gamma_\eps=(U_\eps,V_\eps,Z_{p\eps})$ on $[\alpha_\eps,\alpha_\eps+\eta]$.
\end{proof}

Next we make sure that the solutions just obtained, which by construction enter
the wave zone at $U_\eps=-\eps$ at parameter time $t=\alpha_\eps$ with
positive speed $\dot U_\eps^0$, in fact \emph{leave the sandwich region}, that
is they reach $U_\eps=\eps$ within their time of existence $\eta$. Consequently,
they naturally extend to the background \emph{(anti-)de Sitter universe `behind'} the sandwich
region. Observe that here it is vital that $\eta$ in \eqref{eq:b} is independent of
$\eps$.

\begin{theorem}[Extension of geodesics]\label{thm:global}
The unique smooth geodesics $\gamma_\eps$ of Theorem \ref{thm:ex} extend
to geodesics of the background (anti-)de Sitter spacetime `behind' the
sandwich wave zone.%, provided $\eps\leq\eta\,\dot U^0\!/4$.
\end{theorem}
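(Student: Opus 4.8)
The plan is to show that the solution $\gamma_\eps$ produced by Theorem \ref{thm:ex} actually reaches $U_\eps = \eps$ strictly before its guaranteed lifetime $\eta$ runs out, and then to glue on a background geodesic. The key quantitative input must come from the proof of Theorem \ref{thm:exmat} in Appendix \ref{A}: along with existence on $[\alpha_\eps,\alpha_\eps+\eta]$, that argument should also deliver uniform bounds on $\gamma_\eps$ and $\dot\gamma_\eps$ on this interval, in particular an \emph{upper} bound on $|\dot U_\eps - \dot U^0|$ and hence a \emph{lower} bound $\dot U_\eps(t) \geq c > 0$ for a constant $c$ independent of $\eps$ (valid once $\eps \leq \eps_0$). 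Granting this, first I would estimate
\begin{equation}\label{eq:Uincrease}
 U_\eps(t) = -\eps + \int_{\alpha_\eps}^{t} \dot U_\eps(s)\,\d s \geq -\eps + c\,(t-\alpha_\eps)\,,
\end{equation}
so that $U_\eps(t) = \eps$ is forced at some parameter time $\beta_\eps$ with $\beta_\eps - \alpha_\eps \leq 2\eps/c$. Since $2\eps/c \to 0$ as $\eps \to 0$ while $\eta > 0$ is fixed, we have $\beta_\eps \in (\alpha_\eps, \alpha_\eps+\eta)$ for all $\eps$ small enough (shrinking $\eps_0$ if necessary), i.e.\ $\gamma_\eps$ leaves the sandwich strip $\{-\eps \leq U \leq \eps\}$ within its interval of existence.

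Next I would observe that on $(\beta_\eps, \alpha_\eps+\eta]$ we have $U_\eps > \eps$, so $\delta_\eps(U_\eps) = \delta'_\eps(U_\eps) = 0$ and all the impulsive terms in \eqref{eq:geos} vanish; the system reduces exactly to the geodesic equations of the background (anti-)de Sitter spacetime \eqref{const}. Since $g_\eps$ coincides with the background metric for $U \geq \eps$, $\gamma_\eps$ restricted to this region is already a background geodesic. To extend it to all parameter values, I would take the data $(\gamma_\eps(\beta_\eps), \dot\gamma_\eps(\beta_\eps))$ — which by the uniform bounds from Appendix \ref{A} satisfy the hyperboloid constraint \eqref{const} and the normalisation \eqref{eq:norm}, these being preserved along any geodesic — and invoke geodesic completeness of (anti-)de Sitter space to define a maximal background geodesic $\gamma_\eps^+$ through this point. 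By uniqueness of solutions of the (smooth) background geodesic ODE, $\gamma_\eps^+$ agrees with $\gamma_\eps$ on their common domain, so $\gamma_\eps^+$ is the desired global extension. (For $t \leq \alpha_\eps$ the extension is just the seed geodesic $\gamma$ itself, by construction of the data \eqref{eq:dataeps}.)

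The main obstacle is the very first step: extracting from Appendix \ref{A} a lower bound $\dot U_\eps \geq c > 0$ that is genuinely uniform in $\eps$. The difficulty, as emphasised in Section \ref{sec2}, is that the $U_\eps$-equation does \emph{not} decouple — its right-hand side contains $\tilde G_\eps$, which involves $\delta_\eps(U_\eps)$ and $\delta'_\eps(U_\eps)$ composed with the unknown $U_\eps$, and these are large (of size $1/\eps$ and $1/\eps^2$) on the strip. One must therefore track how much $\dot U_\eps$ can drop while crossing the strip and show the net change is $o(1)$; the factor $U_\eps$ multiplying the bracket in the first equation of \eqref{eq:geos}, which is $O(\eps)$ on the strip, together with the $U_\eps$ inside $\tilde G_\eps U_\eps$ and $(H\delta_\eps U_\eps)\dot{}$, is what makes these contributions integrable and small — but making this rigorous and $\eps$-independent is precisely the content of the estimates behind Theorem \ref{thm:exmat}, which I would cite rather than reprove. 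Once $\dot U_\eps$ stays bounded away from zero, the monotonicity argument \eqref{eq:Uincrease} and the completeness of the background do the rest routinely.
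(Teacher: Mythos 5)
Your proposal is correct and follows essentially the same route as the paper: a uniform positive lower bound on $\dot U_\eps$, integrated over the strip, forces $U_\eps$ to reach $\eps$ within the $\eps$-independent existence interval, after which the equations coincide with the background geodesic equations and the solution continues as a background geodesic. The ``main obstacle'' you flag is in fact already resolved by construction, since membership in the solution space $\mathfrak{X}_\eps$ of \eqref{space} requires $\dot U_\eps\in[\tfrac12\dot U^0,\tfrac32\dot U^0]$ (preserved by the solution operator via Proposition \ref{self}), and the smallness condition $\eps\leq\eta\,\dot U^0/6$ is already part of \eqref{cond:eps00}, so the paper simply evaluates $U_\eps(\alpha_\eps+\eta)\geq-\eps+\tfrac{\eta}{2}\dot U^0\geq\eps$ without any further shrinking of $\eps_0$.
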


\begin{proof} Let $\gamma_\eps=(U_\eps,V_\eps,Z_{p\eps})$ be the unique solution
of \eqref{eq:geos}, \eqref{eq:dataeps} given
by Theorem \ref{thm:ex}. By the definition of the `solution space' $\mathfrak{X}_\eps$ (see
\eqref{space}) we obtain
\begin{equation}
U_\eps(\alpha_\eps+\eta) = -\eps + \int_{\alpha_\eps}^{\alpha_\eps+\eta} \dot{U}_\eps(s)\,\dd s
\geq -\eps + \frac{\eta}{2} \dot{U}^0\geq -\eps + 3\eps \geq  \eps\,,
\end{equation}
since $\eps\leq\eta\,\dot U^0\!/6$ by \eqref{cond:eps00}.

So for such $\eps$ the geodesic $\gamma_\eps$ leaves the wave zone and extends
to a geodesic of the background spacetime since the geodesic equations
\eqref{eq:geos} coincide with the geodesic equations of the background (anti-)de
Sitter spacetime for $U_\eps\geq\eps$.
\end{proof}

Recall that by construction the global geodesic
$\gamma_\eps=(U_\eps,V_\eps,Z_{p\eps})$ with data \eqref{eq:dataeps} of Theorem
\ref{thm:global} for $t\leq\alpha_\eps$, i.e.\ `in front' of the sandwich,
coincide for all (small) $\eps$ with the single `seed geodesic' $\gamma$ with data \eqref{eq:data0}.
However, `behind' the sandwich the geodesics $\gamma_\eps$ for each $\eps$ will coincide with a
\emph{different} geodesic of the \emph{background} spacetime.
To make this observation more precise, define the affine parameter time when the geodesic
$\gamma_\eps$ leaves the sandwich wave zone by $\beta_\eps$,
\begin{equation}
 U_\eps(t=\beta_\eps)=\eps\,.
\end{equation}
and denote the corresponding values of $\gamma_\eps$ at $\beta_\eps$ by
\begin{equation}\label{eq:data+}
 \gamma_\eps(\beta_\eps)=:(\eps,V^{0+}_\eps,Z^{0+}_{p\eps})\,,
 \quad
 \dot \gamma_\eps(\beta_\eps)=:(\dot U_\eps^{0+},\dot V^{0+}_\eps,\dot
Z^{0+}_{p\eps})\,.
\end{equation}
Then for $t\geq\beta_\eps$ the geodesic $\gamma_\eps$  will coincide with the
geodesic
\begin{equation}\label{eq:gamma+}
 \gamma_\eps^+=(U^+_\eps,V^+_\eps,Z^+_{p\eps})
\end{equation}
of the \emph{background} (anti-)de Sitter space with the data \eqref{eq:data+}, see Figure~1 and
also Figure~2. Observe that the data \eqref{eq:data+} is normalised
and constrained, more precisely we have:

\begin{remark}[Preservation of constraints and normalisation]\label{rem:c+n}
The fact that the data \eqref{eq:data0} of the `seed geodesic' $\gamma$ in
\eqref{eq:gamma} is constrained and normalised, i.e., it satisfies
\eqref{const1} and \eqref{eq:norm}, implies that also the data
\eqref{eq:dataeps} of $\gamma_\eps$ is constrained and normalised. Clearly these
conditions are propagated by $\gamma_\eps$ being a solution to \eqref{eq:geos}.
Moreover, at $t=\beta_\eps$ the regularised metric $g_\eps$ and the background metric $g$
coincide and so the data \eqref{eq:data+} of $\gamma_\eps^+$ is constrained and
normalised with respect to the background spacetime.

While the preservation of the constraints confirms the consistency of our construction,
the preservation of the normalisation, in particular, implies that the causal character
of $\gamma_\eps$ (and $\gamma_\eps^+$) is the same as the one of the `seed' $\gamma$.
\end{remark}

Note that the geodesic $\gamma^+_\eps$ `behind' the regularisation sandwich
depends on $\eps$ (only) via this initial data. Interestingly, as we will detail
in Section \ref{sec5}, for $t>0$ the family of geodesics $\gamma_\eps$ of the
regularised spacetime converges for $\eps\to 0$ to a unique geodesic $\gamma^+$
in the background with data given by the limits of \eqref{eq:data+}. This will
explicitly describe the effect of the \emph{impulsive} gravitational wave on the
geodesics in (anti-)de Sitter universe.

\begin{figure}[h]
\begin{center}
\def\svgwidth{0.73\textwidth} %Guckstu: Macht Bild auf 60% von die Textbreite.
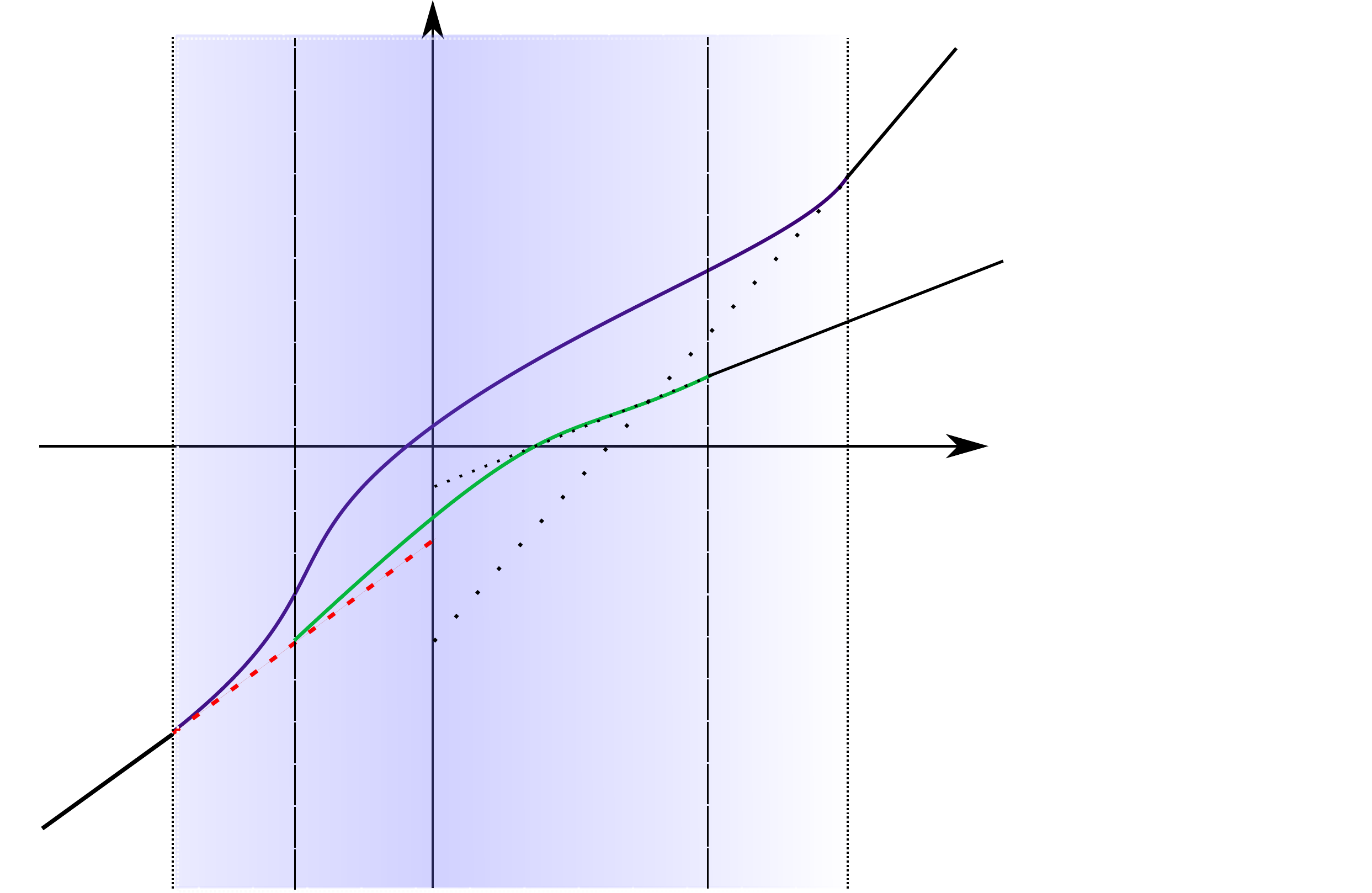
\caption{The $Z$-components of two solutions $\gamma_{\eps_1}$ (purple) and
$\gamma_{\eps_2}$ (green) of the regularised equations \eqref{eq:geos}
with the same `seed geodesic' $\gamma$ of \eqref{eq:gamma} are depicted
for $\eps_1>\eps_2$. The regularisation sandwich is given by
$[\alpha_{\eps_1},\beta_{\eps_1}]$ and $[\alpha_{\eps_2},\beta_{\eps_2}]$,
respectively. The dotted red line represents the $Z$-components of $\gamma$, while
the black dotted lines are said components of $\gamma_{\veps_i}^+$.}
\end{center}\label{fig2}
\end{figure}

\medskip
In the remainder of this section we will formulate completeness
results for the regularised spacetimes. First we remark that actually our
results allow us to make the smallness assumption on $\eps$ precise: $\eps$ has
to be smaller than $\eps_0$ constrained by \eqref{cond:eps0}.
% and, in addition, has to satisfy the condition  $\eps\leq\eta\,\dot U^0\!/4$,
%cf.\ Theorem \ref{thm:global}.
This, however, means that the specific $\eps$ from which on a
certain geodesic $\gamma_\eps$ becomes complete depends on its data
\eqref{eq:data0}, i.e., on the 'seed geodesic', and there is in general \emph{no global $\eps$} from which on
\emph{all} geodesics hence the spacetime is complete.
A `global' completeness result in the spirit of \cite{SS14}, however, can be
obtained using the geometric theory of generalised functions (\cite[Ch.\
3]{GKOS:01}) and we reserve its detailed presentation for future work.

To formulate completeness results in our current setting, the dependence of
$\eps$ on the data discussed above also makes it necessary to be careful about
global effects. Indeed, geodesics in the background spacetimes with $\sigma
e>0$, that is spacelike geodesics in de Sitter space and timelike geodesics in
anti-de Sitter space are periodic. Consequently geodesics $\gamma_\eps$ in the
regularised spacetimes constructed from such `seed geodesics' $\gamma$ will
share their causal character (see Remark \ref{rem:c+n}) and hence cross the wave zone
infinitely often and we (have to) (re)apply Theorems \ref{thm:ex}, \ref{thm:global}
again and again. However, note that the geodesics may enter
the regularisation sandwich region each time with different data. So the $\eps$ from
which on Theorem \ref{thm:global} applies may in principal become smaller and
smaller on successive crossings with no positive infimum. In
such a case, given an initial geodesic $\gamma$ as in \eqref{eq:gamma} and given
any finite number $N$ of crossings we can specify an $\eps$ from which on the
geodesic $\gamma_\eps$ extends to cross the wave zone $N$-times. However, we
cannot give a (global) $\eps$ for which the geodesic $\gamma_\eps$ extends to
all (positive) values of its affine parameter. Consequently we prefer to avoid
multiple crossings of the impulse in the formulation of our results by
restricting to causal geodesics in the de Sitter-case (neglecting
unphysical tachyons only) and working with the universal covering spacetime in case
of anti-de Sitter space.

Note also that in our discussion so far (see the specification of the `seed
geodesics' $\gamma$ at the beginning  of this section) we have exclusively dealt
with geodesics  with non-constant $U$-component. Hence it remains to deal with
geodesics travelling parallel to the surface $\{U=0\}$. In case
$U=\mbox{const}\not=0$ the geodesic will never enter the sandwich region of the
regularised spacetime, provided $\eps$ is small enough. Staying entirely on the
constant curvature background such a geodesic clearly is complete. To discuss
the geodesics with $U=0$, observe that the surface $\{U=0\}$ is totally geodesic
(not only in the background but also) in the regularised spacetime, which can be
seen from the $U$-component of the geodesic equations \eqref{eq:geos} (see also
\cite{PSSS:15}, the discussion prior to Thm.\ 3.6). Hence such geodesics
have trivial $U$-components and consequently  the system \eqref{eq:geos}
reduces to the background geodesic equations which again leads to completeness.
So we finally arrive at:

\begin{theorem}[Causal completeness for positive $\Lambda$]\label{thm:compl_plus}
Every causal geodesic in the entire class of regularised  nonexpanding
impulsive gravitational waves propagating in de Sitter universe
(i.e., \eqref{5ipp}, \eqref{const2} with $\Lambda>0$ and a smooth profile
function $H$) is complete, provided the regularisation parameter $\eps$ is
chosen small enough.
\end{theorem}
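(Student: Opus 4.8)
The plan is to reduce the completeness statement to the already-established Theorems \ref{thm:ex} and \ref{thm:global} together with the case analysis carried out in the paragraphs immediately preceding the theorem. First I would recall that for $\Lambda>0$ we have $\sigma=+1$, so causal geodesics are precisely those with $e\leq 0$, i.e.\ $\sigma e\leq 0$; by \eqref{eq:backgrdgeo} these have $U$-component $U=t$ (null case) or $U=a\dot U^0\sinh(t/a)$ (timelike case), both of which are \emph{monotone} in $t$ on the whole real line (after possibly reversing the parameter, as allowed by the footnote). This monotonicity is the crucial point that rules out the multiple-crossing pathology discussed above: a causal seed geodesic in de Sitter space crosses $\{U=0\}$ at most once, so Theorems \ref{thm:ex} and \ref{thm:global} need only be applied a single time.

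Next I would split into the three cases isolated in the text. (i) If the (maximally extended) causal geodesic has $U\equiv 0$, then $\{U=0\}$ being totally geodesic in $(M,g_\eps)$ — visible from the $U_\eps$-equation in \eqref{eq:geos}, whose right-hand side vanishes identically when $U_\eps\equiv 0$ — forces the geodesic to stay on the constant-curvature background hypersurface, where it is complete. (ii) If $U\equiv\mathrm{const}\neq 0$, then for $\eps$ small enough the geodesic never meets the strip $\{-\eps\le U\le\eps\}$, hence again lives entirely in the background de Sitter universe and is complete. (iii) In the remaining case the $U$-component is non-constant; after reparametrising so that $\dot U^0>0$ and $U(0)=0$, the geodesic is a ``seed geodesic'' $\gamma$ in the sense of Section \ref{sec3}. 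For $\eps\le\eps_0$ (with $\eps_0$ constrained by \eqref{cond:eps0}, the bound depending on the data \eqref{eq:data0} of this particular geodesic) Theorem \ref{thm:ex} produces the unique smooth solution $\gamma_\eps$ on $[\alpha_\eps,\alpha_\eps+\eta]$ with $\eta$ independent of $\eps$, and Theorem \ref{thm:global} guarantees $U_\eps(\alpha_\eps+\eta)\geq\eps$, so $\gamma_\eps$ exits the sandwich at $t=\beta_\eps\le\alpha_\eps+\eta$ and thereafter coincides with a background geodesic $\gamma_\eps^+$ of de Sitter space. By Remark \ref{rem:c+n} the normalisation is preserved, so $\gamma_\eps^+$ is again causal; since de Sitter space is geodesically complete and $\gamma_\eps^+$ has monotone $U$-component, it never re-enters the strip, and $\gamma_\eps$ is defined for all values of its affine parameter. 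Finally, in front of the sandwich $\gamma_\eps$ coincides with $\gamma$ extended backwards as a background geodesic, which is likewise complete to the past. Hence $\gamma_\eps$ is a complete causal geodesic, and since every causal geodesic of \eqref{5ipp}, \eqref{const2} arises (up to affine reparametrisation) from one of the three cases, the spacetime is causally complete for all $\eps$ small enough.

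The only genuine work has already been done in Theorems \ref{thm:ex} and \ref{thm:global} (and their supporting estimates in Appendix \ref{A}); what remains here is the bookkeeping. The one point that deserves care — and which I would state explicitly — is \emph{why a single application} of the extension theorem suffices in the de Sitter causal setting, namely the monotonicity of $U$ for $\sigma e\le 0$ noted above, which is exactly the property that fails for $\sigma e>0$ (spacelike de Sitter / timelike anti-de Sitter geodesics are periodic) and is the reason those cases are deliberately excluded from the statement. I expect no essential obstacle beyond this case distinction; in particular there is no need to track how $\eps_0$ varies with the seed data, since the theorem only claims completeness of each fixed causal geodesic for $\eps$ below its own (data-dependent) threshold, not a uniform threshold for the whole spacetime.
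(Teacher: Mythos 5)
Your proposal is correct and follows essentially the same route as the paper: the paper's own argument is exactly the case analysis given in the paragraphs preceding the theorem (geodesics with $U\equiv 0$ on the totally geodesic surface $\{U=0\}$, geodesics with $U\equiv\mathrm{const}\neq 0$ missing the strip for small $\eps$, and seed geodesics crossing $U=0$ handled by a single application of Theorems \ref{thm:ex} and \ref{thm:global} together with Remark \ref{rem:c+n}), with the restriction to causal geodesics in de Sitter space serving precisely to exclude the periodic $\sigma e>0$ case and hence multiple crossings of the wave zone. Your explicit remarks that monotonicity of the $U$-component for $\sigma e\le 0$ is what makes one crossing suffice, and that the threshold $\eps_0$ is allowed to depend on the data of the individual geodesic rather than being uniform, reproduce the paper's (more informally stated) reasoning.
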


\begin{theorem}[Completeness for negative $\Lambda$]\label{thm:compl_minus}
Every geodesic in the entire class of regularised  nonexpanding impulsive
gravitational waves propagating in the universal cover of anti-de Sitter
universe (i.e., \eqref{5ipp}, \eqref{const2}, with $\Lambda<0$ and a smooth
profile function $H$) is complete, provided the
regularisation parameter $\eps$ is chosen small enough.
\end{theorem}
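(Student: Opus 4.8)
The plan is to follow the template of the de~Sitter statement, Theorem~\ref{thm:compl_plus}: classify each maximal geodesic $c$ of the regularised spacetime \eqref{5ipp}, \eqref{const2} (now with $\Lambda<0$) by its $U$-component, dispose of the trivial cases directly, and for the remaining ones glue background arcs across the sandwich by means of the crossing Theorems~\ref{thm:ex} and~\ref{thm:global}. If $U_c$ is a nonzero constant, then for $\eps<|U_c|$ the curve never enters the sandwich $\{|U|\le\eps\}$ and hence is a geodesic of the background (universal cover of) anti-de~Sitter space, which is geodesically complete. If $U_c\equiv0$, then, since $\{U=0\}$ is totally geodesic in the regularised spacetime as well (visible from the $U_\eps$-equation in \eqref{eq:geos}, cf.\ \cite{PSSS:15}), the system \eqref{eq:geos} collapses to the background equations and $c$ is again complete. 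So it remains to deal with geodesics having a non-constant $U$-component; by \eqref{eq:backgrdgeo} --- and by Remark~\ref{rem:c+n}, which keeps the causal type fixed across the sandwich --- such a $c$ is, off the impulse, a background geodesic of null type ($e=0$, with linear $U$-profile), spacelike type ($e=1$, $\sinh$-profile) or timelike type ($e=-1$, $\sin$-profile).

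In the null and spacelike cases the $U$-profile is \emph{monotone}, so $c$ meets $\{U=0\}$ exactly once. Putting this intersection at $t=0$, the branch $t<0$ is a complete background geodesic on which $U$ decreases without bound, so it never re-enters the sandwich; Theorems~\ref{thm:ex} and~\ref{thm:global} then produce, for all $\eps$ below a threshold fixed by the seed data \eqref{eq:data0}, the unique smooth extension $\gamma_\eps$ across the sandwich; and for $t\ge\beta_\eps$ this continues as the background geodesic $\gamma_\eps^+$ of \eqref{eq:gamma+}, again complete in the future since its $U$-component is monotone. Hence $\gamma_\eps$ is complete.

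The timelike case carries the real content, and here passing to the universal cover is essential. Now $U=a\dot U^0\sin(t/a)$ oscillates, so in anti-de~Sitter space itself $c$ returns to the impulse hypersurface $\{U=0\}$ infinitely often --- exactly the pathology flagged in the discussion preceding this theorem, in which the $\eps$-threshold of Theorem~\ref{thm:global} may be driven to zero along the successive crossings. In the universal cover, by contrast, the single impulse is carried by just one copy of $\{U=0\}$: the function $U$ lifts, its zero set splits into disjoint copies of $\{U=0\}$ permuted by the deck group $\cong\Z$, and the successive zeros of $U$ along a timelike geodesic are distributed over these copies so that only finitely many land on any one of them. Using this, one shows $c$ meets the impulse only finitely often, applies Theorems~\ref{thm:ex} and~\ref{thm:global} at each of these crossings (the data fed into the later crossings converging, as $\eps\to0$, to that of the limiting matched background geodesic), and takes $\eps$ below all the resulting thresholds; this bound is positive but, as the paper already stresses, depends in general on the seed data \eqref{eq:data0}. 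Away from the finitely many sandwich passages the curve is a complete background geodesic, so $\gamma_\eps$ extends to all affine parameter values, i.e.\ it is complete.

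The main obstacle is exactly the finiteness just used: one must verify that, in the universal cover, the geodesic re-enters the impulse only finitely many times and --- crucially --- that this is not destroyed by the deflections suffered at the crossings, so that the $\eps$-thresholds over the successive passages have a strictly positive infimum. The rest is routine: the null and spacelike cases are immediate, and the gluing of background arcs across the sandwich is handled verbatim by Theorems~\ref{thm:ex} and~\ref{thm:global}, exactly as in the de~Sitter proof.
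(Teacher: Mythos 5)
Your proposal is correct and follows essentially the paper's own line: dispose of the geodesics with constant $U$ (which either never enter the sandwich or run inside the totally geodesic surface $\{U=0\}$), send the non-constant-$U$ geodesics through the wave zone via Theorems \ref{thm:ex} and \ref{thm:global}, and invoke the universal cover precisely to control the periodic case $\sigma e>0$ (timelike geodesics in anti-de Sitter space). The only difference is one of emphasis: in the paper the point of passing to the universal cover is that the successive zeros of (the lift of) $U$ along a timelike geodesic lie on \emph{successive} lifted copies of $\{U=0\}$, so the single copy carrying the impulse is met \emph{at most once}; hence no iteration over crossings is needed and the `positive infimum of $\eps$-thresholds' issue you flag as the main obstacle never arises --- this is exactly the multiple-crossing problem that the discussion preceding the theorem is designed to avoid, and the paper settles it at the same (asserted, not computed) level of detail as your sketch.
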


\begin{remark}[Non-smooth profiles $H$]
In case the profile function $H$ in the metric \eqref{5ipp} is
non-smooth --- which, in fact, occurs in physically interesting models where $H$ possesses
poles on the wave front at ${z=\pm 1}$, see \eqref{eq:poles} --- our method
still applies but some care is needed. Indeed, if a `seed geodesic' $\gamma$
hits the surface at ${U=0}$ away from
the poles we may work on an open subset of the spacetime with
the poles of $H$ removed. We only have to choose the constant $C_1$ in
\eqref{space} so small that the curves in the `solution space'
$\mathfrak{X}_\eps$ stay away from the poles, and then Theorems \ref{thm:ex}
and \ref{thm:global} still apply. The only `seed geodesics' $\gamma$ which do
\emph{not} allow for such a treatment are those which \emph{directly head} at
the poles (i.e., $\gamma$ of \eqref{eq:gamma} hits the pole at $\gamma(0)$),
which is in complete agreement with physical expectations.
\end{remark}

\section{Boundedness properties of geodesics}\label{sec4}

In this section we establish boundedness properties of
the global geodesics in the regularised spacetimes obtained in the previous
section. In particular, we will prove local boundedness of $\gamma_\eps$ and of
some components of its velocity \emph{uniformly in $\eps$}. These properties
will be essential in the next section where we derive the limits of
$\gamma_\eps$.

To begin with observe that the fixed point argument of the appendix already
gives uniform boundedness of the $U$- and $Z_p$-components together with their
first order derivatives. On the other hand, the $V$-component was not involved in the fixed
point argument and we have to establish its boundedness properties using the
$V$-component of the geodesic equation \eqref{eq:geos}. Since this equation
involves a $\delta'$-term which is \emph{not} multiplied by $U_\eps$, the
$V$-component of $\dot \gamma_\eps$ is not uniformly bounded in the
regularisation sandwich. However, we will show that $\dot V_\eps(\beta_\eps)$,
i.e., the $V$-speed when the geodesic \emph{leaves} the regularisation sandwich
is uniformly bounded.

\begin{proposition}[Uniform boundedness of geodesics]\label{prop:bounds}
The global geodesics $\gamma_\eps=(U_\eps,V_\eps,Z_{p\eps})$ of Theorem
\ref{thm:global} satisfy
\begin{enumerate}
 \item $U_\eps$ and $\dot U_\eps$ are locally uniformly bounded in $\eps$,
 \item $Z_{p\eps}$ and $\dot Z_{p\eps}$ are locally uniformly bounded in
$\eps$,
 \item $V_\eps$ is locally uniformly bounded in $\eps$, and
 \item $\dot V_\eps(\beta_\eps)$ is uniformly bounded in $\eps$.
\end{enumerate}
\end{proposition}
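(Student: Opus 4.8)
The plan is to read off parts (i)--(ii) from the \emph{a priori} bounds of the fixed point argument together with the behaviour of $\gamma_\eps$ off the strip, and to obtain (iii)--(iv) from the $V_\eps$-equation of \eqref{eq:geos} by a careful double integration and one integration by parts. Throughout I fix a compact parameter interval and record the following facts about the regularisation strip $[\alpha_\eps,\beta_\eps]$, used repeatedly below: since membership in the solution space $\mathfrak{X}_\eps$ of \eqref{space} keeps $\dot U_\eps$ positive and comparable to $\dot U^0$ on $[\alpha_\eps,\alpha_\eps+\eta]$, the component $U_\eps$ is strictly increasing there, so $|U_\eps|\le\eps$ and $\beta_\eps-\alpha_\eps=O(\eps)$ on the strip; furthermore $\|\delta_\eps\|_\infty=O(1/\eps)$, $\|\delta'_\eps\|_\infty=O(1/\eps^{2})$, and $\delta_\eps(\pm\eps)=0$ since $\supp\rho\subseteq[-1,1]$; finally $|U_\eps^{2}H\delta_\eps|=O(\eps)$, so the denominators $\sigma a^{2}-U_\eps^{2}H\delta_\eps$ in \eqref{eq:geos} stay bounded away from $0$ for small $\eps$. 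Parts (i) and (ii) are then immediate: on $[\alpha_\eps,\beta_\eps]$ they are exactly the bounds delivered by Appendix~\ref{A} (i.e.\ $\gamma_\eps\in\mathfrak{X}_\eps$); for $t\le\alpha_\eps$ one has $\gamma_\eps=\gamma$, the fixed smooth seed geodesic; and for $t\ge\beta_\eps$ the curve $\gamma_\eps=\gamma^{+}_\eps$ is a geodesic of the fixed smooth background spacetime with Cauchy data at $\beta_\eps$ (the values $U_\eps(\beta_\eps),\dot U_\eps(\beta_\eps),Z_{p\eps}(\beta_\eps),\dot Z_{p\eps}(\beta_\eps)$, and $\beta_\eps\in[-C\eps,\eta]$) bounded uniformly in $\eps$ by the sandwich bounds; since the background geodesic equations are linear and decoupled ($\ddot x=-\tfrac13\Lambda\,e\,x$ for each of $x=U,Z_p,V$, cf.\ \eqref{d'-geo-eq} with $\delta=\delta'=0$), their solutions are controlled on compact intervals by this data.

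\emph{Part (iii).} Only the strip and the region behind it remain. Write the $V_\eps$-line of \eqref{eq:geos} as $\ddot V_\eps=\Phi_\eps+B_\eps V_\eps$, where $\Phi_\eps=\tfrac12 H\delta'_\eps\dot U_\eps^{2}+\delta^{pq}H_{,p}\delta_\eps\dot Z^\eps_q\dot U_\eps$ is the source and $B_\eps$ the coefficient of $V_\eps$ read off from the right-hand side; by the facts above together with (i)--(ii) one checks $\|B_\eps\|_\infty=O(1/\eps)$ on the strip. Integrating twice from $\alpha_\eps$ gives
\[ V_\eps(t)=V^0_\eps+(t-\alpha_\eps)\dot V^0_\eps+\int_{\alpha_\eps}^{t}(t-\tau)\,\Phi_\eps(\tau)\,\dd\tau+\int_{\alpha_\eps}^{t}(t-\tau)\,B_\eps(\tau)\,V_\eps(\tau)\,\dd\tau . \]
Here $V^0_\eps,\dot V^0_\eps$ are bounded by \eqref{eq:dataconv}; the $\Phi_\eps$-integral is $O(1)$ because $|t-\tau|\le\beta_\eps-\alpha_\eps=O(\eps)$ while $\int_{\alpha_\eps}^{\beta_\eps}|\delta'_\eps(U_\eps)|\le(\beta_\eps-\alpha_\eps)\|\delta'_\eps\|_\infty=O(1/\eps)$ (the $\delta_\eps$-part of $\Phi_\eps$ contributing only $O(\eps)$); and the last integral is bounded by $(\beta_\eps-\alpha_\eps)^{2}\|B_\eps\|_\infty\sup_{[\alpha_\eps,t]}|V_\eps|=O(\eps)\sup_{[\alpha_\eps,t]}|V_\eps|$. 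Passing to the supremum over $t\in[\alpha_\eps,\beta_\eps]$ and taking $\eps$ small enough that this last coefficient is $\le\tfrac12$, the term is absorbed into the left-hand side and yields a uniform bound for $V_\eps$ on $[\alpha_\eps,\beta_\eps]$, in particular for $V_\eps(\beta_\eps)$. Behind the strip $V_\eps$ solves the linear background equation with data $(V_\eps(\beta_\eps),\dot V_\eps(\beta_\eps))$; since the first entry is now bounded and the second will be bounded by (iv), this gives a uniform local bound there as well.

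\emph{Part (iv), the crux.} From $\dot V_\eps(\beta_\eps)=\dot V^0_\eps+\int_{\alpha_\eps}^{\beta_\eps}(\Phi_\eps+B_\eps V_\eps)\,\dd\tau$: the integral $\int B_\eps V_\eps$ is $O(1/\eps)\cdot O(\eps)\cdot O(1)=O(1)$ using $\|B_\eps\|_\infty=O(1/\eps)$, $\beta_\eps-\alpha_\eps=O(\eps)$ and the bound from (iii); the $\delta_\eps$-part of $\int\Phi_\eps$ is $O(1)$ because $\int_{\alpha_\eps}^{\beta_\eps}|\delta_\eps(U_\eps)|\le(\beta_\eps-\alpha_\eps)\|\delta_\eps\|_\infty=O(1)$; and for the $\delta'_\eps$-part one uses $\delta'_\eps(U_\eps)\dot U_\eps=\tfrac{\dd}{\dd\tau}\delta_\eps(U_\eps)$ and integrates by parts,
\[ \int_{\alpha_\eps}^{\beta_\eps}\tfrac12 H(Z_\eps)\,\dot U_\eps\,\tfrac{\dd}{\dd\tau}\delta_\eps(U_\eps)\,\dd\tau=\tfrac12\big[H(Z_\eps)\,\dot U_\eps\,\delta_\eps(U_\eps)\big]_{\alpha_\eps}^{\beta_\eps}-\tfrac12\int_{\alpha_\eps}^{\beta_\eps}\tfrac{\dd}{\dd\tau}\!\big(H(Z_\eps)\,\dot U_\eps\big)\,\delta_\eps(U_\eps)\,\dd\tau . \]
The boundary term \emph{vanishes} since $U_\eps(\alpha_\eps)=-\eps$, $U_\eps(\beta_\eps)=\eps$ and $\delta_\eps(\pm\eps)=0$, and the remaining integral is $O(1)$ since $\tfrac{\dd}{\dd\tau}(H(Z_\eps)\dot U_\eps)=H_{,p}\dot Z_{p\eps}\dot U_\eps+H\ddot U_\eps$ is uniformly bounded --- $\ddot U_\eps=O(1)$ because the $U_\eps$-equation in \eqref{eq:geos} carries an extra factor $U_\eps=O(\eps)$ that compensates its $O(1/\eps)$ numerator --- while $\int|\delta_\eps(U_\eps)|=O(1)$. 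Hence $\dot V_\eps(\beta_\eps)$ is bounded uniformly in $\eps$; feeding this back into the last sentence of the (iii)-step completes (iii) behind the strip as well.

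\emph{Expected main obstacle.} Everything except part (iv) is essentially bookkeeping on top of the appendix. The real point of (iv) is that the $\delta'_\eps$-term in $\ddot V_\eps$ is not damped by a factor $U_\eps$, so $\dot V_\eps$ genuinely blows up like $1/\eps$ inside the strip; what rescues the \emph{endpoint} value is that the offending term is (essentially) the exact derivative $\tfrac{\dd}{\dd\tau}\delta_\eps(U_\eps)$, whose primitive $\delta_\eps$ vanishes at both ends $U_\eps=\pm\eps$, so it produces no net boundary contribution over the full crossing. The supporting mechanism throughout is the trade-off between the $O(1/\eps)$ pointwise size of the singular quantities and the $O(\eps)$ width of the strip.
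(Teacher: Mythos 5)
Your proof is correct and follows essentially the same route as the paper: (i)--(ii) come from membership in $\mathfrak{X}_\eps$ plus the (linear, decoupled) background equations, (iii) from a double-integral estimate on the strip --- the paper closes it with Bykov's Gronwall-type inequality where you absorb an $O(\eps)$ coefficient, which is equivalent --- and (iv) by the same key integration by parts turning $\delta'_\eps$ into $\delta_\eps$, with vanishing boundary terms at $U_\eps=\pm\eps$ and a bounded remainder because $\ddot U_\eps=O(1)$; the paper does this after substituting $r=U_\eps(s)$, which is the same computation in disguise. One bookkeeping slip: your splitting $\ddot V_\eps=\Phi_\eps+B_\eps V_\eps$ omits the source term $-\bigl(e+\tfrac12\dot U_\eps^2\tilde G_\eps-\dot U_\eps\tfrac{\d}{\d t}(H\delta_\eps U_\eps)\bigr)\,\frac{H\delta_\eps U_\eps}{\sigma a^2-U_\eps^2H\delta_\eps}$ coming from the numerator $V_\eps+H\delta_\eps U_\eps$ in \eqref{eq:geos}; it is $O(1/\eps)$ on the strip and is controlled by exactly the estimates you already use ($O(\eps)$ after double integration, $O(1)$ after single integration), so nothing breaks, but it should be included in $\Phi_\eps$.
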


% \begin{lemma}[Bounds for $V_\eps$ and $V_\eps(\beta_\eps)$]\label{lem:veps}
% The $V$-component of the global geodesics
% $\gamma_\eps=(U_\eps,V_\eps,Z_{p\eps})$ of Theorem \ref{thm:global} and
% its derivative satisfy
% \begin{enumerate}
%  \item $V_\eps$ is locally uniformly bounded in $\eps$, and
%  \item $\dot V_\eps(\beta_\eps)$ is uniformly bounded in $\eps$.
% \end{enumerate}
% \end{lemma}

Observe that by Lemma \ref{lem:gamma} the time $\beta_\eps$ when
the geodesic leaves the regularisation strip  satisfies
$\beta_\ep\leq \alpha_\eps+4\ep/\dot U^0\to 0$.

\begin{proof}
Items (i) and (ii) are immediate from Theorem \ref{thm:exmat}.

To deal with the $V$-component we write
\begin{equation}
 |V_\eps(t)|\leq|V^0|+|\dot V^0|\,|t-\alpha_\eps|
                  +\int\limits_{\alpha_\eps}^t\int\limits_{\alpha_\eps}^s|\ddot
V_\eps(r)| \,\d r \d s
\end{equation}
and estimate each term in the differential equation \eqref{eq:geos} for
$V_\eps$. To begin with we claim that
\begin{equation}\label{claim}
 \mbox{$V_\eps$ is bounded on $[\alpha_\eps,\beta_\eps]$}\,.
\end{equation}
Indeed, proceeding as in the proof of Proposition \ref{self} and, in particular,
using \eqref{space} and Lemmata \ref{lem:denom},\ref{lem:gamma} we obtain
\begin{align}\label{eq:Vepsteps}\nonumber
 \int\limits^{t}_{\alpha_\eps}\int\limits_{\alpha_\eps}^s|\ddot V_\eps(r)| \,\d r \d s
 \leq&\  \Big(\frac{4\eps}{\dot U^0}\Big)^2
   \left(\frac{1}{2}\|H\|_\infty\frac{1}{\eps^2}\|\rho'\|_\infty
         \frac{9}{4}(\dot{U}^0)^2
    +3\|DH\|_\infty\frac{1}{\eps}\|\rho\|_\infty\big(|\dot Z^0|+C_2\big)
         \frac{3}{2}\dot U^0\right)
\nonumber \\
  &+\frac{2}{a^2}\Big(\frac{4\eps}{\dot
   U^0}\Big)^2
  \left(1+\frac{9}{8}(\dot{U}^0)^2\|\tilde G_\eps\|_\infty+\frac{3}{2}\dot U^0
  \|(H\delta_\eps U_\eps\dot{)} \|_\infty\right)
  \|H\|_\infty \|\rho\|_\infty
\nonumber \\
 &+\frac{2}{a^2}\left(1+\frac{9}{8}(\dot{U}^0)^2\|\tilde
  G_\eps\|_\infty+\frac{3}{2}\dot U^0 \|(H\delta_\eps U_\eps\dot{)}\|_\infty
  \right)\int\limits^{t}_{\alpha_\eps} \int\limits_{\alpha_\eps}^s|V_\eps(r)| \,\d r \d s
\\ \nonumber
 &\leq C+ C\int\limits^{t}_{\alpha_\eps}
\int\limits_{\alpha_\eps}^s|V_\eps(r)|\left(1+\frac{\chi_{\eps}(r)}{\eps}
\right) \,\d r \d s\,,
\end{align}
where $C$ is some generic constant and $\chi_{\eps}$ is the
characteristic function of $[\alpha_\eps,\beta_\eps]$. Moreover, we have used
that $\|\tilde
G_\eps\|_\infty=O(1/\eps)= \|(H\delta_\eps U_\eps\dot{)}\|_\infty$. So overall
we obtain by a generalization of Gronwall's inequality due to Bykov
\cite[Thm.11.1]{BS:92}\footnote{Note that it suffices that $\chi_\eps$ is
integrable rather than continuous.}
\begin{equation}%\nonumber
 |V_\eps|
   \leq C(1+\eps)\exp\Big(\int\limits^{\beta_\eps}_{\alpha_\eps}
   \int\limits_{\alpha_\eps}^s C\big(1+\frac{1}{\eps}\big) \d r \d s\Big)
  \leq C\, e^C\,,
\end{equation}
establishing the claim \eqref{claim}.
\medskip

Now we may prove (iv): Writing $\dot V_\eps(\beta_\eps)$ also as an integral and
using boundedness of $V_\eps$ on $[\alpha_\eps,\beta_\eps]$ we may proceed as in
\eqref{eq:Vepsteps}. Then again we obtain uniform boundedness of
all the terms but the first one, which involves the $\delta'_\eps$-term. To
estimate this one we use integration by parts to obtain
\begin{align}
 \int_{\alpha_\eps}^{\beta_\eps} &H(Z_\eps(s))\,\delta_\eps'(U_\eps(s))\, \dot
U_\eps^2(s)\,\d s
 = \int\limits_{-\eps}^\eps H\big(Z_\eps(U_\eps^{-1}(r))\big)\, \delta_\eps'(r)
   \dot U_\eps(U_\eps^{-1}(r))\,\d r
\nonumber \\
& = \delta_\eps(r)\, H\big(Z_\eps(U_\eps^{-1}(r))\big) \, \dot
U_\eps(U_\eps^{-1}(r))
 \Bigr|_{-\eps}^\eps-
\int\limits_{-\eps}^\eps\delta_\eps(r)\Big(H\big(Z_\eps(U_\eps^{-1}(r))\big) \,
\dot U_\eps\big(U_\eps^{-1}(r)\big)\dot{\Big)}\, \d r\,.
\end{align}
Now the first term vanishes and the second one is bounded independently of
$\eps$ by (i), (ii), the fact that $\dot U_\eps^{-1}$ is uniformly bounded
away from zero by \eqref{space}, and since $\ddot U_\eps$ is uniformly bounded
by \eqref{eq:geos}. This establishes (iv).
\medskip

To prove (iii) it remains to show that $V_\eps$ is bounded on
any compact interval disjoint from $(\alpha_\eps,\beta_\eps)$.
But this follows from the fact that $\gamma_\eps$ outside of
$(\alpha_\eps,\beta_\eps)$ solves the geodesic equation of the background
spacetime of constant curvature, and that $\gamma_\eps(\beta_\eps)$ and
$\dot\gamma_\eps(\beta_\eps)$ are uniformly bounded by (i), (ii), (iv) and
\eqref{claim}.
\end{proof}

\section{Limiting geodesics}\label{sec5}

In this final section we consider the limit $\eps\to 0$ of the unique global smooth
geodesics $\gamma_\eps$ of the regularised spacetime \eqref{5ipp},
\eqref{const2} obtained in Section \ref{sec3} (Thm.\ \ref{thm:ex}, Thm.\
\ref{thm:global}). This physically amounts to \emph{explicitly determine the
geodesics of the distributional form} \eqref{classical}, \eqref{const} of all
nonexpanding impulse gravitational waves propagating in (an\-ti{-)}de Sitter
universe. In particular, we will prove that the geodesics $\gamma_\eps$
converge to geodesics of the background \mbox{(anti-)}de Sitter
spacetime with appropriate \emph{but different} data on either side of the
impulse ($U<0$ and $U>0$ respectively), which from a global point of view
amounts to convergence of $\gamma_\eps$ to geodesics of the background which
have to be \emph{matched appropriately} across the impulse. The technical calculation
of the limits is given in Appendix \ref{B}.

To make our claims on the convergence of $\gamma_\eps=(U_\eps,V_\eps,Z_{p\eps})$
precise we introduce the following notation for the geodesics of the
background \mbox{(anti-)}de Sitter universe: Let $\gamma=(U,V,Z_{p})$ be a `seed geodesic'
as in \eqref{eq:gamma}, that is $U(t=0)=0$ and $\gamma$ assumes the
data \eqref{eq:data0}, i.e.,
\begin{equation}
 \gamma(0)=(0,V^{0},Z^{0}_p)\,,\quad \text{and}\quad
 \dot \gamma(0)=(\dot U^{0}>0,\dot V^{0},\dot Z^{0}_p)\,,
\end{equation}
where the constants satisfy the constraints \eqref{const1} and are
normalised as in \eqref{eq:norm}.
Furthermore let $\gamma^+=(U^+,V^+,Z^+_{p})$ be a geodesic of the
background again crossing $U=0$ at $t=0$, i.e., with $U^+(t=0)=0$ and with data
\begin{equation}\label{eq:Xdata_ro}
\gamma^+(0)=(0,{\sf B},Z_p^0)\,,\quad \text{and}\quad
\dot \gamma^+(0)=(\dot U^0,{\sf C},{\sf A_p})\,,
\end{equation}
where we define
\begin{equation}\label{eq:limconst}
 {\sf A_p}:=\lim_{\eps\to 0} \dot Z_{p\eps}(\beta_\eps)\,,\quad
    {\sf B}=\lim_{\eps\to 0} V_\eps(\beta_\eps)\,,\quad
    {\sf C}=\lim_{\eps\to 0} \dot V_\eps(\beta_\eps)\,.
\end{equation}
Recall that $\beta_\eps\leq \alpha_\eps+4\ep/\dot{U}^0\to 0$ is
defined to be the time when the regularised geodesic $\gamma_\eps$ leaves the
regularisation strip, i.e., $U_\ep(\beta_\ep)=\ep$. Finally define $\tilde\gamma=
(\tilde U,\tilde V,\tilde Z_p)$ by
\begin{equation}\label{eq:limitgeos}
 \tilde\gamma(t):=\begin{cases}
  \gamma(t)\,,\qquad&t \leq 0\\
  \gamma^+(t)\,,\qquad&t>0\,.\end{cases}
\end{equation}

We will show that $\gamma_\eps$ \emph{converge to the `matched geodesics'}
$\tilde\gamma$ of the impulsive spacetime, which from now on we will also call `limiting
geodesics' with `past branch' $\gamma$ and `future branch' $\gamma^+$, see also
Figure~3.
Note that the respective notion of convergence of the individual
components of $\gamma_\eps$ will differ, subject to the regularity of the
respective components of the `limiting geodesics'. Indeed,
$\tilde\gamma=(\tilde U,\tilde V,\tilde Z_p)$ has a smooth first component $U$,
while $Z_p$ is continuous
with a finite jump in $\dot{Z}_p$ (determined by $A_p$) across the impulse at
$t=0$, and $V$ is even discontinuous across $t=0$ with a finite jump in $V$ and
$\dot V$ (determined by the coefficients ${\sf B}$ and ${\sf C}$, respectively).

Observe that at the moment we only know the limits in
\eqref{eq:limconst} to exist for subsequences (by uniform boundedness of
$\dot Z_{p\eps}$, $V_\eps$ and $\dot V_\eps(\beta_\eps)$, cf.\ Proposition
\ref{prop:bounds}) and hence ${\sf A_p}$, ${\sf B}$, and ${\sf C}$ need not be uniquely
defined. We will, however, prove convergence and we will derive an explicit expressions for ${\sf A_p}$, ${\sf B}$ and ${\sf
C}$ in Proposition \ref{prop:explicit_limits}, below.
But first we state and prove the main assertion on the limits of the
geodesics in the regularised spacetime:

\begin{theorem}\label{prop:lim1}
The geodesics $\gamma_\eps=(U_\eps,V_\eps,Z_{p\eps})$
of the regularised spacetime derived in Theorem \ref{thm:global}
converge to the `limiting geodesics' $\tilde\gamma$ of \eqref{eq:limitgeos} in
the following sense:
\begin{enumerate}
 \item $U_\ep \to \tilde U$ in $\Con^1$,
 \item $Z_{p\ep}\to \tilde Z_p$ locally uniformly, \\
  $\dot Z_{p\ep}\to\dot{\tilde{Z}}_p$ in distributions and
  uniformly on compact intervals not containing $t=0$,
 \item $V_\eps\to \tilde V$ in distributions and
  in $\Con^1$ on compact intervals not containing $t=0$.
\end{enumerate}
\end{theorem}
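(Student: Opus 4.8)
The plan is to establish the three convergence statements in order, exploiting the uniform bounds from Proposition \ref{prop:bounds} together with the explicit structure of the geodesic equations \eqref{eq:geos} outside the sandwich strip $[\alpha_\eps,\beta_\eps]$, whose width tends to zero like $4\eps/\dot U^0$ by Lemma \ref{lem:gamma}.

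\medskip
\textbf{Step 1 ($U$-component).} First I would treat $U_\eps$. By Proposition \ref{prop:bounds}(i), $U_\eps$ and $\dot U_\eps$ are locally uniformly bounded, and from the first equation of \eqref{eq:geos} together with the uniform lower bound $\sigma a^2 - U_\eps^2 H\delta_\eps>0$ (Lemma \ref{lem:denom}) one gets a uniform bound on $\ddot U_\eps$ as well; in fact $\ddot U_\eps = -\frac13\Lambda U_\eps e + O(\eps)$ uniformly (the correction terms carry a factor $U_\eps=O(\eps)$ inside the strip). Hence $(U_\eps)$ is bounded in $\Con^2_{\mathrm{loc}}$, so by Arzel\`a--Ascoli a subsequence converges in $\Con^1_{\mathrm{loc}}$; passing to the limit in the integral form of the equation shows any limit solves $\ddot U = -\frac13\Lambda Ue$ with the data $U(0)=0$, $\dot U(0)=\dot U^0$ (using \eqref{eq:dataconv} and $\alpha_\eps\to0$), which by uniqueness for this linear ODE is exactly $\tilde U$ of \eqref{eq:backgrdgeo}. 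Since the limit is unique the whole net converges, giving (i).

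\medskip
\textbf{Step 2 ($Z_p$-components).} For the $Z_{p\eps}$ I would argue the same way away from the strip: on any compact interval not containing $0$, for $\eps$ small the geodesic solves the smooth background equations, and $\gamma_\eps(\beta_\eps),\dot\gamma_\eps(\beta_\eps)$ are uniformly bounded (Proposition \ref{prop:bounds} and the explicit $\beta_\eps\to0$), so continuous dependence on initial data for the background geodesic flow gives uniform convergence $Z_{p\eps}\to\tilde Z_p$ and $\dot Z_{p\eps}\to\dot{\tilde Z}_p$ on such intervals, with $\tilde Z_p=Z_p$ for $t<0$ and $\tilde Z_p=Z_p^+$ for $t>0$ (here one uses the subsequential limit ${\sf A_p}$ defining the future data, which Proposition \ref{prop:explicit_limits} will pin down uniquely). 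Across the strip, $\dot Z_{p\eps}$ is uniformly bounded, hence $Z_{p\eps}$ is uniformly Lipschitz, so equicontinuous; combined with the off-strip convergence and $|\beta_\eps-\alpha_\eps|\to0$ this upgrades to locally uniform convergence of $Z_{p\eps}$ to the continuous function $\tilde Z_p$. For $\dot Z_{p\eps}$: it is locally uniformly bounded and converges uniformly off any neighbourhood of $0$, so it converges in $L^1_{\mathrm{loc}}$ (the bad strip has measure $\to0$), hence in $\mathcal D'$, to $\dot{\tilde Z}_p$; this is (ii).

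\medskip
\textbf{Step 3 ($V$-component) --- the main obstacle.} The delicate part is (iii), because the $\delta_\eps'$-term in the $V$-equation is \emph{not} damped by a factor $U_\eps$, so $\dot V_\eps$ is genuinely unbounded inside the strip and $\tilde V$ has a jump at $t=0$. Here I would not try uniform estimates through the strip. Instead: on compact intervals with $t<0$, $\gamma_\eps\equiv\gamma$ for $\eps$ small, so $V_\eps\to V$ trivially in $\Con^1$; on compact intervals with $t>0$, $V_\eps$ solves the \emph{linear} background $V$-equation with data $V_\eps(\beta_\eps)\to{\sf B}$, $\dot V_\eps(\beta_\eps)\to{\sf C}$ (uniform boundedness from Proposition \ref{prop:bounds}(iii),(iv)) and $\beta_\eps\to0$, so again continuous dependence on data gives $V_\eps\to V^+$ in $\Con^1$ there. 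It remains to get $V_\eps\to\tilde V$ in $\mathcal D'$ globally: since $V_\eps$ is locally uniformly bounded (Proposition \ref{prop:bounds}(iii)) and converges uniformly off any neighbourhood of $0$, dominated convergence gives $L^1_{\mathrm{loc}}$-convergence, hence $\mathcal D'$-convergence, to the function equal to $V$ for $t<0$ and $V^+$ for $t>0$, i.e.\ to $\tilde V$ (the jump at the single point $t=0$ is invisible to distributions). The only real work, and the point I expect to be technically heaviest, is confirming that the subsequential limits ${\sf A_p},{\sf B},{\sf C}$ are in fact honest limits --- but that is exactly the content of Proposition \ref{prop:explicit_limits}, which I may invoke, so the present proof reduces to the soft-analysis arguments above plus a careful bookkeeping of which convergence holds on which set.
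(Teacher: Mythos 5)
Your proposal is correct in substance and, away from the wave zone, follows the same route as the paper: both arguments treat $t\geq\beta_\eps$ by continuous dependence on initial data (a Gronwall estimate comparing data at $t=\beta_\eps$, cf.\ \eqref{eq:max2}), both defer the existence and identification of the constants ${\sf A_p},{\sf B},{\sf C}$ to Proposition \ref{prop:explicit_limits}, and both obtain the distributional statements in (ii), (iii) from local uniform boundedness (Proposition \ref{prop:bounds}) together with the fact that the strip $[\alpha_\eps,\beta_\eps]$ shrinks to $\{0\}$. Where you genuinely differ is inside the strip: the paper proves quantitative estimates $\sup_{[\alpha_\eps,\beta_\eps]}\big(|U_\eps-U|+|\dot U_\eps-\dot U|\big)=O(\eps)$ and $|Z_{p\eps}-Z^0_{p\eps}|=O(\eps)$ by re-running the Appendix~\ref{A} estimates and Bykov's inequality (see \eqref{eq:Uunif}, \eqref{eq:Zunif}, \eqref{eq:Z-est}), whereas you use soft compactness: a uniform bound on $\ddot U_\eps$ plus Arzel\`a--Ascoli and identification of the limit for $U_\eps$, and a uniform Lipschitz bound (equicontinuity) for $Z_{p\eps}$. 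This is a viable alternative; it loses the explicit $O(\eps)$ rate (which is not needed for the theorem nor for Appendix~\ref{B}) and buys shorter estimates.

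Two caveats. First, your parenthetical claim that $\ddot U_\eps=-\tfrac13\Lambda U_\eps e+O(\eps)$ uniformly is too strong: the correction terms contain contributions like $\dot U_\eps^2 H\delta'_\eps U_\eps\cdot U_\eps$ and $\dot U_\eps^2 H\delta_\eps\cdot U_\eps$, which are only $O(1)$ in the strip (e.g.\ $\delta'_\eps=O(\eps^{-2})$ against $U_\eps^2=O(\eps^2)$); the factor $U_\eps=O(\eps)$ compensates only one power of $\eps^{-1}$. This does not damage your argument: the uniform $O(1)$ bound is all Arzel\`a--Ascoli needs, and the time-integrals of the corrections over the strip are $O(\eps)$, which is what identifies the limit equation. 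Second, invoking Proposition \ref{prop:explicit_limits} requires a word on ordering, since its proof (Appendix~\ref{B}) uses exactly the uniform convergence of $\dot U_\eps$ and $Z_{p\eps}$ on the strip; you must therefore establish those facts independently of ${\sf A_p},{\sf B},{\sf C}$. Your Step 1 and the Lipschitz/equicontinuity bound on $Z_{p\eps}$ do precisely that, so the logic should be arranged as: strip convergence first, then Proposition \ref{prop:explicit_limits}, then the continuous-dependence step for $t>0$; the published proof has the same (implicit) ordering, so this is a matter of presentation rather than a gap.
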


Observe that $\dot{\tilde{Z}}_p$ is discontinuous across $t=0$, hence convergence
of $\dot Z_{p\eps}$  cannot be uniform on any interval containing
$t=0$\footnote{Unless $t=0$ is the right endpoint of the interval.
This, however, is rather an artefact due to our choice setting
$\tilde \gamma(0)=\gamma(0)$, cf.\ \eqref{eq:limitgeos}.}, and the same holds
true
for $V_\eps$ and $\dot V_\eps$.

\begin{proof}
First we consider the $U$-component of $\gamma_\eps$ on the interval $[\alpha_\eps,\beta_\eps]$, where
we have from the geodesic equations \eqref{eq:geos} resp.\
\eqref{d'-geo-eq}
\begin{align}\nonumber
 |U_\ep(t)-U(t)|
 \leq \ep  &+ |e|\int_{\alpha_\eps}^t\int_{\alpha_\eps}^s
\Bigl|\frac{U_\ep}{\sigma a^2 - U_\ep^2 H\delta_\ep} -
\frac{U}{\sigma a^2}\Bigr|\, \dd r\,\dd s \\
 &+ \int_{\alpha_\eps}^t \int_{\alpha_\eps}^s
\Bigl|\frac{\frac{1}{2}U_\eps\dot{U_\eps}^2\tilde{G}_\eps-U_\eps\dot{U_\eps}
\left(H\delta_{\veps}
U_\eps\right)\dot{}}{ \sigma a^2-U_\eps^2H\delta_{\veps}}\Bigr| \,\dd r\,\dd s
 =: \ep + |e|I + II\, .
\end{align}
To estimate $I$ observe that (cf.\ the proof of Lemma \ref{lem:denom})
\begin{align}\label{eq:estdenom}
 \Bigl| \frac{1}{\sigma a^2 - U_\ep^2 H \delta_\ep} - \frac{1}{\sigma a^2}\Bigr| \leq \frac{2}{a^4}\ep \|H\|_\infty
\|\rho\|_\infty \leq C \ep\,,
\end{align}
with $C$ a generic constant, and consequently by \eqref{space}
\begin{align}\nonumber
 I &\leq \int_{\alpha_\eps}^t\int_{\alpha_\eps}^s \left
(\Bigl|\frac{U_\ep}{\sigma a^2 - U_\ep^2 H\delta_\ep} -
\frac{U_\ep}{\sigma a^2}\Bigr| + \Bigl|\frac{U_\ep}{\sigma a^2 } - \frac{U}{\sigma a^2}\Bigr|\right) \dd r\,\dd s\\
&\leq \eta^2 (\ep + C_1) C \ep  + \frac{1}{a^2}
\int_{\alpha_\eps}^t\int_{\alpha_\eps}^s \Bigl|U_\ep - U \Bigr|\, \dd r\,\dd
s\,.
\end{align}
For the term $II$ we obtain, as in the proof of Proposition \ref{self} (cf.\ \eqref{eq:geps}, \eqref{eq:bracket}),
that $II \leq C \ep$. Hence, overall
\begin{equation}
 |U_\ep(t)-U(t)|\leq C \ep + \frac{|e|}{a^2}
\int_{\alpha_\eps}^t\int_{\alpha_\eps}^s \Bigl|U_\ep - U \Bigr|\, \dd r\,\dd
s\,,
\end{equation}
and so again by Bykov's inequality  $|U_\ep(t)-U(t)|=O(\eps)$.
%$\leq C \ep \exp(\int_{-b}^t\int_{-b}^s 1\, \dd r \dd s) = C \ep$.
In the same way we see that also
$|\dot{U}_\ep(t) - \dot{U}(t)|=O(\ep)$ and so
\begin{equation}\label{eq:Uunif}
 \sup_{\alpha_\eps\leq t\leq \beta_\eps} | U_\eps(t)-U(t)| + |\dot{U}_\eps(t) - \dot{U}(t)| \to 0\,.
 %U_\ep\to U\quad\text{in}\quad \Con^1([\alpha_\eps,\beta_\eps])\,.
\end{equation}

\medskip
We now turn to the $Z_p$-components on the interval $[\alpha_\eps,\beta_\eps]$.
We have
\begin{align}\label{eq:Zunif}
 \sup_{\alpha_\eps\leq t\leq \beta_\eps}|Z_{p\eps}(t)-\tilde Z_p(t)|
 \leq \sup_{\alpha_\eps\leq
t\leq\beta_\eps}|Z_{p\ep}(t)-Z_{p\eps}^0|+\sup_{\alpha_\eps\leq t\leq
\beta_\eps}|Z_{p\eps}^0-\tilde Z_p(t)|
 \to 0 \qquad (\eps\to 0)\,,
\end{align}
where $Z^0_{p\eps}=Z_p(\alpha_\eps)$ was defined in \eqref{eq:dataeps}.
Indeed by continuity of $\tilde Z_p$ the second term converges to zero since
$\alpha_\eps$, $\beta_\ep\to 0$ and the first term can be estimated using the
differential equation by
\begin{align}\label{eq:Z-est}\nonumber
 |Z_{p\ep}(t)-Z_{p\eps}^0|\leq&
\int_{\alpha_\eps}^{\beta_\ep}\int_{\alpha_\eps}^s
\left|\frac{DH\delta_\veps\dot{U}_\eps^2}{2}
  -\frac{eZ_p^\eps +\tfrac{1}{2}Z_p^\eps\dot{U_\eps}^2\tilde{G}_\eps
-Z_p^\eps\dot{U_\eps}\left(H\delta_{\veps} U_\eps\right)\dot { }}
   {\sigma a^2-U^2H\delta_{\veps}}\,\right| \dd r\,\dd s  \\
   &+|\dot{Z}_{p\eps}^0| (\alpha_\ep + \beta_\eps)\leq C \ep\,.
\end{align}
Here we have used that the inner integral is bounded by
\eqref{eq:iint1}-\eqref{eq:iint3}.
\medskip

Now we finish the proof of (i) and establish the claim on uniform convergence in
(ii) and (iii). First note that by construction there is nothing to show for
$t\leq 0$. For $t\geq\beta_\eps$ we use continuous dependence of solutions to
ODEs on the data. Indeed for such $t$ both $\gamma_\eps$ and $\tilde\gamma=\gamma^+$ are
solutions to the same differential equation, however, with different data which
is given for $\gamma_\eps$ at $t=\beta_\eps$ by \eqref{eq:Xdata_ro} and for
$\gamma^+$ at $t=0$ by \eqref{eq:data+}. More precisely, for all $T>0$
(which implies $T>\beta_\eps$ for $\eps$ small) we have
\begin{align}\nonumber\label{eq:max2}
 \sup_{\beta_\eps\leq t\leq
T}\big(|\gamma_\eps(t)-&\gamma^+(t)|,|\dot \gamma_\eps(t)-\dot
\gamma^+(t)|\big)\\
 &\leq \max \big(|\gamma_\eps(\beta_\eps) - \gamma^+(\beta_\eps)|,
|\dot{\gamma}_\ep(\beta_\eps) -
\dot{\gamma}^+(\beta_\eps)|\big)\, e^{TL}\,,
\end{align}
where $L$ is a Lipschitz constant for the right hand side of the geodesic
equation of the background on a suitable compact set. Note that such a set
exists by the boundedness properties of $\gamma_\eps$ established in
Proposition \ref{prop:bounds}, i.e., $\dot\gamma_\eps(\beta_\eps)$ is uniformly
bounded. Finally, for the terms in the maximum in
\eqref{eq:max2}  we have
\begin{align}
 &|U_\eps(\beta_\eps)-U^+(\beta_\eps)|\to 0\,,\quad
 |\dot U_\eps(\beta_\eps)-\dot U^+(\beta_\eps)|\to 0
  \quad \text{by \eqref{eq:Uunif}, and}\\\nonumber
 &|Z_{p\eps}(\beta_\eps)-Z^+_p(\beta_\eps)|\to 0
   \quad \text{by \eqref{eq:Zunif}}\,,
\end{align}
whereas for the remaining terms we write
% go to zero as well due to our choice of the data,
%\eqref{eq:Xdata}, \eqref{eq:limconst}. More precisely we have
\begin{align}\nonumber \label{eq:max3}
 |V_\eps(\beta_\eps)-V^+(\beta_\eps)|
  &\leq |V_\eps(\beta_\eps)-V^+(0)| +
|V^+(0)-V^+(\beta_\eps)|\,,\\
|\dot V_\eps(\beta_\eps)-\dot V^+(\beta_\eps)|
  &\leq |\dot V_\eps(\beta_\eps)-\dot V^+(0)| + |\dot
V^+(0)-\dot V^+(\beta_\eps)|\,,\\ \nonumber
|\dot Z_{p\eps}(\beta_\eps)-\dot Z_p^+(\beta_\eps)|
  &\leq |\dot Z_{p\eps}(\beta_\eps)-\dot Z_p^+(0)| + |\dot
Z_p^+(0)-\dot
   Z_p^+(\beta_\eps)|\,.
\end{align}
Now in each line the last term on the right hand side goes to zero by smoothness
of $\gamma^+$, while for the respective first terms we have by
our choice of data \eqref{eq:Xdata_ro}, \eqref{eq:limconst}
\begin{align}\label{eq:convdata}\nonumber
 |V_\eps(\beta_\eps)-V^+(0)|&=|V_\eps(\beta_\eps)-{\sf B}|\to 0\,,
\\
 |\dot V_\eps(\beta_\eps)-\dot V^+(0)|&=|\dot V_\eps(\beta_\eps)-{\sf C}|\to 0\,,
\\\nonumber
|\dot Z_{p\eps}(\beta_\eps)-\dot Z_p^+(0)|&=|\dot Z_{p\eps}(t_\eps)-{\sf A_p}|\to 0
\,.
\end{align}
\medskip

Finally to prove the distributional convergence in (iii) thanks to
the uniform convergence of $V_\eps$ established above we
only have to consider the integral
\begin{equation}
 \int_{\alpha_\eps}^{\beta_\eps} \big(V_\eps(s)-V(s)\big)\varphi(s) \,\d s
\end{equation}
for a test function $\varphi$ on $\R$. This, however, converges to zero
by the local uniform boundedness of $V_\eps$ established in Proposition
\ref{prop:bounds}(iii). In case of the distributional convergence in (ii) we
argue precisely in the same manner, now using Proposition \ref{prop:bounds}(ii).
\end{proof}

\begin{remark}[Normalisation and constraints in the limit]
The convergence result provided by Theorem \ref{prop:lim1} also
guarantees the preservation of the normalisation, i.e., the
normalisation of the `seed geodesic' $\gamma$, which by Remark \ref{rem:c+n}
carries over to the regularised geodesics $\gamma_\eps$ (and hence to
$\gamma_\eps^+$), also
carries over to the `future branch' of the `limiting geodesic' $\gamma^+$. To see this we
just write
\begin{align}
 e\ &=\ g_{\eps}(\gamma_\eps(\beta_\eps))\big(\dot\gamma_\eps(\beta_\eps),\dot\gamma_\eps(\beta_\eps)\big)
     = g\big(\dot\gamma_\eps(\beta_\eps),\dot\gamma_\eps(\beta_\eps)\big)
 \nonumber\\
   &=\ -2\dot U_\eps(\beta_\eps)\dot V_\eps(\beta_\eps)
        +(\dot{Z}_{2\eps}(\beta_\eps))^2+(\dot{Z}_{3\eps}(\beta_\eps))^2+\sigma (\dot{Z}_{4\eps}(\beta_\eps))^2
 \\\nonumber
   & \to\, -2\dot U^0 {\sf C} + {\sf A_2}^2+{\sf A_3}^2+\sigma {\sf A_4}^2 =
    g\big(\dot\gamma^+(0),\dot\gamma^+(0)\big).
\end{align}
Here the first equality follows from Remark \ref{rem:c+n} and the second one follows from the fact, that
at $\gamma_\eps(\beta_\eps)$ the regularised metric agrees with the (constant) background metric.
Finally, convergence is due to Thm.\ \ref{prop:lim1}(i) and our choice of the
data \eqref{eq:Xdata_ro},
\eqref{eq:limconst}.

Also by a similar (actually simpler) argument the constraints carry over from $\gamma_\eps$ to
$\gamma^+$, which again confirms consistency of our construction.
\end{remark}

To end this section and the entire paper we now explicitly evaluate the limits
in \eqref{eq:limconst}, thereby
showing that the `limiting geodesics' \eqref{eq:limitgeos} of the smooth global
geodesics $\gamma_\eps$ of Theorem \ref{thm:global} in the
regularised spacetime \eqref{5ipp}, \eqref{const2} coincide with the geodesics \eqref{d'-geo-eq}
of the distributional spacetime \eqref{classical}, \eqref{const} derived previously in \cite{PO:01}.

\begin{figure}[h]
\begin{center}
\def\svgwidth{0.73\textwidth} %Guckstu: Macht Bild auf 60% von die Textbreite.
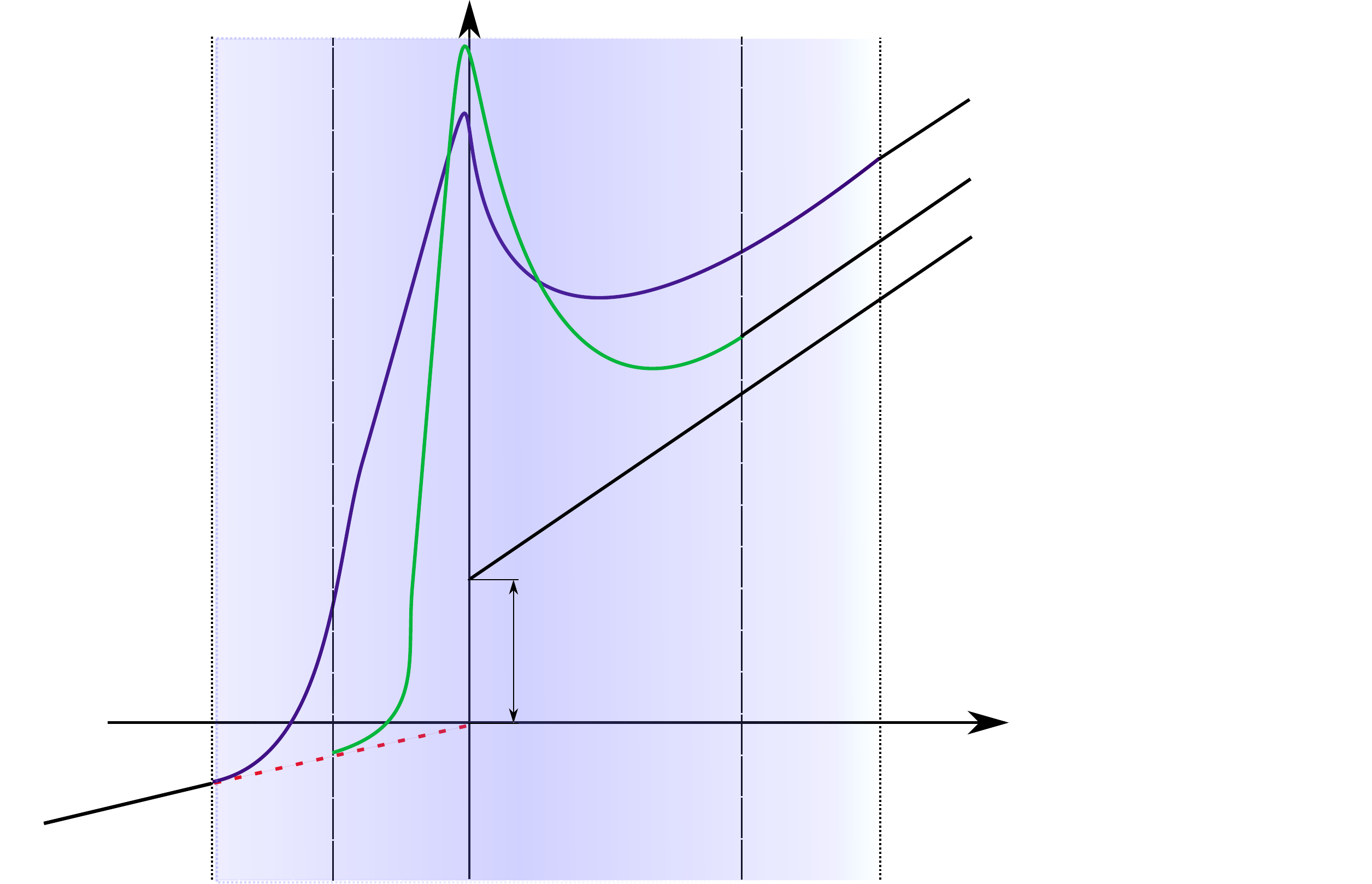
\caption{The $V$-components of $\gamma_{\eps_1}$ (purple) and $\gamma_{\eps_2}$
 (green) for $\eps_1>\eps_2$ are depicted. They converge to
 the  the `limiting geodesic' $\tilde\gamma$ whose `future branch' $\gamma^+$
 is separated from its `past branch' $\gamma$ (black outside and dotted red
 inside the regularisation sandwich) by the `jump' ${\sf B}$ calculated in Proposition
\ref{prop:explicit_limits}. The `jump' ${\sf C}$ in
 $\dot V$ is indicated by the different $V$-slopes  of the `past branch'
 $\gamma$ and the `future branch' $\gamma^+$.}
 \label{fig3}
\end{center}
\end{figure}

\begin{proposition}\label{prop:explicit_limits}
The constants ${\sf A_p}$, ${\sf B}$, and ${\sf C}$ determining the data for the `future branch'
$\gamma^+$ of the `limiting geodesics' $\tilde\gamma$ are explicitly given by
\begin{align}\label{eq:ro:consistency}\nonumber
 {\sf A_i} =\lim_{\eps\to 0}\dot Z_{i\eps}(\beta_\eps)\
 &=\ \frac{1}{2}\,\dot U^0\,
         \Bigl(\ H_{,i}(Z_r^0) + \frac{Z_i^0}{\sigma a^2}\big(H(Z_r^0)
          - \delta^{pq}Z_p^0H_{,q}(Z_r^0) \big)\Bigr)+\dot Z^0_i\,,\\
 {\sf A_4} =\lim_{\eps\to 0}\dot Z_{4\eps}(\beta_\eps)\
 &=\ \frac{1}{2}\,\dot U^0\,
         \Bigl(\sigma H_{,4}(Z_r^0) + \frac{Z_4^0}{\sigma a^2}\big(H(Z_r^0) -
 \delta^{pq}Z_p^0H_{,q}(Z_r^0) \big)\Bigr)+\dot Z^0_4 \,, \nonumber\\
 {\sf B}=\lim_{\eps\to 0}V_\eps(\beta_\eps)\
 &=\  \frac{1}{2}H(Z^0_p)+V^0 \,,\\
\nonumber
 {\sf C}:=\lim_{\eps\to 0}\dot V_\eps(\beta_\eps)\
&= \dot{V}^0+
\frac{\dot{U}^0}{8} \bigg(H_{,2}(Z_r^0)^2 + H_{,3}(Z_r^0)^2 +\sigma H_{,4}(Z_r^0)^2 \\ \nonumber
& \qquad\qquad\qquad\qquad\quad +
\frac{1}{\sigma a^2} H(Z_r^0)^2 - \frac{1}{\sigma a^2}\left(\delta^{pq}Z_p^0
H_{,q}(Z_r^0)\right)^2\bigg)\\\nonumber
&\qquad -\frac{\dot{U}^0}{2 \sigma
a^2}\left(\delta^{pq}Z_{p}^0H_{,q}(Z_r^0) -
H(Z_r^0)\right)V^0 + \frac{1}{2}\delta^{pq}H_{,p}(Z_r^0)\dot{Z}_q^0\,. \\\nonumber
\end{align}
\end{proposition}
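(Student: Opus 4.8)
The plan is to verify the four identities in \eqref{eq:ro:consistency} by returning to the integral form of the geodesic equations \eqref{eq:geos} on the regularisation strip $[\alpha_\eps,\beta_\eps]$ and letting $\eps\to0$. Concretely, one writes $\dot Z_{p\eps}(\beta_\eps)=\dot Z^0_{p\eps}+\int_{\alpha_\eps}^{\beta_\eps}\ddot Z_{p\eps}\,\d t$, $\dot V_\eps(\beta_\eps)=\dot V^0_\eps+\int_{\alpha_\eps}^{\beta_\eps}\ddot V_\eps\,\d t$, and (after a further integration and Fubini's theorem) $V_\eps(\beta_\eps)=V^0_\eps+\dot V^0_\eps(\beta_\eps-\alpha_\eps)+\int_{\alpha_\eps}^{\beta_\eps}(\beta_\eps-t)\,\ddot V_\eps\,\d t$, and then substitutes the right-hand sides of \eqref{eq:geos}. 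The data terms converge, $\dot Z^0_{p\eps}\to\dot Z^0_p$, $\dot V^0_\eps\to\dot V^0$, $V^0_\eps\to V^0$, by \eqref{eq:dataconv}, and since $\beta_\eps-\alpha_\eps=O(\eps)$ (Lemma \ref{lem:gamma}) every contribution whose integrand is uniformly bounded in $\eps$---in particular all the background terms proportional to $e$, and all terms carrying a compensating factor $U_\eps=O(\eps)$ that is not differentiated---disappears in the limit.

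For the genuinely singular contributions, i.e.\ those carrying $\delta_\eps(U_\eps)$ or $\delta'_\eps(U_\eps)$ undifferentiated, the key manoeuvre is the substitution $r=U_\eps(t)$, legitimate because $U_\eps$ is strictly increasing on $[\alpha_\eps,\beta_\eps]$ with $\dot U_\eps$ bounded away from $0$ by the definition of the solution space $\mathfrak X_\eps$ (see \eqref{space}). This turns such integrals into $\int_{-\eps}^{\eps}(\cdots)\,\delta_\eps(r)\,\d r$ or $\int_{-\eps}^{\eps}(\cdots)\,\delta'_\eps(r)\,\d r$, and integration by parts converts the latter into the former (the boundary contributions vanish since $\delta_\eps(\pm\eps)=0$ for a smooth $\rho$ supported in $[-1,1]$). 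Using $U_\eps\to\tilde U$ in $\Con^1$ (Theorem \ref{prop:lim1}), so that $\dot U_\eps\to\dot U^0$ uniformly on the strip, $Z_{p\eps}\to Z^0_p$ uniformly on the strip (from \eqref{eq:Zunif}), the uniform bounds on $\ddot U_\eps$, $\dot Z_{p\eps}$, $V_\eps$ from Proposition \ref{prop:bounds}, and $\int\delta_\eps=1$, one finds that each such integral either converges to the value of its non-singular factor at $(Z^0_p,\dot U^0)$ or is $O(\eps)$ and vanishes. This already yields ${\sf A_i}$, ${\sf A_4}$ and the leading term $\tfrac12H(Z^0_p)$ of ${\sf B}$, and exhibits these limits as independent of the mollifier $\rho$.

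The delicate point---and where essentially all of Appendix \ref{B} is spent---is ${\sf C}$: several of the surviving integrals still carry the factors $\dot Z_{p\eps}$ or $V_\eps$ underneath a $\delta_\eps$, and these do \emph{not} converge to constants across the strip; they perform an $O(1)$ transition from their incoming to their outgoing values. The remedy is to feed in their leading asymptotics, obtained by integrating the $Z_p$- and $V$-equations themselves: uniformly on the strip one has $V_\eps(t)=V^0+\tfrac12H(Z^0_r)\,w_\eps(t)+o(1)$ and $\dot Z_{p\eps}(t)=\dot Z^0_p+({\sf A_p}-\dot Z^0_p)\,w_\eps(t)+d_p\,U_\eps(t)\delta_\eps(U_\eps(t))+o(1)$, where $w_\eps(t):=\int_{-\eps}^{U_\eps(t)}\delta_\eps$ is a monotone profile rising from $0$ to $1$ and the $d_p$ are explicit constants. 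The $w_\eps$-terms are then handled by the substitution $w=w_\eps(t)$, $\d w=\delta_\eps\,\d t$, which reduces the residual $\eps$-dependence to the $\rho$-independent moments $\int_0^1w^k\,\d w$ and produces the coefficients $\tfrac12$, $\tfrac18$, $\tfrac1{8\sigma a^2}$ appearing in \eqref{eq:ro:consistency}; the ``boundary'' pieces $d_p\,U_\eps\delta_\eps$ vanish at $t=\beta_\eps$ (hence do not affect the ${\sf A_p}$), but once integrated against a further $\delta_\eps$ they contribute terms proportional to $\int_{-1}^1 y\,\rho(y)^2\,\d y$.

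I expect the main obstacle to be exactly this last bookkeeping in the ${\sf C}$-computation. Because the ``$\delta'_\eps$ not multiplied by $U_\eps$'' term renders $\ddot V_\eps$ of order $\eps^{-2}$ and $\dot V_\eps$ of order $\eps^{-1}$ inside the strip, and because the term $\bigl(e+\tfrac12\dot U_\eps^2\tilde G_\eps-\dot U_\eps(H\delta_\eps U_\eps)\dot{}\bigr)\frac{V_\eps+H\delta_\eps U_\eps}{\sigma a^2-U_\eps^2H\delta_\eps}$ is an honest product of singular objects of total order $\eps^{-1}$ integrated over an interval of length $O(\eps)$, this term survives the limit and is the origin of the $V^0$-term in ${\sf C}$; one must expand the denominator as in \eqref{eq:estdenom}, integrate each $\delta'_\eps$-piece of $\tilde G_\eps$ by parts, and---the genuinely non-obvious step---verify that all the a priori regularisation-dependent quantities (notably $\int_{-1}^1 y\,\rho(y)^2\,\d y$, which appears individually in several of the pieces) cancel, so that ${\sf C}$ is indeed independent of $\rho$. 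Once ${\sf A_p}$, ${\sf B}$, ${\sf C}$ are assembled, a comparison with \eqref{d'-geo-eq}, recalling the formal identities $\delta(U)U=0$ and $\delta'(U)U=-\delta(U)$, identifies the ``limiting geodesics'' with the matching conditions obtained heuristically in \cite{PO:01}.
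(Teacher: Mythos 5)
Your plan follows essentially the same route as the paper's Appendix \ref{B}: the integral form of \eqref{eq:geos} over the strip $[\alpha_\eps,\beta_\eps]$, the substitution $r=U_\eps(t)$ (licit since $\dot U_\eps\geq\tfrac12\dot U^0$ on $\mathfrak{X}_\eps$), integration by parts of the $\delta'_\eps$-terms, and---for ${\sf C}$---re-inserting the integrated $Z_p$- and $V$-equations under the $\delta_\eps$, which is exactly your $w_\eps$-profile device (the paper phrases it via iterated-integral identities such as $\int_{-\eps}^{\eps}\int_{-\eps}^{s}\delta_\eps(r)\,\dd r\,\dd s=\tfrac12$, equivalent to your moments of $w$, and it likewise meets and cancels the regularisation-dependent quantity $\int_{-1}^{1}y\,\rho(y)^2\,\dd y$). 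The only nitpick is notational: $\dd w=\delta_\eps(U_\eps)\dot U_\eps\,\dd t$ rather than $\delta_\eps\,\dd t$, but since $\dot U_\eps\to\dot U^0$ uniformly on the strip this changes nothing in your argument.
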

The calculation is rather technical and we sketch the main
points in Appendix \ref{B}.
Finally we remark that our results are fully
compatible with the ones in \cite{PO:01}:

\begin{remark}
To simplify the comparison of the results on the `limiting geodesics' of
Proposition \ref{prop:explicit_limits} with the heuristically derived geodesics
of the impulsive wave spacetime of \cite[eqs.\ (38),(39)]{PO:01} we remark that
there the geodesics were restricted to $V^0=\dot{Z}^0=0$, and $\dot{U}^0=1$.
Moreover recalling that $1/(\sigma a^2)= \Lambda/3$ and using the notations $G(0)=G(Z_p^0)=
\delta^{pq}Z_p^0 H_{,q}(Z_r^0)-H(Z_r^0)$ and $H(0)=H(Z^0_p)$ of \cite{PO:01},
equations \eqref{eq:ro:consistency} take the form
 \begin{align}\label{eq:ro:final}\nonumber
 {\sf A_i}&=\ \frac{1}{2}\bigg(H_{,i}(0)-\frac{\Lambda}{3}Z^0_i\,G(0)\bigg),\quad
 {\sf A_4}=\frac{1}{2}\bigg(\sigma H_{,4}(0)-\frac{\Lambda}{3}Z^0_4\,G(0)\bigg),
 \quad {\sf B}=\frac{1}{2}H(0),\\
 {\sf C}&=\ \frac{1}{8} \bigg( H_{,2}(0)^2 + H_{,3}(0)^2 + \sigma H_{,4}(0)^2 + \frac{\Lambda}{3}H(0)^2
  - \frac{\Lambda}{3}\left(
  \delta^{pq}Z_p^0 H_{,q}(0)\right)^2 \bigg)\ +\ \dot{V}^0.
 \end{align}
 Finally taking into account that $\epsilon=\mathrm{sign}(\Lambda)$ in \cite{PO:01} as well as
 the slightly different definition of ${\sf C}$ in \cite[eqs.\ (37)]{PO:01} we see that
 \eqref{eq:ro:final} indeed agrees with eqs.\ (38) of \cite{PO:01}.
\end{remark}

\section*{Summary}

In this paper we have rigorously investigated all geodesics in the entire class
of nonexpanding impulsive
gravitational waves propagating in (anti-)de Sitter universe, extending thus
many previous studies of geodesic motion in the class of impulsive
\emph{pp}-wave spacetimes with vanishing cosmological constant. Following
\cite{PO:01} we employed the \emph{distributional form of the metric in the
context of a $5$-dimensional embedding formalism}. We have applied a
regularisation technique, replacing the Dirac-$\delta$ by a general class of
smooth functions---the model delta nets \eqref{eq:delta}. Since we have never used any special property
of the regularising net, our results are completely \emph{regularisation
independent} within this class. In physical terms this means that the formal distributional form of the
impulsive metric \eqref{classical}, \eqref{const} is understood as a limit of a
family of spacetimes with ever shorter but ever stronger sandwich gravitational
waves \eqref{5ipp}, \eqref{const2} of an \emph{arbitrary smooth profile
$\delta_\eps$}.

Although the resulting regularized geodesic equations \eqref{eq:geos} form a
highly complicated coupled system, we were able to prove in Section \ref{sec3}
the existence and uniqueness of geodesics crossing the wave impulse, leading to
completeness results, see Theorems~\ref{thm:ex},~\ref{thm:global} and
\ref{thm:compl_plus},~\ref{thm:compl_minus}. Observe that, in particular, we prove that the geodesics of the
regularised spacetime hitting the wave zone and hence interacting
\emph{nonlinearly} with the impulse actually cross it. This is a physical
information not provided by the approach of \cite{PO:01} in which this feature
is built into the heuristic ansatz.

Our proof is based on the application of a fixed point theorem, the technical details of which can be
found in Appendix~\ref{A}. There we have extended the
range of applicability of this kind of fixed point techniques, originating in
\cite{KS:99a} and generalised in \cite{SS12}, to a far more involved situation
where the higher order `contraction estimate' contains a term
proportional to $1/\eps$, cf.\ \eqref{eq:propA5}. In this way we have
pushed on the crucial point of the method to the estimate \eqref{eq:iint3}, cf.\
the remark in the middle of page \pageref{p:24} below the proof of Proposition \ref{self}. This raises
hopes that these techniques can be extended to the even wilder `$4$-dimensional'
distributional form of the metric \eqref{confppimp} in the future.

In Section~\ref{sec4} we studied boundedness properties of the
global geodesics in the regularized spacetimes. These technical results,
summarized in Proposition~\ref{prop:bounds}, were essential for performing their
limits in the final Section \ref{sec5}. Due to the complexity of
the system  of geodesic equations \eqref{eq:geos} it seemed advisable to
simplify the `usual arguments' in this limiting procedure. We have done so by
abstracting from the concrete form of the (limiting) geodesics and repeatedly
using continuous dependence of solutions of ODEs on its data.
In this way we were able to show that, as the regularisation
parameter goes to zero, the solutions of the regularised geodesic equation
converge to unique geodesics of the background (anti-)de Sitter spacetime
(Theorem~\ref{prop:lim1}) which have to be matched appropriately across the
impulse. In fact, we rigorously derived the explicit form of these matching
conditions (some overly technical related calculations are contained in Appendix
\ref{B}). The resulting coefficients \eqref{eq:ro:consistency} of
Proposition~\ref{prop:explicit_limits} fully agree with previous results derived
by a heuristic approach in~\cite{PO:01}. Remarkably, the impulsive limit is
completely independent of the specific regularization, i.e., in the limit
${\varepsilon\to0}$ it is \emph{the same for any smooth profile} of the sandwich
gravitational waves.

Finally, as mentioned in the introduction in~\cite{PSSS:15}, we have recently
investigated the complete family of nonexpanding impulsive gravitational waves
propagating in spaces of constant curvature (Minkowski, de Sitter and anti-de
Sitter
universes) employing the (Lipschitz) \emph{continuous form of the
metric}\,~\eqref{conti}. Using Filippov's
solution concept for differential equations with discontinuous right hand side
we proved existence and uniqueness
of continuously differentiable geodesics. In section~4 of~\cite{PSSS:15} we
explicitly derived such geodesics using a $\Con^1$-matching procedure resulting in
specific matching conditions, namely equations~(4.4)--(4.10) of~\cite{PSSS:15}.

A natural question thus arises about the mutual consistency of the two results,
both obtained in a rigorous way but starting from two different forms of the
metric, namely the continuous form of the metric (employed in~\cite{PSSS:15})
and the distributional form of the metric in the context of the $5$-dimensional
embedding formalism (employed here and in \cite{PO:01}). In fact, it was shown
in the recent work~\cite{Kar:15} that the \emph{matching conditions} of
\cite{PSSS:15} and \cite{PO:01} \emph{are fully equivalent} when appropriate
coordinate transformations are applied.
This result confirms that both our approaches are consistent. It follows that
the understanding of geodesics in the complete family of spacetimes with
nonexpanding impulsive gravitational waves and any cosmological constant now
rests on firm mathematical grounds. 

These results set the stage for a sound  mathematical
analysis of the `discontinuous coordinate transformations' between the
continuous and the distributional forms of the metric. Together with the
results of \cite{Kar:15} is seems now feasible to rigorously relate the 
continuous form of the metric \eqref{conti} to the `$5$-dimensional'
distributional form \eqref{classical}, \eqref{const}. On the other hand the
technical advances of the fixed point techniques made here might eventually
bring into reach a direct approach on the mathematical intricacies of the
transformation \eqref{trans}.

\section*{Acknowledgement}
We thank Robert \v{S}varc and Milena Stojkovi\'c for taking part in our
discussions in the early stages of this project. C.S. and R.S. were supported by
projects P23714 and P25326 of the Austrian Science Fund (FWF). A.L. was
supported by a 2013 Uni:doc grant of the University of Vienna. J.P. was
supported by the research grant GA\v{C}R~P203/12/0118.

\begin{appendix}
\section{The fixed point argument}\label{A}
In this appendix we detail the fixed point argument used to prove a suitable
existence and uniqueness result for solutions of the regularised
geodesic equations \eqref{eq:geos} with data \eqref{eq:dataeps} that additionally
guarantees the solutions to live long enough to leave the regularisation sandwich.

To do so, we only have to prove existence of the $U_\eps$ and
$Z_{p\eps}$-components, since the equation for $V_\eps$ decouples and is linear,
hence can then be solved on the domain of existence of ($U_\eps$, $Z_{p\eps}$).
Moreover, the sign-difference between the $Z_{i\eps}$-equations and the $Z_{4\eps}$-equation
can safely be ignored in the estimates leading to the fixed point argument.
Therefore, in this appendix, we only (have to) deal with the following
\emph{simplified model system}
\begin{align}\label{eq:math}
\ddot{u}_\eps
&=-\Big( e + \frac{1}{2}\,\dot{u}_\eps^2\,\tilde{G_\eps}%(U_\eps,Z_p^\eps)
           - \dot{u}_\eps\,\big(H%(Z_p^\eps)
           \,\delta_{\eps}%(U_\eps)
           \,u_\eps\dot{\big)}\Big)\
           %\tilde F_\eps^{-1}(U_\eps,Z_p^\eps)\,
           \frac{u_\eps}{\sigma a^2-u_\eps^2H\delta_\eps}\,, \nonumber\\
 \ddot{z}_\eps-\frac{1}{2}DH\,\delta_{\eps}\dot{u}_\eps^2
&=-\Big(e + \frac{1}{2}\,\dot{u}_\eps^2\,\tilde{G}_\eps
          - \dot{u}_\eps\,
            \big( H\, \delta_\eps\, u_\eps \dot{\big)}\Big)\
   %\tilde F_\eps^{-1}(U_\eps,Z_p^\eps)\,
   \frac{z_\eps}{\sigma a^2-u_\eps^2H\delta_\eps}\,,
\end{align}
where
$H=H(z_\eps)$ is a smooth function on $\R^3$, $DH$ denotes its gradient, and
$\tilde{G}_\eps(u_\eps,z_\eps):=DH(z_\eps)\,\delta_{\eps}(u_\eps)\,z_\eps +
H(z_\eps)\,\delta'_\eps(u_\eps)\,u_\eps$. We will also frequently use the notation
$x_\eps=(u_\eps,z_\eps)$.
\medskip

We begin by setting up the \emph{initial data}. Let $\eta>0$ and let
$J_\eps=[\alpha_\eps,\alpha_\eps+\eta]$ be the parameter interval where we look
for solutions. In accordance with the strategy employed in Section \ref{sec3},
we will pose initial data at $t=\alpha_\eps$ and compare it to fixed data
(corresponding to the initial data of the `seed geodesic' at $t=0$). So let
 \begin{align}\label{eq:zdata_eps}\nonumber
 &x_\eps^0 =(u_\eps^0 ,z_\eps^0)\in \R\times\R^3\quad \text{and}\quad
 \dot{x}_\eps^0:=(\dot{u}_\eps^0,\dot{z}_\eps^0)\in\R\times\R^3\
 \text{be given and set}\\
 &x_\eps(\alpha_\eps)=(u_\eps(\alpha_\eps),z_\eps(\alpha_\eps))=(u_\eps^0 ,z_\eps^0)
 \quad \text{and}\quad
 \dot x_\eps(\alpha_\eps)=(\dot u_\eps(\alpha_\eps),\dot z_\eps(\alpha_\eps))=(\dot u_\eps^0 ,\dot z_\eps^0)
\end{align}
and let additionally
\begin{equation}\label{eq:zdata}
 \text{$u^0,\dot{u}^0\in\R$, $z^0, \dot{z}^0\in\R^3$ be given and write $x^0 := (u^0,z^0)$,
$\dot{x}^0:=(\dot{u}^0,\dot{z}^0)$}\,.
\end{equation}

As detailed in Section \ref{sec3} we exclusively deal with data satisfying
\begin{align}\label{eq:udata}
  &u_\eps^0=-\eps,\ \dot{u}_\eps^0>0 \ \text{and}\ u^0=0,\ \dot{u}^0>0\,,\
  \text{with the additional assumption (cf.\ \eqref{eq:dataconv})}
 \nonumber\\
  &\text{$x^0_\eps \to x^0$ and  $\dot{x}^0_\eps \to \dot{x}^0$ as
   $\eps\to 0$.}
\end{align}

We will apply our fixed point argument on a \emph{complete metric space}
which we will call the `solution space' and which is given as the closed subset
of $\mathcal{C}^1(J_\eps,\R^4)$ defined by
\begin{align}\label{space}\nonumber
 \mathfrak{X}_\eps := \Big\{ x_\eps=(u_\eps,z_\eps)\in \mathcal{C}^1(J_\eps,\R^4):\
 &x_\eps(\alpha_\eps)=x_\eps^0,\ \dot{x}_\eps(\alpha_\eps)=\dot{x}_\eps^0\ \text{and}\\
 &\|x_\eps-x^0\|_\infty\leq C_1,\ \|\dot{z}_\eps-\dot{z}^0\|_\infty\leq C_2,\
 \dot{u}_\eps\in\Big[\frac{1}{2}\dot{u}^0,\frac{3}{2}\dot{u}^0\Big]\Big\}\,.
\end{align}
Observe that we have `centred' the functions in $\mathfrak{X}_\eps$ around the
`fixed' initial data \eqref{eq:zdata}, while the prospective solutions are required
to assume the $\eps$-dependent data \eqref{eq:zdata_eps} at $t=\alpha_\eps$.
Also note that the final condition forces $\dot u_\eps$ to stay positive, which
is the essential ingredient that forces the solutions to leave the regularisation sandwich.
Now we arrange the constants as follows: First
let $C_1>0$ and set
\begin{align}\label{cond:C2}\nonumber
 C_2:= 1 + \max & \bigg\{ 9\dot{u}^0\|DH\|_\infty\|\rho\|_1,
 \left.\ \, \frac{36}{a^2} \dot{u}^0 (|z^0|+C_1)
\Big(3\|DH\|_\infty\|\rho\|_\infty(|z^0|+C_1)+\|H\|_\infty\|\rho'\|_\infty\Big), \right.\\
&\quad \frac{48}{a^2}(|z^0|+C_1)\left(1 + \frac{3}{2}\dot u^0\|H\|_\infty\Big(\|\rho'\|_\infty
+ \|\rho\|_\infty\Big)\right)\,
\bigg\},
\end{align}
where $\|H\|_\infty$ and $\|DH\|_\infty$ are
taken over the closed Euclidean ball $B_{C_1}(z^0)$. Also $\rho$ is as in \eqref{eq:delta}.
Observe that the space $\mathfrak{X}_\eps$ only depends on $\eps$ via the domain $J_\eps$ and the initial data
\eqref{eq:zdata_eps}.

Next we define the \emph{solution operator} $A_\eps$ acting on $\mathfrak{X}_\eps$ for all $t\in J_\eps$ via
$A_{\veps}(x_\eps)(t):=(A^1_{\veps}(x_\eps)(t)$, $A^2_{\veps}(x_\eps)(t))$ with
\begin{align}\label{eq:def_A}
A^1_\eps(x_\eps)(t)
 &:=-\int_{\alpha_\eps}^t\int_{\alpha_\eps}^s
  \frac{eu_\eps+\tfrac{1}{2}u_\eps\dot{u}_\eps^2\tilde{G}_\eps
          -u_\eps\dot{u}_\eps\left(H\delta_{\veps}u_\eps\right)\dot{}}
       {\sigma a^2-u_\eps^2H\delta_{\veps}}\,\dd r \,\dd s\
  + \dot u_\eps^0(t-\alpha_\eps)-\eps\,,\\
A^2_\eps(x_\eps)(t)
 &:=\int_{\alpha_\eps}^t\int_{\alpha_\eps}^s \left(\,
  \frac{1}{2}DH\delta_\veps\dot{u}_\eps^2
  -\frac{ez_\eps +\tfrac{1}{2}z_\eps\dot{u}_\eps^2\tilde{G}_\eps
              -z_\eps\dot{u}_\eps\left(H\delta_{\veps} u_\eps\right)\dot { }}
   {\sigma a^2-u_\eps^2H\delta_{\veps}}\,\right)\,\dd r \,\dd s
 +\dot{z}_\eps^0(t-\alpha_\eps)+z_\eps^0\,,\nonumber
\end{align}
where again we have suppressed the dependence of $\delta_\eps$, $\tilde G_\eps$
and $H$ as well as their derivatives on the variables.

Our first step will be to show that the operator $A_\eps$ takes $\mathfrak{X}_\eps$ to
$\mathfrak{X}_\eps$, see Proposition \ref{self}
below. We begin with two preliminary results. First we bound the term in the
denominator of $A_\eps$ from below.

\begin{lemma}\label{lem:denom}
Suppose $z\in B_{C_1}(z^0)$ then for all $u\in\R$
\begin{equation}
\left|\ \frac{1}{\sigma a^2-u^2\,H(z)\,\delta_{\eps}(u)}\ \right|\ \leq\
\frac{2}{a^2}\,,
\end{equation}
provided $\veps\leq a^2 / (2 \|\rho\|_\infty \|H\|_\infty)$.

 \end{lemma}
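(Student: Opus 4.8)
The plan is to estimate the denominator $\sigma a^2 - u^2 H(z)\delta_\eps(u)$ from below in absolute value by controlling the subtracted term $u^2 H(z)\delta_\eps(u)$. The key observation is that the function $x\mapsto x^2\delta_\eps(x)$ is uniformly small: writing $\delta_\eps(x)=\frac1\eps\rho(x/\eps)$, we have $x^2\delta_\eps(x)=\frac{x^2}{\eps}\rho(x/\eps)$, and since $\rho$ is supported in $[-1,1]$, this vanishes unless $|x|\le\eps$, in which case $x^2\le\eps^2$ and hence $|x^2\delta_\eps(x)|\le \eps\,\|\rho\|_\infty$. Combined with $|H(z)|\le\|H\|_\infty$ for $z\in B_{C_1}(z^0)$ (recall the sup-norms of $H$ and $DH$ are taken over this ball), we get $|u^2 H(z)\delta_\eps(u)|\le\eps\,\|\rho\|_\infty\,\|H\|_\infty$.

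First I would record this bound $\sup_{u\in\R}|u^2\delta_\eps(u)|\le\eps\|\rho\|_\infty$. Then, under the hypothesis $\eps\le a^2/(2\|\rho\|_\infty\|H\|_\infty)$, the subtracted term satisfies $|u^2 H(z)\delta_\eps(u)|\le a^2/2$. By the reverse triangle inequality,
\begin{equation*}
 |\sigma a^2 - u^2 H(z)\delta_\eps(u)|\ \ge\ |\sigma a^2| - |u^2 H(z)\delta_\eps(u)|\ \ge\ a^2 - \tfrac{a^2}{2}\ =\ \tfrac{a^2}{2}\,,
\end{equation*}
using $|\sigma|=1$. Taking reciprocals gives the claimed bound $2/a^2$.

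There is essentially no obstacle here; the only point requiring a moment's care is the elementary claim that $x^2\delta_\eps(x)$ is globally bounded by $\eps\|\rho\|_\infty$ on all of $\R$ (not just on a compact set), which follows from the support condition on $\rho$ as indicated above. I would spell this out in one line and then conclude.
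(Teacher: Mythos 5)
Your proof is correct and follows essentially the same argument as the paper: the paper splits into the cases $|u|>\eps$ and $|u|\le\eps$, while you unify them via the global bound $\sup_u|u^2\delta_\eps(u)|\le\eps\|\rho\|_\infty$, but the substance — using the support of $\rho$ to bound the perturbing term by $\eps\|\rho\|_\infty\|H\|_\infty\le a^2/2$ and then estimating the denominator from below — is identical.
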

\begin{proof}
First, in case $|u|>\veps$ we have $u\notin \supp(\delta_\veps)$ and
consequently $\displaystyle \frac{1}{|\sigma a^2-u^2H\delta_{\veps}|}=
\frac{1}{a^2}$.
Second, in case  $|u|\leq\veps$ we have $|u^2H(z)\delta_\eps(u)|%=|\frac{1}{\veps}u^2H(z)\rho(t/\veps)|
\leq \veps \|H\|_\infty\|\rho\|_\infty\leq a^2/2$
and therefore in both cases
\begin{align*}
\frac{1}{|\sigma a^2-u^2H\delta_{\veps}|}\leq \frac{2}{a^2}\,.
\end{align*}
\end{proof}

The second preliminary result shows that the conditions on $\dot u_\eps$ imposed in
\eqref{space}, i.e., that $\dot u_\eps\geq \dot u^0/2$, prevents
the $u$-component from slowing down too much in the sense that $u_\eps(t)$ leaves the
sandwich region early enough. To state the result in a precise way we define
for $x_\eps=(u_\eps,z_\eps)\in \mathfrak{X}_\eps$ the set
\begin{align}
\Gamma_{\veps}(x_\eps)\equiv\Gamma_{\veps}(u_\eps)
:= \{t\in J_\eps\::\:|u_\eps(t)|\leq \veps \}\subseteq J_\eps\,,
\end{align}
which is the maximal set where the terms in \eqref{eq:math}, \eqref{eq:def_A}
involving $\delta_{\veps}$ or $\delta_{\veps}'$ are
non vanishing. We now have:

\begin{lemma}\label{lem:gamma}
The diameter of $\Gamma_{\veps}(u_\eps)$ is bounded for all $x_\eps=(u_\eps,z_\eps)\in \mathfrak{X}_\eps$ by
\begin{align}\label{eq:diam_gamma}
 \diam \left(\Gamma_{\veps}(u_\eps)\right) \leq \frac{4\veps}{\dot{u}^0}\,.
\end{align}
\end{lemma}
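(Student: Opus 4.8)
The plan is to exploit the single structural feature of the solution space $\mathfrak{X}_\eps$ that matters here, namely the lower bound $\dot u_\eps(t)\ge \tfrac12\dot u^0>0$ for all $t\in J_\eps$ built into the definition \eqref{space}. This forces $u_\eps$ to be strictly increasing on $J_\eps$ with a uniform positive slope, so it cannot linger in the slab $\{|u_\eps|\le\eps\}$ for longer than the time it takes to traverse a window of width $2\eps$ at speed $\tfrac12\dot u^0$.

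Concretely, I would argue as follows. If $\Gamma_\eps(u_\eps)$ is empty or a single point there is nothing to prove, so pick any $t_1,t_2\in\Gamma_\eps(u_\eps)$ with $t_1<t_2$. By the mean value theorem there is $\xi\in(t_1,t_2)$ with
\[
 u_\eps(t_2)-u_\eps(t_1)=\dot u_\eps(\xi)\,(t_2-t_1)\ \ge\ \tfrac12\dot u^0\,(t_2-t_1),
\]
using $\dot u_\eps(\xi)\ge\tfrac12\dot u^0$ from \eqref{space}. On the other hand, since $t_1,t_2\in\Gamma_\eps(u_\eps)$ we have $|u_\eps(t_1)|\le\eps$ and $|u_\eps(t_2)|\le\eps$, hence
\[
 u_\eps(t_2)-u_\eps(t_1)\ \le\ |u_\eps(t_2)|+|u_\eps(t_1)|\ \le\ 2\eps.
\]
Combining the two inequalities gives $t_2-t_1\le 4\eps/\dot u^0$. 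Taking the supremum over all admissible pairs $t_1<t_2$ yields $\diam(\Gamma_\eps(u_\eps))\le 4\eps/\dot u^0$, which is \eqref{eq:diam_gamma}. (One may also note that $\Gamma_\eps(u_\eps)=u_\eps^{-1}([-\eps,\eps])$ is in fact an interval by strict monotonicity of $u_\eps$, so the diameter is simply its length, but the pairwise estimate already suffices.)

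I do not expect a real obstacle here: the estimate is a one-line consequence of the slope constraint imposed in the definition of $\mathfrak{X}_\eps$, and the only points requiring a word of care are the degenerate cases (empty or singleton $\Gamma_\eps$) and making sure the mean value theorem is applied on $J_\eps$, where $\dot u_\eps$ is controlled. The content of the lemma is really that the \emph{design} of $\mathfrak{X}_\eps$ — in particular the window $[\tfrac12\dot u^0,\tfrac32\dot u^0]$ for $\dot u_\eps$ — was chosen precisely so that this bound holds; the proof just reads it off.
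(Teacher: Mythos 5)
Your proof is correct and uses essentially the same idea as the paper: the lower bound $\dot u_\eps\ge\tfrac12\dot u^0$ built into the definition of $\mathfrak{X}_\eps$ forces $u_\eps$ to traverse the slab $\{|u_\eps|\le\eps\}$ in time at most $4\eps/\dot u^0$. The only cosmetic difference is that the paper anchors the estimate at the initial datum $u_\eps(\alpha_\eps)=-\eps$ and bounds the right endpoint of $\Gamma_\eps(u_\eps)$, whereas you compare two arbitrary points of $\Gamma_\eps(u_\eps)$; both arguments are one-line consequences of the slope constraint.
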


\begin{proof}
For $x_\eps\in \mathfrak{X}_\eps$ let $t\in \Gamma_{\veps}(x_\eps)$ which implies
$|u_\eps(t)|\leq\veps$ and so
\begin{equation}
 \veps\geq u_\eps(t) = u_\eps^0 + \int_{\alpha_\eps}^t \dot{u}_\eps(\tau)\,\dd \tau \geq u_\eps^0
 + \frac{1}{2}\dot{u}^0(t-\alpha_\eps)\,.
\end{equation}
But this implies $t\leq \alpha_\eps+2\,(\veps-u_\eps^0)/\dot{u}^0=\alpha_\eps+4\eps/\dot u^0$.
\end{proof}

Now we may state and prove that $A_\eps(\mathfrak{X}_\eps)\subseteq \mathfrak{X}_\eps$ provided $\eta$
is chosen appropriately and $\eps$ is small enough.
\begin{proposition}\label{self}
Set
\begin{align}\label{eq:b}\nonumber
\eta:=\min & \bigg\{
1,\,
\frac{a^2}{24\dot u^0}\,,
\frac{C_1}{\frac{3}{2}+\dot{u}^0}\,,
%\frac{C_1}{108 \dot u^0 \|DH\|_\infty \|\rho\|_\infty}\, ??? ,%\\&\left.
\frac{2C_1}{54\|\rho\|_1\|DH\|_\infty \dot u^0}\,,
\frac{a^2\,C_1}{12(|z^0|+C_1)}\,,
\frac{a^2\,C_2}{8(|z^0|+C_1)}\,,
%\frac{a^2\,C_2}{32(|z^0|+C_1)}\,,%\right.
\\
&\qquad \frac{C_1 a^2}{54}\Big(\dot u^0(|z^0|+C_1)(3\|DH\|_\infty
\|\rho\|_\infty (|z^0|+C_1) +
\|H\|_\infty \|\rho'\|_\infty)\Big)^{-1},
\frac{C_1}{6(1+|\dot z^0|)},
\nonumber \\
%
% \qquad\left.
% \frac{a^2\,C_1}{96(|z^0|+C_1)}\Big(\|DH\|_\infty(|\dot{z}^0|+C_2)\|\rho\|_\infty
% + \|H\|_\infty\|\rho'\|_\infty +
% \|H\|_\infty\|\rho\|_\infty\frac{3}{2}\dot{u}^0\Big)^{-1} \right)\,
&\qquad \frac{C_1a^2}{72}\left(\big(|z^0| +
C_1\big)\Big(3\|DH\|_\infty\|\rho\|_\infty (|\dot{z}^0|+C_2)
   + \frac{3}{2}\dot u^0\|H\|_\infty\big(\|\rho'\|_\infty + \|\rho\|_\infty\big)\Big)\right)^{-1}
   \bigg\}
\end{align}
and
\begin{align}\label{cond:eps00}
 \eps_0' := \min &\bigg\{%1\,,
\frac{a^2}{2\|\rho\|_\infty\|H\|_\infty}\,,\frac{a^2}{72\dot{u}^0}\, \Big(
3\|DH\|_\infty\|\rho\|_\infty(|z^0|+C_1) +
\|H\|_\infty\|\rho'\|_\infty\Big)^{-1}\,,\nonumber\\
&\qquad\frac{a^2}{96}\,\left(3\|DH\|_\infty\|\rho\|_\infty (|\dot{z}^0|+C_2)
   + \frac{3}{2}\dot u^0\|H\|_\infty\Big(\|\rho'\|_\infty + \|\rho\|_\infty\Big)\right)^{-1}\,,\\
&\qquad\big(3\|DH\|_\infty\|\rho\|_\infty(|\dot{z}^0|+C_2)\big)^{-1}\,,
\frac{\eta\dot { u }^0}{6}\,,\eta\bigg\}\,.\nonumber
\end{align}
Now choose $\eps_0$ such that
\begin{align}\label{cond:eps0}
 &0<\eps_0\leq\eps_0'\,,\quad\text{and}\nonumber\\
 &|\dot{u}_\eps^0 -
\dot{u}^0|\leq \frac{1}{8}\,,\quad |z_\eps^0-z^0|\leq \frac{C_1}{6}\,,\quad
\text{and}\quad |\dot{z}_\eps^0 -
\dot{z}^0|\leq 1\quad \text{for all}\quad 0<\eps\leq\eps_0\,.
\end{align}
Then for all $\eps\leq\eps_0$ the operator $A_{\veps}$ maps $\mathfrak{X}_\eps$
to $\mathfrak{X}_\eps$.
\end{proposition}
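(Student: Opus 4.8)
The plan is to check, for an arbitrary $x_\eps=(u_\eps,z_\eps)\in\mathfrak{X}_\eps$, each of the four conditions defining $\mathfrak{X}_\eps$ in \eqref{space} for the pair $A_\eps(x_\eps)=(A^1_\eps(x_\eps),A^2_\eps(x_\eps))$. The anchoring conditions $A_\eps(x_\eps)(\alpha_\eps)=x_\eps^0$ and $\tfrac{\dd}{\dd t}A_\eps(x_\eps)(\alpha_\eps)=\dot x_\eps^0$ are immediate from the double-integral structure of \eqref{eq:def_A} (the double integral and its first derivative vanish at $t=\alpha_\eps$), so all the work is in the three remaining inequalities. The two structural tools are Lemma~\ref{lem:denom}, which bounds the denominators in \eqref{eq:def_A} by $2/a^2$ provided $\eps_0'\leq a^2/(2\|\rho\|_\infty\|H\|_\infty)$, and Lemma~\ref{lem:gamma}, which confines all $\delta_\eps$- and $\delta'_\eps$-terms to the set $\Gamma_\eps(u_\eps)$ of diameter at most $4\eps/\dot u^0$. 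In addition I would repeatedly change variables $\tau=u_\eps(r)$ — legitimate because $\dot u_\eps\geq\dot u^0/2>0$ on $J_\eps$ — to turn integrals $\int\delta_\eps(u_\eps(r))\,\dd r$ and $\int|\delta'_\eps(u_\eps(r))||u_\eps(r)|\,\dd r$ into $\eps$-independent constants controlled by $\|\rho\|_1,\|\rho\|_\infty,\|\rho'\|_\infty$ and $1/\dot u^0$.

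First I would control $\tfrac{\dd}{\dd t}A^1_\eps(x_\eps)$. Differentiating \eqref{eq:def_A} once and writing $\tfrac{\dd}{\dd t}A^1_\eps(x_\eps)(t)-\dot u^0=(\dot u_\eps^0-\dot u^0)-\int_{\alpha_\eps}^t(\cdots)\,\dd r$, split the integrand into the \emph{regular} part $eu_\eps/(\sigma a^2-u_\eps^2H\delta_\eps)$ and the two \emph{singular} parts carrying $\tilde G_\eps$ and $\tfrac{\dd}{\dd t}(H\delta_\eps u_\eps)$. The decisive observation is that in the $u$-equation both singular terms carry an \emph{extra} factor $u_\eps$, which is $\leq\eps$ on the support $\Gamma_\eps(u_\eps)$ of the singular factors; combined with $\diam\Gamma_\eps\leq 4\eps/\dot u^0$ this makes their contribution $O(\eps)$. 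The regular part is bounded by $2|e|C_1t/a^2\leq 2C_1\eta/a^2$. Together with $|\dot u_\eps^0-\dot u^0|\leq\tfrac18$ and the relevant entries of \eqref{eq:b}, \eqref{cond:eps00}, \eqref{cond:eps0} this gives $|\tfrac{\dd}{\dd t}A^1_\eps(x_\eps)-\dot u^0|\leq\dot u^0/2$, i.e.\ $\tfrac{\dd}{\dd t}A^1_\eps(x_\eps)\in[\tfrac12\dot u^0,\tfrac32\dot u^0]$. Integrating once more and adding the affine term $\dot u_\eps^0(t-\alpha_\eps)-\eps$ (using $\eps\leq\eta$, $\eta\leq C_1/(\tfrac32+\dot u^0)$ and the $\eta$-bounds carrying $C_1$ in the numerator) yields $\|A^1_\eps(x_\eps)-u^0\|_\infty\leq C_1$.

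Next the $z$-component. For $\|\tfrac{\dd}{\dd t}A^2_\eps(x_\eps)-\dot z^0\|_\infty\leq C_2$ I differentiate \eqref{eq:def_A} once: besides $|\dot z_\eps^0-\dot z^0|\leq1$ the integrand now contains the \emph{source} term $\tfrac12 DH\,\delta_\eps\,\dot u_\eps^2$ and the singular terms $z_\eps\dot u_\eps^2\tilde G_\eps$, $z_\eps\dot u_\eps\,\tfrac{\dd}{\dd t}(H\delta_\eps u_\eps)$ over the denominator — and crucially \emph{none of these carries the extra $u_\eps$}. The source term integrates (once) to a constant $\lesssim\dot u^0\|DH\|\|\rho\|_1$ by the change of variables above; the two singular terms, although $O(1/\eps)$ pointwise on $\Gamma_\eps$, integrate (once) to $\eps$-independent constants because $\diam\Gamma_\eps=O(\eps)$ compensates exactly. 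All these constants are dominated by $C_2-1$ via the three-way maximum in \eqref{cond:C2} (note $C_2$ depends on $C_1$ but not on itself, so there is no circularity); the only residual sub-term that still involves $C_2$ — the $DH\cdot\dot z_\eps\,\delta_\eps\,u_\eps$ piece of $\tfrac{\dd}{\dd t}(H\delta_\eps u_\eps)$, which \emph{does} carry a factor $u_\eps$ — is $O(\eps)(|\dot z^0|+C_2)$ and is rendered harmless by the entries of \eqref{cond:eps00} involving $|\dot z^0|+C_2$. Integrating once more and adding $\dot z_\eps^0(t-\alpha_\eps)+z_\eps^0$ (using $|z_\eps^0-z^0|\leq C_1/6$, $|\dot z_\eps^0|\eta\leq C_1/6$ from $\eta\leq C_1/(6(1+|\dot z^0|))$, and the two $\eta$-bounds with $C_1a^2$ in the numerator) gives $\|A^2_\eps(x_\eps)-z^0\|_\infty\leq C_1$, completing the verification that $A_\eps(\mathfrak{X}_\eps)\subseteq\mathfrak{X}_\eps$ for $\eps\leq\eps_0$.

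The main obstacle — and the source of the formidable explicit constants in \eqref{cond:C2}, \eqref{eq:b}, \eqref{cond:eps00} — is exactly the first-derivative estimate for $A^2_\eps$ in the previous paragraph: the singular coefficients $\tilde G_\eps$ and $\tfrac{\dd}{\dd t}(H\delta_\eps u_\eps)$ blow up like $1/\eps$, and integrating them only \emph{once} over the $O(\eps)$-set $\Gamma_\eps$ yields merely an $O(1)$ bound, not the $O(\eps)$ one available for the $u$-component. Hence these contributions do not vanish as $\eps\to0$ and must instead be \emph{absorbed} into the large constant $C_2$. The delicate part is therefore the bookkeeping: arranging every numerical factor so that the three entries of the maximum in \eqref{cond:C2} dominate the source-, $\tilde G_\eps$- and $\tfrac{\dd}{\dd t}(H\delta_\eps u_\eps)$-contributions respectively, and then choosing $\eta$ and $\eps_0$ (the latter depending on both $C_1$ and $C_2$) small enough that every remaining term fits under its prescribed bound. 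Once this chain of term-by-term inequalities — organised to mirror the structure of \eqref{eq:def_A} — is in place, the proposition follows.
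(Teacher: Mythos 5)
Your proposal follows essentially the same route as the paper's proof: the same term-by-term verification of the defining conditions of $\mathfrak{X}_\eps$ for $A^1_\eps$ and $A^2_\eps$, using Lemma \ref{lem:denom}, Lemma \ref{lem:gamma}, the substitution $\tau=u_\eps(r)$ (the paper's estimate \eqref{eq:3.25}), the $O(\eps)$ gain from the extra factor $u_\eps$ multiplying the singular terms in the $u$-equation, and the absorption of the non-vanishing $O(1)$ singular contributions in $\frac{\dd}{\dd t}A^2_\eps$ into $C_2$, with the $\eps$-prefactor of the residual $C_2$-dependent term breaking the circularity exactly as in \eqref{eq:iint3}. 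The only small deviation is that you bound the regular term in $\frac{\dd}{\dd t}A^1_\eps$ via $|u_\eps|\leq C_1$, giving $2C_1\eta/a^2$, which does not close against the explicit constants of \eqref{eq:b} for large $C_1$; the paper instead uses $|u_\eps(t)|\leq\frac{3}{2}\dot u^0\eta$, which is what the entry $a^2/(24\dot u^0)$ in \eqref{eq:b} is tailored to.
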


Observe that by \eqref{eq:udata} there exists $\eps_0$, that guarantees the
estimates in \eqref{cond:eps0} to hold.

\begin{proof}
We begin by estimating
\begin{equation}\label{A1}
 \frac{d}{dt}A^1_{\veps}(x_\eps)(t)
=-\int_{\alpha_\eps}^t
\frac{eu_\eps+\frac{1}{2}u_\eps\dot{u}_\eps^2\tilde{G}_\eps-u_\eps\dot{u}_\eps\left(H\delta_{\veps}
u_\eps\right)\dot{}}{ \sigma a^2-u_\eps^2H\delta_{\veps}} \,\dd s+\dot u_\eps^0
\end{equation}
and proceed term by term beginning with the latter two under the integral which we will see to
vanish as $\eps\to 0$. Indeed we have for all $\eps\leq\eps_0$
by the definition of $\mathfrak{X}_\eps$ \eqref{space} and by Lemma \ref{lem:denom}
\begin{align}\label{eq:geps}
&\bigg| \int_{\alpha_\eps}^t \frac{u_\eps\dot{u}_\eps^2%\tilde{G}
           \big(DH \delta_{\eps}z_\eps
           + H\delta'_\eps u_\eps\big)}
           {2\left(\sigma
                   a^2-u_\eps^2H\delta_{\veps}\right)}\dd s \bigg| \nonumber
\\
& \qquad \leq \frac{1}{a^2}\, \diam \left(\Gamma_{\veps}(u_\eps)\right)\,
  \eps\, \Big(\frac{3}{2}\,\dot u^0\Big)^2
%\nonumber \\&\qquad
\left(3\|DH\|_\infty\|\rho\|_\infty\,\frac{1}{\eps}(|z^0|+C_1)
+ \|H\|_\infty\frac{1}{\eps^2}\|\rho'\|_\infty\,\eps
  \right) \\
&\qquad \leq \frac{9\dot{u}^0 }{a^2}\,\eps\,
  \left( 3\|DH\|_\infty\|\rho\|_\infty(|z^0|+C_1) + \|H\|_\infty\|\rho'\|_\infty
  \right) \leq  \nonumber
\frac{1}{8}\,,
\end{align}
where for the second inequality we have used Lemma
\ref{lem:gamma} and for the third that
\begin{equation}
 \eps_0\leq \frac{a^2}{72\dot{u}^0} \left
(3\|DH\|_\infty\|\rho\|_\infty(|z^0|+C_1) +
\|H\|_\infty\|\rho'\|_\infty\right)^{-1}\,.
\end{equation}

Similarly, we have
\begin{align}\nonumber\label{eq:bracket}
\bigg| \int_{\alpha_\eps}^t &\frac{u_\eps\dot{u}_\eps\left(H\delta_{\veps}u_\eps\right)\dot{}}{\sigma
a^2-u_\eps^2H\delta_{\veps}}\dd s \bigg|&\\
&\leq \frac{12}{a^2}\,\eps\, \left(3\eps \|DH\|_\infty\|\rho\|_\infty (|\dot{z}^0|+C_2)
   + \frac{3}{2}\dot u^0\|H\|_\infty\Big(\|\rho'\|_\infty + \|\rho\|_\infty\Big) \right)
   \leq \frac{1}{8}\,,
\end{align}
where the final estimate again follows from our assumptions on $\eps_0$.

Finally to estimate the first term under the integral in \eqref{A1} we write for $u_\eps\in \mathfrak{X}_\eps$
\begin{align}
 u_\eps(t)=u_\eps^0+\int\limits_{\alpha_\eps}^t\dot u_\eps(s)\,\d s\leq -\eps+\eta\,\frac{3}{2}\dot u^0
 \leq \frac{3}{2}\dot u^0 \eta\,.
\end{align}
Since $-\varepsilon\leq u_\eps(t)$ and by the last condition on $\eps_0$ in
\eqref{cond:eps0} we obtain $|u_\eps(t)|\leq \frac{3}{2}\dot u^0\eta$ and hence
\begin{align}
 \bigg|\int_{\alpha_\eps}^t \frac{eu_\eps}{\sigma a^2-u_\eps^2H\delta_{\veps}}\, \dd s \bigg|
 &\leq \frac{2}{a^2}\int_{\alpha_\eps}^t |u_\eps(s)|\ \dd s
 \leq \frac{3}{a^2}\dot u^0 \eta^2  \leq \frac{3}{a^2}\dot u^0 \eta \leq\frac{1}{8}\,,
\end{align}
where we have used that $\eta\leq 1$ and $\eta\leq a^2/(24\dot u^0)$, cf.\
\eqref{eq:b}.
Thus, by $|\dot{u}_\eps^0 - \dot{u}^0|\leq \frac{1}{8}$ we obtain overall
$\|\frac{\dd}{\dd\,t}A^1_{\veps}(x_\eps)-\dot{u}^0\|_\infty\leq \frac{1}{2}$,
i.e., $\frac{\dd}{\dd\,t}A^1_{\veps}(x_\eps)(t)\in [\frac{1}{2}\dot{u}^0,\frac{3}{2}\dot{u}^0]$
for all $t\in J_\eps$.
\medskip

Moreover, using the above estimates, integrating once more and using
$\eps\leq\eta$ we find that
\begin{equation}
\|A_{\veps}^1(x_\eps)-u^0\|_\infty \leq \eps + \eta \frac{3}{8} +  \eta \dot{u}_\eps^0\leq \eta (\frac{3}{2} +  \dot{u}^0)
\leq C_1\,,
\end{equation}
due to the assumption that $\eta\leq C_1/(\frac{3}{2}+\dot{u}^0)$.
\bigskip

Now we turn to the `spatial component' $A^2_\eps$ of the solution operator. We have to show that
\begin{align}\label{eq:A2}
 \|A_{\eps}^2(x_\eps)&-z^0\|_\infty
\nonumber\\
  &=
   \bigg\| \int_{\alpha_\eps}^t\int_{\alpha_\eps}^s \Big(
   \frac{1}{2}DH\delta_\veps\dot{u}_\eps^2
   -\frac{ez_\eps +\tfrac{1}{2}z_\eps\dot{u}_\eps^2\tilde{G}_\eps
               -z_\eps\dot{u}_\eps\big(H\delta_{\veps} u_\eps\big)\dot{ }}
    {\sigma a^2-u_\eps^2H\delta_{\veps}}\,\Big)\,\dd r \,\dd s\
   +\dot{z}_\eps^0(t-\alpha_\eps)\bigg\|_\infty
\\\nonumber & \leq C_1
\end{align}
and again proceed term by term. To begin with we note the following
auxiliary estimate
\begin{align}\label{eq:3.25}
 \int_{\alpha_\eps}^t|\delta_\eps(u_\eps(s))|\dd s &=
\frac{2}{\dot{u}^0}\int_{\alpha_\eps}^t|\delta_\eps(u_\eps(s))|\frac{\dot{u}^0}{2}\dd s \\
 &\leq \frac{2}{\dot{u}^0}\int_{\alpha_\eps}^t|\delta_\eps(u_\eps(s))| \dot{u}_\eps(s)\dd s =
 \frac{2}{\dot{u}^0}\int_{-\eps}^{u_\eps(t)}|\delta_\eps(r)|\dd r \leq
%\frac{2}{\dot{u}^0}\|\delta_\eps\|_{L^1} =
\frac{2}{\dot{u}^0}\|\rho\|_{1}\,. \nonumber
\end{align}
Now we have once more using the definition of $\mathfrak{X}_\eps$
\begin{align}\label{eq:A21}
\frac{1}{2}\left|\int_{\alpha_\eps}^t \int_{\alpha_\eps}^s DH\delta_{\veps}\dot{u}_\eps^2\, \dd r\dd s \right|
 &\leq \frac{2}{\dot{u}^0}\|\rho\|_1\, \|DH\|_\infty\,\Big(\frac{3}{2}\dot{u}^0\Big)^2 \eta
 = \frac{9}{2}\, \|\rho\|_1\ \|DH\|_\infty \dot u^0\,\eta\,
 \leq \frac{C_1}{6}\,,
 \end{align}
where we have made use of $\eta\leq C_1/((2/54)\|\rho\|_1\|DH\|_\infty \dot u^0)$.
Similarly since $\eta\leq (a^2\,C_1)/(12(|z^0|+C_1))$ we obtain
\begin{align}\label{eq:A22}
&\left|\int_{\alpha_\eps}^t \int_{\alpha_\eps}^s \frac{ez_\eps}{\sigma a^2-u_\eps^2
 H\delta_{\veps}}\,\dd r\,\dd s \right|
\leq \frac{2}{a^2}\left(|z^0|+C_1\right)\,\eta^2
\leq \frac{2}{a^2}\left(|z^0|+C_1\right)\,\eta
\leq \frac{C_1}{6}\,.
\end{align}
Furthermore, we estimate
\begin{align*}
&\left|\frac{1}{2}\int_{\alpha_\eps}^t \int_{\alpha_\eps}^s \frac{z_\eps\dot{u}_\eps^2\tilde{G}_\eps}{\sigma
a^2-u_\eps^2H\delta_{\veps}}\,\dd r\,\dd s\right|\\
&\qquad\leq \frac{9}{a^2}\dot{u}^0 \left(|z^0| + C_1\right)
\left(3\|DH\|_\infty\|\rho\|_\infty (|z^0|+C_1)  +
\|H\|_\infty \|\rho'\|_\infty \right)\, \eta\leq\frac{C_1}{6}\,,
\end{align*}
where we have used $\eta\leq \frac{C_1 a^2}{54}\Big(\dot
u^0(|z^0|+C_1)(3\|DH\|_\infty \|\rho\|_\infty  (|z^0|+C_1) +
\|H\|_\infty \|\rho'\|_\infty)\Big)^{-1}$,
and finally
\begin{align}\nonumber
&\left|\int_{\alpha_\eps}^t \int_{\alpha_\eps}^s \frac{z_\eps\dot{u}_\eps\left(H\delta_{\veps}u_\eps\right)\dot{}}{\sigma
a^2-u_\eps^2H\delta_{\veps}}\,\dd
r\,\dd s\right|\\
&\qquad\leq \frac{12}{a^2} \left(|z^0| + C_1\right)
\left(3\eps\|DH\|_\infty\|\rho\|_\infty (|\dot{z}^0|+C_2)
   + \frac{3}{2}\dot u^0\|H\|_\infty\Big(\|\rho'\|_\infty + \|\rho\|_\infty\Big)\right)\,
\eta \leq \frac{C_1}{6}\,, %\nonumber
\end{align}
where we have used the final condition on $\eta$ in \eqref{eq:b}. This establishes \eqref{eq:A2} using the
one before last condition on $\eta$ in \eqref{eq:b} together with $|z_\eps^0-z^0|\leq C_1/6$.
\bigskip

It remains to show $\|\frac{\dd}{\dd\,t}A^2_{\veps}(x_\eps) - \dot{z}^0\|_\infty\leq C_2$. As in \eqref{eq:A21},
\eqref{eq:A22} we estimate
\begin{align}\label{eq:iint1}
 \frac{1}{2}\left|\int_{\alpha_\eps}^t DH\delta_{\veps}(u_\eps)\dot{u}_\eps^2\,\dd s \right|
 &\leq \frac{9}{4}\, \|\rho\|_1\ \|DH\|_\infty \dot u^0 \leq \frac{C_2}{4}\,,
\\\nonumber
 \left|\int_{\alpha_\eps}^t \frac{ez_\eps}{\sigma a^2-u_\eps^2H\delta_{\veps}}\,\dd s \right|
 &\leq \frac{2}{a^2}\left(|z^0|+C_1\right) \eta \leq \frac{C_2}{4}\,,
\end{align}
where we have used the first condition on $C_2$ in \eqref{cond:C2} and
the sixth one on $\eta$ in \eqref{eq:b}.

For the remaining two terms we have
\begin{align}\label{eq:iint2}\nonumber
&\left|\frac{1}{2}\, \int_{\alpha_\eps}^t \frac{z_\eps\dot{u}_\eps^2\tilde{G}_\eps}{\sigma a^2-u_\eps^2H\delta_{\veps}}\,\dd
s\right|\\
&\quad \leq \frac{36}{4 a^2}\dot{u}^0\left(|z^0| + C_1\right) \left(3\|DH\|_\infty \|\rho\|_\infty (|z^0|+C_1) +
\|H\|_\infty \|\rho'\|_\infty \right)\, \leq\frac{C_2}{4}\,,
\end{align}
where we have used the second condition on $C_2$ in \eqref{cond:C2}, and
\begin{align}\label{eq:iint3} \nonumber
&\left|\int_{\alpha_\eps}^t \frac{z_\eps\dot{u}_\eps\left(H\delta_{\veps}u_\eps\right)\dot{}}{\sigma
a^2-u_\eps^2H\delta_{\veps}}\,\dd s\right|\\
&\qquad \leq \frac{12}{a^2} \left(|z^0| + C_1\right) \left(3\eps \|DH\|_\infty\|\rho\|_\infty (|\dot{z}^0|+C_2) +
\frac{3}{2}\dot u^0\|H\|_\infty\Big(\|\rho'\|_\infty + \|\rho\|_\infty\Big)\right)\\
&\qquad \leq \frac{12}{a^2} \left(|z^0| + C_1\right)
\left(1 + \frac{3}{2}\dot u^0\|H\|_\infty\Big(\|\rho'\|_\infty
+ \|\rho\|_\infty\Big)\right) \leq \frac{C_2}{4}\,. \nonumber
\end{align}
Here we have used the fourth condition on $\eps_0$ in \eqref{cond:eps0} as well as the final
condition on $C_2$ in \eqref{cond:C2}.
\end{proof}

Observe that in the estimate \eqref{eq:iint3} it is absolutely \emph{vital}
\label{p:24} that
the term in the second line involving $C_2$ is proportional to $\eps$
---\,otherwise we would end up in a circle and our method would fail.

Our next step is to prove that the solution operator $A_\eps$ has a fixed point on $\mathfrak{X}_\eps$. To this end we need
the
following technical preparation.
\begin{lemma}\label{prop:est_deltau}
 There exist constants $\tilde C$ and $\tilde C'$ (independent of $\eps$)
 such that for all $x_\eps,x_\eps^*\in \mathfrak{X}_\eps$ we have
 \begin{enumerate}
 \item[(i)\ ] $\displaystyle
   \left|\int_{\alpha_\eps}^t\left(\delta_{\veps}(u_\eps)u_\eps-\delta_{\veps}(u_\eps^*)u_\eps^* \right)\dd s \right| \leq
\tilde{C} \|u_\eps-u_\eps^*\|_\infty\,,$
   and \\ \ \\
 \item[(ii)\ ] $\displaystyle
   \left|\int_{\alpha_\eps}^t\left(\delta'_{\veps}(u_\eps)u_\eps^2-\delta'_{\veps}(u_\eps^*)(u_\eps^*)^2 \right)\dd s \right|
\leq \tilde{C}' \|u_\eps-u_\eps^*\|_\infty\,.$
 \end{enumerate}
% where $\tilde{C}=\frac{2}{\dot{u}^0}\max\left(\|\rho\|_{L^1} + 4\|\rho'\|_\infty, \|\rho\|_\infty\right)$. Moreover,
%where $\tilde{C}'= \frac{4}{\dot{u}^0}\max\left(4\|\rho'\|_\infty + \|\rho''\|_\infty\,,2\|\rho'\|_\infty\right)$.
\end{lemma}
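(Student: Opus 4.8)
<br>

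The plan is to prove both estimates by exploiting the same two structural facts that were used repeatedly in Proposition~\ref{self}: first, that $\dot u_\eps\geq \tfrac12\dot u^0>0$ for every curve in $\mathfrak{X}_\eps$, which lets us change variables from the affine parameter $s$ to $r=u_\eps(s)$ (and similarly for $x_\eps^*$); and second, that the integrands are supported on the sets $\Gamma_\eps(u_\eps)$, resp.\ $\Gamma_\eps(u_\eps^*)$, whose diameter is at most $4\eps/\dot u^0$ by Lemma~\ref{lem:gamma}. For item~(i), I would first rewrite the integral as the difference of two one-variable integrals against $\delta_\eps$: since $u_\eps$ is strictly increasing on $J_\eps$, writing $\Phi(r):=\delta_\eps(r)\,r=\tfrac1\eps\,\rho(r/\eps)\,r$ we get $\int_{\alpha_\eps}^t \delta_\eps(u_\eps)u_\eps\,\dot u_\eps\,\d s=\int_{u_\eps^0}^{u_\eps(t)}\Phi(r)\,\d r$, but the factor $\dot u_\eps$ must be reinstated, so instead the cleanest route is the direct one: add and subtract $\delta_\eps(u_\eps)u_\eps^*$ and estimate
\[
 |\delta_\eps(u_\eps)u_\eps-\delta_\eps(u_\eps^*)u_\eps^*|
 \leq |\delta_\eps(u_\eps)|\,|u_\eps-u_\eps^*|
   + |u_\eps^*|\,|\delta_\eps(u_\eps)-\delta_\eps(u_\eps^*)|\,.
\]
The first term integrates to at most $\big(\tfrac{2}{\dot u^0}\|\rho\|_1\big)\|u_\eps-u_\eps^*\|_\infty$ by exactly the auxiliary estimate \eqref{eq:3.25}. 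For the second term one uses $|\delta_\eps(u_\eps)-\delta_\eps(u_\eps^*)|\leq \tfrac1{\eps^2}\|\rho'\|_\infty|u_\eps-u_\eps^*|$ together with $|u_\eps^*|\leq \eps$ on $\Gamma_\eps(u_\eps^*)$ and $\diam\Gamma_\eps(u_\eps^*)\leq 4\eps/\dot u^0$; the three $\eps$-powers cancel ($\eps\cdot\eps^{-2}\cdot\eps=1$) and one is left with a constant times $\|u_\eps-u_\eps^*\|_\infty$. A small subtlety: the support set on which $\delta_\eps(u_\eps)-\delta_\eps(u_\eps^*)$ is nonzero is $\Gamma_\eps(u_\eps)\cup\Gamma_\eps(u_\eps^*)$, whose measure is still $O(\eps)$, so the bound survives. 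Collecting terms gives the claimed $\tilde C$.

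For item~(ii) the strategy is identical but one power of $u$ heavier, and this is where the only real care is needed. I would split
\[
 \delta_\eps'(u_\eps)u_\eps^2-\delta_\eps'(u_\eps^*)(u_\eps^*)^2
 = \delta_\eps'(u_\eps)\big(u_\eps^2-(u_\eps^*)^2\big)
   + (u_\eps^*)^2\big(\delta_\eps'(u_\eps)-\delta_\eps'(u_\eps^*)\big).
\]
In the first term, $|\delta_\eps'(u_\eps)|\leq \tfrac1{\eps^2}\|\rho'\|_\infty$ and $|u_\eps^2-(u_\eps^*)^2|=|u_\eps+u_\eps^*|\,|u_\eps-u_\eps^*|\leq 2\eps\|u_\eps-u_\eps^*\|_\infty$ on $\Gamma_\eps(u_\eps)$; integrating over $\Gamma_\eps(u_\eps)$, whose length is $O(\eps)$, again balances the $\eps$-powers ($\eps^{-2}\cdot\eps\cdot\eps=1$). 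In the second term, $|(u_\eps^*)^2|\leq\eps^2$ and $|\delta_\eps'(u_\eps)-\delta_\eps'(u_\eps^*)|\leq\tfrac1{\eps^3}\|\rho''\|_\infty|u_\eps-u_\eps^*|$, which together with the $O(\eps)$ length of the relevant support set gives $\eps^2\cdot\eps^{-3}\cdot\eps=1$ — balanced once more. Hence $\tilde C'$ can be taken to be a fixed multiple of $\tfrac1{\dot u^0}(\|\rho'\|_\infty+\|\rho''\|_\infty)$, independent of $\eps$.

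The main (and essentially only) obstacle is bookkeeping of the $\eps$-powers: each differentiation of $\delta_\eps$ costs a factor $\eps^{-1}$, each factor $u_\eps$ on the support $\Gamma_\eps$ gains a factor $\eps$, and the integration itself gains one more $\eps$ from $\diam\Gamma_\eps=O(\eps)$ via Lemma~\ref{lem:gamma}; one must check in each of the four pieces above that these exactly cancel, leaving an $\eps$-free constant. A secondary point to be careful about is that when comparing $\delta_\eps^{(k)}(u_\eps)$ with $\delta_\eps^{(k)}(u_\eps^*)$ the integrand is supported on $\Gamma_\eps(u_\eps)\cup\Gamma_\eps(u_\eps^*)$ rather than a single $\Gamma_\eps$, but since both sets have length $O(\eps)$ and lie within an $O(\eps)$-window of $t=\alpha_\eps$ (again by Lemma~\ref{lem:gamma}), this changes nothing but the numerical value of the constant. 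No mean value estimate beyond the one-dimensional Lipschitz bound for $\rho'$ and $\rho''$ is needed, and no change of variables is strictly required, which keeps the argument short.
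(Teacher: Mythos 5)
There is a genuine gap, and it sits exactly at the point you dismiss as a ``small subtlety.'' Your factor-by-factor estimates use the bounds $|u_\eps^*|\leq\eps$ (in (i)), $|u_\eps+u_\eps^*|\leq 2\eps$ and $|(u_\eps^*)^2|\leq\eps^2$ (in (ii)), but these hold only on $\Gamma_\veps(u_\eps^*)$, while the integrands are supported on $\Gamma_\veps(u_\eps)\cup\Gamma_\veps(u_\eps^*)$ (resp.\ on $\Gamma_\veps(u_\eps)$ for the term containing $\delta_\eps'(u_\eps)$, where again only $|u_\eps|\leq\eps$ is known). On $\Gamma_\veps(u_\eps)\setminus\Gamma_\veps(u_\eps^*)$ one only has $|u_\eps^*|\leq\eps+\|u_\eps-u_\eps^*\|_\infty$, and membership in $\mathfrak{X}_\eps$ (see \eqref{space}) allows $\|u_\eps-u_\eps^*\|_\infty$ to be as large as $2C_1$, i.e.\ far larger than $\eps$. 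Running your power counting with this correct bound leaves an uncontrolled term of order $\|u_\eps-u_\eps^*\|_\infty^2/\eps$ (and in (ii) even $\|u_\eps-u_\eps^*\|_\infty^3/\eps^2$), which is not dominated by a constant times $\|u_\eps-u_\eps^*\|_\infty$ uniformly in $\eps$. So the $\eps$-powers do \emph{not} cancel in the regime $\|u_\eps-u_\eps^*\|_\infty>\eps$, and the claimed $\eps$-independent constants $\tilde C$, $\tilde C'$ do not follow from your argument as written.

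The paper closes precisely this hole by a case distinction. If $\|u_\eps-u_\eps^*\|_\infty\leq\veps$, then $|u_\eps^*|\leq2\veps$ on the whole union $\Gamma_\veps(x_\eps)\cup\Gamma_\veps(x_\eps^*)$ and your mean-value/power-counting argument goes through (this is essentially the paper's first case, using \eqref{eq:3.25} and Lemma \ref{lem:gamma} exactly as you do). If $\|u_\eps-u_\eps^*\|_\infty>\veps$, no comparison of the two integrands is attempted at all: each integral is bounded separately, e.g.\ $\int_{\Gamma_\eps(x_\eps)}|\delta_\veps(u_\eps)u_\eps|\,\dd s\leq \frac{\|\rho\|_\infty}{\eps}\cdot\eps\cdot\frac{4\eps}{\dot u^0}=O(\eps)<O(\|u_\eps-u_\eps^*\|_\infty)$, and analogously for the starred term and for (ii). Adding this second case (or, alternatively, observing that $v\mapsto\delta_\veps(v)v$ and $v\mapsto\delta'_\veps(v)v^2$ are \emph{globally} Lipschitz with constant $O(1/\eps)$, since the polynomial factors are at most $\eps$ in modulus on $\supp\rho(\cdot/\eps)$, and then integrating over the union of measure $O(\eps)$) repairs your proof; without one of these devices the estimate fails for pairs of curves whose $u$-components differ by more than $\eps$.
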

\begin{proof}
To prove (i) %let $x,x^*\in X_\eps$, $x=(x,z),x^*=(u^*,z^*)$.
% We first note the following auxiliary estimate
% \begin{align}\label{eq:3.25}
%  \int_{-b}^t|\delta_\eps(u(s))|\dd s &= \frac{2}{\dot{u}^0}\int_{-b}^t|\delta_\eps(u(s))|\frac{\dot{u}^0}{2}\dd s \\
%  &\leq \frac{2}{\dot{u}^0}\int_{-b}^t|\delta_\eps(u(s))| \dot{u}(s)\dd s =
%  \frac{2}{\dot{u}^0}\int_{-\eps}^{u(t)}|\delta_\eps(\tau)|\dd \tau \leq %\frac{2}{\dot{u}^0}\|\delta_\eps\|_{L^1} =
% \frac{2}{\dot{u}^0}\|\rho\|_{L^1}\,. \nonumber
% \end{align}
we first consider the case $\|u_\eps-u_\eps^*\|_\infty\leq \veps$.
%and $\|u-u^*\|_\infty>\veps$. In the first case we conclude:
We have by \eqref{eq:3.25}
\begin{align}
 \left|\int_{\alpha_\eps}^t\left(\delta_{\veps}(u_\eps)u_\eps-\delta_{\veps}(u_\eps^*)u_\eps^* \right)\dd s \right|
 &\leq \int_{\alpha_\eps}^t \left|\delta_{\veps}(u_\eps)u_\eps-\delta_{\veps}(u_\eps)u_\eps^*\right|\,\dd s +
   \int_{\alpha_\eps}^t\left|\delta_{\veps}(u_\eps)u_\eps^*-\delta_{\veps}(u_\eps^*)u_\eps^*\right| \,\dd s \nonumber\\
 &\leq\frac{2}{\dot{u}^0}\|\rho\|_{1}\|u_\eps-u_\eps^*\|_\infty +
\int_{\Gamma_\veps(x_\eps)\cup\Gamma_{\veps}(x_\eps^*)}
\left|\delta_{\veps}(u_\eps)-\delta_{\veps}(u_\eps^*)\right|\left|u_\eps^*\right| \,\dd s\,.
\end{align}
Now the last integral is non-vanishing only if $|u_\eps|\leq \veps$ or $|u_\eps^*|\leq \veps$ hence we have in any case
$|u_\eps^*|\leq 2\veps$. Since both $x_\eps$, $x_\eps^*\in \mathfrak{X}_\eps$, Lemma \ref{lem:gamma} applies so that
using a mean value argument we may bound the integral by
%and since $\Gamma_\eps(x)$ and $\Gamma_\eps(x^*)$ are both contained in the same interval with
%diameter bounded by $\frac{4\eps}{\dot{u}^0}$ we can estimate the last integral by
$\frac{8}{\dot{u}^0}\|\rho'\|_\infty\|u_\eps-u_\eps^*\|_\infty$.

In case $\|u_\eps-u_\eps^*\|_\infty>\veps$ we obtain again from Lemma \ref{lem:gamma}
\begin{align}\nonumber
 \left|\int_{\alpha_\eps}^t\left(\delta_{\veps}(u_\eps)u_\eps-\delta_{\veps}(u_\eps^*)u_\eps^* \right)\dd s \right|
&\leq \int_{\Gamma_\eps(x_\eps)}|\delta_{\veps}(u_\eps)u_\eps|\,\dd s +
\int_{\Gamma_\eps(x_\eps^*)}|\delta_{\veps}(u_\eps^*)u_\eps^*|\,\dd s\\
&\leq \frac{8}{\dot{u}^0}\|\rho\|_\infty\,\eps < \frac{8}{\dot{u}^0}\|\rho\|_\infty\|u_\eps-u_\eps^*\|_\infty\,.
\end{align}
So we may chose $\tilde{C}=\frac{2}{\dot{u}^0}\max\left(\|\rho\|_{1} + 4\|\rho'\|_\infty, 4\|\rho\|_\infty\right)$ and
(i) is proved.
\medskip

(ii) is proved analogously with the choice
$\tilde{C}'= \frac{4}{\dot{u}^0}\max\left(4\|\rho'\|_\infty + \|\rho''\|_\infty\,,2\|\rho'\|_\infty\right)$.
\end{proof}

We finally prove the key estimates which will allow the application
of Weissinger's fixed point theorem.
\begin{proposition}\label{prop:weiss}
 There exists a sequence of positive real numbers $(\alpha_n)_n$ (depending on
 $\rho$, $\rho'$, $\rho''$, $H$, $DH$, $D^2H$, and $\dot{u}^0$ but independent of $\eps$) with
 $\sum_{n\in\N}\alpha_n<\infty$ such that for all $x_\eps,x_\eps^*\in \mathfrak{X}_\eps$ with $\eps\leq\eps_0$ of
 \eqref{cond:eps0} and $\eta$ as in \eqref{eq:b} and all $n\in\N$ we have
\begin{align}\label{eq:propA5}
 \|(A_\eps)^n(x_\eps)- (A_\eps)^n(x_\eps^*)\|_{\Con^1} \leq \frac{1}{\eps}\alpha_n \|x_\eps-x_\eps^*\|_{\Con^1}\,.
\end{align}
\end{proposition}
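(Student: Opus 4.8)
The plan is to establish the Lipschitz-type estimate \eqref{eq:propA5} by induction on $n$, exploiting the fact that the solution operator $A_\eps$ is given by a double integral of terms that are Lipschitz in $x_\eps=(u_\eps,z_\eps)$ on the solution space $\mathfrak{X}_\eps$, with Lipschitz constants that may blow up like $1/\eps$ (because of the $\delta_\eps$, $\delta'_\eps$ terms) but are tamed by the fact that $\diam(\Gamma_\eps(u_\eps))\leq 4\eps/\dot u^0$ by Lemma~\ref{lem:gamma}, and by the composition estimates of Lemma~\ref{prop:est_deltau}. First I would write $A_\eps(x_\eps)(t)-A_\eps(x_\eps^*)(t)$ as a double integral of the difference of the integrands, split term by term exactly along the lines of the decomposition used in Proposition~\ref{self}, and for each summand estimate the difference using: (a) the uniform lower bound $|\sigma a^2-u_\eps^2 H\delta_\eps|^{-1}\leq 2/a^2$ from Lemma~\ref{lem:denom} together with the Lipschitz estimate \eqref{eq:estdenom} for the denominator; (b) Lemma~\ref{prop:est_deltau} for the genuinely singular combinations $\delta_\eps(u_\eps)u_\eps$ and $\delta'_\eps(u_\eps)u_\eps^2$, which are Lipschitz in $u_\eps$ with $\eps$-independent constant; (c) for the remaining $\delta_\eps$- and $\delta'_\eps$-terms that are \emph{not} multiplied by $u_\eps$ — in particular the $\frac12 DH\,\delta_\eps\,\dot u_\eps^2$ term in $A^2_\eps$ and the $\tilde G_\eps$-contributions — the bound $\|\delta_\eps\|_\infty\leq\|\rho\|_\infty/\eps$, $\|\delta'_\eps\|_\infty\leq\|\rho'\|_\infty/\eps^2$, combined with $|\Gamma_\eps|\leq 4\eps/\dot u^0$ and a mean value argument, which produces exactly one surviving factor of $1/\eps$. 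Collecting all contributions, and using that on $\mathfrak{X}_\eps$ the quantities $u_\eps$, $\dot u_\eps$, $z_\eps$, $\dot z_\eps$, $H$, $DH$, $D^2H$ are all uniformly bounded, yields an estimate of the form
\begin{equation*}
 \|A_\eps(x_\eps)-A_\eps(x_\eps^*)\|_{\Con^1}\leq \Big(K_0+\frac{K_1}{\eps}\Big)\,\eta\,\|x_\eps-x_\eps^*\|_{\Con^1}
\end{equation*}
for $\eps$-independent constants $K_0,K_1$ depending only on the indicated data; absorbing $K_0\eta$ into the $1/\eps$ term (since $\eps\leq 1$) we get the base case $\alpha_1:=\eta(K_0+K_1)$ — or, keeping the $\Con^1$-norm distinction, a constant $\alpha_1$ of this shape.

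For the inductive step I would iterate: in $\|(A_\eps)^{n}(x_\eps)-(A_\eps)^{n}(x_\eps^*)\|_{\Con^1}$, apply the $\Con^1$ version of the first-order estimate to the outer $A_\eps$, but now with the crucial improvement that the inner integrals are taken over $J_\eps$ whose length is $\eta$, so each additional application of $A_\eps$ in the $\Con^0$-direction contributes a clean factor of $\eta$ (or $\eta^2/2$ for the double integral), \emph{without} a new factor of $1/\eps$ — the $1/\eps$ enters only once, through the single outermost differentiation hitting a $\delta_\eps$- or $\delta'_\eps$-term, precisely as in the $n=1$ case, because in all deeper layers the singular terms are integrated against the smooth flow and one uses the substitution $\dd r=\dot u_\eps^{-1}\dd u_\eps$ (valid since $\dot u_\eps\geq\dot u^0/2$ on $\mathfrak{X}_\eps$) to convert $\int|\delta_\eps(u_\eps)|\,\dd s$ into $\int|\rho|\,\dd u\leq\|\rho\|_1$, i.e.\ an $\eps$-independent bound. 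Carrying this bookkeeping through, one obtains a recursion of the shape $\alpha_{n+1}\leq \alpha_n\cdot (L\eta^{\,m})$ for some fixed $m\in\{1,2\}$ and an $\eps$-independent constant $L$; since $\eta$ is the minimum in \eqref{eq:b}, in particular $\eta\leq 1$ and $\eta$ can be taken small enough that $L\eta^{m}=:q<1$, giving $\alpha_n\leq \alpha_1 q^{\,n-1}$, whence $\sum_n\alpha_n<\infty$. (Since $\eta$ in \eqref{eq:b} is already fixed by the self-mapping requirement, I would instead present the $\alpha_n$ directly from the geometric series without needing $q<1$ artificially — the series $\sum \alpha_1^n/n!$-type bound arising from the repeated double integration also works, as the $\eta$-factors accumulate with factorial denominators; either route gives summability, which is all Weissinger needs.)

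The technical heart — and the step I expect to be the main obstacle — is the \emph{careful tracking of the single $1/\eps$}: one has to verify that in the term-by-term decomposition every summand \emph{except} the ones coming from differentiating $A^2_\eps$ through a $\delta_\eps\dot u_\eps^2$ or a $\tilde G_\eps$ piece is in fact uniformly Lipschitz (constant independent of $\eps$), using Lemma~\ref{prop:est_deltau} for $\delta_\eps u_\eps$, $\delta'_\eps u_\eps^2$, and the $|\Gamma_\eps|\leq 4\eps/\dot u^0$ cancellation for $\delta_\eps\cdot(\text{bounded})$; and that the genuinely $1/\eps$-sized summand appears only at the outermost layer, so it does not compound across iterations. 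Concretely, the delicate points are: (1) estimating $\tilde G_\eps(u_\eps,z_\eps)-\tilde G_\eps(u_\eps^*,z_\eps^*)$, which contains $DH(z_\eps)\delta_\eps(u_\eps)z_\eps-DH(z_\eps^*)\delta_\eps(u_\eps^*)z_\eps^*$ (handled by adding and subtracting, using $|\Gamma_\eps|\leq C\eps$ to kill the $1/\eps$ from $\delta_\eps$, and Lipschitzness of $DH$ and of $z\mapsto z$) and $H(z_\eps)\delta'_\eps(u_\eps)u_\eps - H(z_\eps^*)\delta'_\eps(u_\eps^*)u_\eps^*$ (use Lemma~\ref{prop:est_deltau}(ii) for the $\delta'_\eps u_\eps^2$-type part after pairing with a factor of $u_\eps$, and absorb the extra $u_\eps$ using $|u_\eps|\leq\eps$ on $\Gamma_\eps$ wherever a bare $\delta'_\eps$ survives); (2) the term $\big(H\delta_\eps u_\eps\big)\dot{}=H'\dot z_\eps\delta_\eps u_\eps+H\delta'_\eps\dot u_\eps u_\eps+H\delta_\eps\dot u_\eps$, whose difference must again be estimated factor by factor, using that $\dot u_\eps,\dot z_\eps$ are bounded and that $\|\ddot u_\eps\|_\infty$ is controlled by \eqref{eq:math} (already an $O(1/\eps)$ bound, but it only ever multiplies something with $|\Gamma_\eps|\leq C\eps$, so it is harmless); and (3) the $\Con^1$-part, i.e.\ estimating $\frac{\dd}{\dd t}\big(A_\eps^n(x_\eps)-A_\eps^n(x_\eps^*)\big)$, which removes one of the two integrations and is therefore where the isolated $1/\eps$ genuinely lives — one checks it is $\le \eps^{-1}\alpha_n\|x_\eps-x_\eps^*\|_{\Con^1}$ with the same $\alpha_n$. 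Once these estimates are assembled, Weissinger's theorem \cite{Wei:52} applied with the summable sequence $(\alpha_n/\eps)_n$ (for each fixed $\eps\le\eps_0$) yields the unique fixed point, completing the argument; the overly long computations I would relegate to remain within this appendix.
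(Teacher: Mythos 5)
Your proposal follows essentially the same route as the paper's proof: you reduce to the first-order estimate $\|A_\eps(x_\eps)-A_\eps(x_\eps^*)\|_{\Con^1}\leq (C/\eps)\|x_\eps-x_\eps^*\|_{\Con^1}$ via the same term-by-term decomposition — Lemma \ref{lem:denom} together with \eqref{eq:estdenom} for the denominators, Lemma \ref{prop:est_deltau} (i.e.\ the paper's auxiliary estimate \eqref{eq:Gu}) for the combinations $\delta_\eps(u_\eps)u_\eps$ and $\delta'_\eps(u_\eps)u_\eps^2$, and the mean-value plus $\diam\Gamma_\eps\leq 4\eps/\dot u^0$ argument for the singular terms of $A^2_\eps$ not paired with $u_\eps$, which is exactly where the paper's single factor $1/\eps$ arises — and then obtain summable $\alpha_n$ of the form $C\eta^{2n}/(2n)!$ from the iterated integrations \eqref{eq-A-n} (your factorial fallback, which is the paper's route, rather than the geometric-ratio variant you yourself discard), finishing with Weissinger for each fixed $\eps$. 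This is the paper's argument in all essentials.
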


\begin{proof}
%Let $x=(u,z),\, x^*=(u^*,z^*)\in X$.
It suffices to show  $\|A_\eps(x_\eps)-A_\eps(x_\eps^*)\|_{\Con^1} \leq
(C/\eps) \|x_\eps-x_\eps^*\|_{\Con^1}$ for some appropriate constant $C$, since for higher
powers we then may use
\begin{equation}\label{eq-A-n}
 \int_{\alpha_\eps}^{t_{2n}}\ldots\int_{\alpha_\eps}^{t_{1}}1\,\dd t\dd t_{1}\ldots\dd t_{2n-1}
\leq \frac{\eta^{2n}}{(2n)!}\,
\end{equation}
to obtain a converging series.

We again proceed term by term, skipping some of the details of the (by
now) routine estimates and only stress the technical key points.

We start with the first term in $\|A_\eps^1(x_\eps)-A_\eps^1(x_\eps^*)\|_{\Con^1}$. By
writing the two summands on a common denominator we obtain
\begin{align} \nonumber
 &\left|\int_{\alpha_\eps}^t \frac{eu_\eps}{\sigma a^2 - u_\eps^2H(z_\eps)\delta_\veps(u_\eps)}  \dd s
    - \frac{eu_\eps^*}{\sigma a^2 - (u_\eps^*)^2H(z_\eps^*)\delta_\veps(u_\eps^*)} \,\dd s\right|
\\
&\qquad\leq \frac{4}{a^4} \int_{\alpha_\eps}^t a^2|u_\eps-u_\eps^*| \d s
 +\frac{4}{a^4}
\int_{\alpha_\eps}^t|u_\eps(u_\eps^*)^2H(z_\eps^*)\delta_\eps(u_\eps^*)-u_\eps(u_\eps^*)^2H(z_\eps)\delta_\eps(u_\eps^*)| \d s
\nonumber \\
 &\qquad\ +\frac{4}{a^4}
   \int_{\alpha_\eps}^t|u_\eps(u_\eps^*)^2H(z_\eps)\delta_\eps(u_\eps^*)-u_\eps^2u_\eps^*H(z_\eps)\delta_\eps(u_\eps)| \d s
\nonumber \\
&\qquad \leq \frac{4}{a^2} \eta \|u_\eps-u_\eps^*\|_\infty +
 \frac{4}{a^4}(|u^0|+C_1)^2\left(\lip(H)\|z_\eps-z_\eps^*\|_\infty\frac{4}{\dot{u}^0}
 \|\rho\|_\infty + \|H\|_\infty
 \tilde{C}\|u_\eps-u_\eps^*\|_\infty\right)\\
&\qquad \leq \frac{4}{a^2}\left(\eta +
\frac{1}{a^2}(1+C_1)^2\Big(\lip(H)\frac{4}{\dot{u}^0}\|\rho\|_\infty +
 \|H\|_\infty \tilde{C}\Big)\right)\, \|x_\eps-x_\eps^*\|_{\Con^1}\,, \nonumber
\end{align}
where $\lip(H)$ denotes the Lipschitz constant of $H$ on $B_{C_1}(z^0)$ and
$\tilde{C}$ is the constant given by Lemma \ref{prop:est_deltau}.

For the second term we need the following auxiliary estimate which is proven by
a combination of (i) and (ii) in Lemma \ref{prop:est_deltau}:
\begin{align}\label{eq:Gu}
 \int_{\alpha_\eps}^t|\tilde{G} u_\eps - \tilde{G}^* u_\eps^*|\dd s \leq C'\|x_\eps-x_\eps^*\|_{\Con^1}\,,
\end{align}
where $C'=\|D^2H\|_\infty \|\rho\|_\infty (|z^0|+C_1) + \|DH\|_\infty
\big((|z^0|+C_1)\tilde{C} + \|\rho\|_\infty + \|\rho'\|_\infty\big) + \|H\|_\infty\tilde{C}'$.

Abbreviating $C_{\tilde{G}}:= 3 \|DH\|_\infty \|\rho\|_\infty(|z^0|+C_1) + \|H\|_\infty\|\rho'\|_\infty$ we are able to
estimate
\begin{align*}
 &\left|\int_{\alpha_\eps}^t \left(\frac{\tfrac{1}{2}\, u_\eps \dot{u}_\eps^2\tilde{G}}{\sigma a^2-u_\eps^2
H(z_\eps)\delta_{\veps}(u_\eps)} -
\frac{\tfrac{1}{2}\, u_\eps^* (\dot{u}_\eps^*)^2\tilde{G}^*}{\sigma a^2-(u_\eps^*)^2
H(z_\eps^*)\delta_{\veps}(u_\eps^*)}\right)\dd s\right|\\
&\leq\frac{2}{a^4}\int_{\alpha_\eps}^t a^2\left|u_\eps\dot{u}_\eps^2 \tilde{G} - u_\eps^*(\dot{u}_\eps^*)^2\tilde{G}^*\right|
+
 \left|u_\eps\dot{u}_\eps^2 \tilde{G}(u_\eps^*)^2H(z_\eps^*)\delta_\eps(u_\eps^*) -
u_\eps^*(\dot{u}_\eps^*)^2\tilde{G}^*u_\eps^2 H(z_\eps)\delta_\eps(u_\eps)\right| \, \dd
s\\
&\leq \frac{18 (\dot{u}^0)}{a^4}\bigg(\frac{a^2 C'}{4} + \frac{a^2 C_{\tilde{G}}}{3\dot{u}^0}
 +\frac{C'}{4} \|H\|_\infty\|\rho\|_\infty + \frac{\tilde{C} C_{\tilde{G}}}{4} \|H\|_\infty +
\frac{C_{\tilde{G}}}{4}\lip(H)\|\rho\|_\infty + \frac{C_{\tilde{G}}}{3\dot{u}^0}
\|H\|_\infty\|\rho\|_\infty \bigg)\\
&\qquad\times \|x_\eps-x_\eps^*\|_{\Con^1}\,,
\end{align*}
by using \eqref{eq:Gu} and Lemma \ref{prop:est_deltau}.
\bigskip

The final term in $A^1_\eps(x_\eps)$, i.e.,
\begin{equation}
\left|\int_{\alpha_\eps}^t \left(\frac{u_\eps\dot{u}_\eps
\Big(H(z_\eps)\delta_\eps(u_\eps)u_\eps\Big)\dot{} }{\sigma a^2 - u_\eps^2 H(z_\eps)\delta_\eps(u_\eps)} -
\frac{u_\eps^*\dot{u}_\eps^*\Big(H(z_\eps^*)\delta_\eps(u_\eps^*)u_\eps^*\Big)\dot{} }
{\sigma a^2 - (u_\eps^*)^2 H(z_\eps^*)\delta_\eps(u_\eps^*)}
\right)\dd s\right|
\end{equation}
can be estimated in perfect analogy to the previous terms inserting and
subtracting appropriate terms wherever necessary to arrive at an estimate
proportional to $\|x_\eps-x_\eps^*\|_{\Con^1}$.
\medskip

The `spatial component' $A^2_\eps$ of the solution operator can be treated in a
similar way. The only new aspect when estimating
$\|A_\eps^2(x_\eps)-A_\eps^2(x_\eps^*)\|_{\Con^1}$ is the following. When bounding
terms like $|\tilde{G}-\tilde{G}^*|$ by multiples of $\|x_\eps-x_\eps^*\|_{\Con^1}$
we find that they are no longer multiplied by $u_\eps$ and $u_\eps^*$, respectively.
Thus we cannot use the auxiliary result \eqref{eq:Gu} and
consequently terms proportional to $1/\eps$ remain.
(Note, however, that the occurrence of $1/\eps$-terms at this stage causes
no problem at all in the application of the fixed point theorem, see below.)
Summing up we arrive at
\begin{equation*}
\|\frac{\dd}{\dd\,t}A_{\veps}(x_\eps) - \frac{\dd}{\dd\,t}A_{\veps}(x_\eps^*)\|_\infty \leq
\frac{1}{\eps}C \|x_\eps-x_\eps^*\|_{\Con^1}\,,
\end{equation*}
where $C$ is some constant (as above depending on $H$, $\rho$, etc.). Furthermore, since $\eta\leq 1$ we obtain the same
estimate for the zeroth order, i.e.,
$ \| A_\eps(x_\eps)- A_\eps(x_\eps^*)\|_\infty \leq \frac{1}{\eps}C \|x_\eps-x_\eps^*\|_{\Con^1}$,
and hence
$$\| A_\eps(x_\eps)- A_\eps(x_\eps^*)\|_{\Con^1} \leq \frac{1}{\eps}C
\|x_\eps-x_\eps^*\|_{\Con^1}.$$

Finally, for higher powers of $A_\eps$ we obtain (using \eqref{eq-A-n})
\begin{equation*}
 \|(A_\eps)^n(x_\eps)- (A_\eps)^n(x_\eps^*)\|_{\Con^1} \leq \frac{1}{\eps}\alpha_n \|x_\eps-x_\eps^*\|_{\Con^1}\,,
\end{equation*}
where $\alpha_n:= C \frac{\eta^{2n}}{(2n)!}$ ($n\in\N$).
\end{proof}

At this point we finally obtain the existence of a unique solution to
\eqref{eq:math} in $\mathfrak{X}_\eps$ for all fixed small $\eps$ by applying
Weissinger's fixed point theorem (\cite{Wei:52}).
Note that the factor $1/\veps$ in the estimate \eqref{eq:propA5} provided by
Proposition \ref{prop:weiss}
does not cause any trouble. Its only effect is that the approximating sequence
$(A_\eps)^n (x_\eps)$ converges to the fixed point slower as $\eps$ gets
smaller. Nevertheless we obtain a fixed point for every fixed (small) $\eps$:
%\begin{theorem}(Weissinger's fixed point theorem)
%Let $X$ be a nonempty closed subset of a Banach space $(E,\|.\|)$. Moreover let
%$\sum_{n=1}^\infty{a_n}$ be a convergent
%series of positive real numbers %$(a_n)_n$
%and $A:X\rightarrow X$ a map with the property that
%\begin{equation*}
%\|A^n(u)-A^n(v)\|\leq a_n \|u-v\|\quad \forall u,v\in X\ \forall
%n\in\mathbb{N}.
%\end{equation*}
%Then $A$ has a unique fixed point.
%\end{theorem}

\begin{theorem}[Existence and uniqueness]\label{thm:exmat}
Consider the system \eqref{eq:math} with initial data
\eqref {eq:zdata_eps}, satisfying \eqref{eq:zdata},  \eqref{eq:udata}.
Then for all $\eps\leq\eps_0$ where $\eps_0$ is constrained
by \eqref{cond:eps0} and for $\eta$ given by \eqref{eq:b} we have a unique
smooth solution $(u_\eps,z_\eps)$ on $[\alpha_\eps,\alpha_\eps+\eta]$. Moreover,
$u_\eps$ and $z_\eps$ as well as their first order derivatives are
uniformly bounded in $\eps$.
\end{theorem}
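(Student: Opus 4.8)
The plan is to deduce Theorem~\ref{thm:exmat} from the two Propositions established above by a direct application of Weissinger's fixed point theorem, followed by a short regularity bootstrap. First I would record that $\mathfrak{X}_\eps$, being a nonempty closed subset of the Banach space $(\Con^1(J_\eps,\R^4),\|\cdot\|_{\Con^1})$, is a complete metric space (nonemptiness is witnessed, for $\eta$ as in \eqref{eq:b} and $\eps\le\eps_0$ as in \eqref{cond:eps0}, by the affine interpolant $t\mapsto x_\eps^0+\dot x_\eps^0(t-\alpha_\eps)$ of the data). By Proposition~\ref{self}, for all $\eps\le\eps_0$ the solution operator $A_\eps$ of \eqref{eq:def_A} maps $\mathfrak{X}_\eps$ into itself.

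Next I would invoke Proposition~\ref{prop:weiss}: there is a sequence $(\alpha_n)_{n\in\N}$ of positive reals, independent of $\eps$, with $\sum_{n}\alpha_n<\infty$, such that $\|(A_\eps)^n(x_\eps)-(A_\eps)^n(x_\eps^*)\|_{\Con^1}\le\tfrac{1}{\eps}\alpha_n\,\|x_\eps-x_\eps^*\|_{\Con^1}$ for all $x_\eps,x_\eps^*\in\mathfrak{X}_\eps$ and all $\eps\le\eps_0$. For each \emph{fixed} $\eps\in(0,\eps_0]$ the sequence $(\alpha_n/\eps)_n$ is again summable, so the hypotheses of Weissinger's theorem are met and $A_\eps$ possesses a \emph{unique} fixed point $x_\eps=(u_\eps,z_\eps)\in\mathfrak{X}_\eps$. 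As already remarked in the text, the spurious factor $1/\eps$ is harmless: it only slows the rate at which the iterates $(A_\eps)^n(x)$ converge to $x_\eps$, and plays no role in the existence or uniqueness of the fixed point.

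Then I would translate the fixed point back into a solution of the ODE. Writing $A_\eps(x_\eps)=x_\eps$ and differentiating the double integrals in \eqref{eq:def_A} twice shows that $x_\eps\in\Con^1(J_\eps,\R^4)$ solves the model system \eqref{eq:math} in the classical sense with $x_\eps(\alpha_\eps)=x_\eps^0$, $\dot x_\eps(\alpha_\eps)=\dot x_\eps^0$; conversely any such $\Con^1$ solution lying in $\mathfrak{X}_\eps$ is a fixed point of $A_\eps$. Since by Lemma~\ref{lem:denom} the denominator $\sigma a^2-u_\eps^2H\delta_\eps$ stays bounded away from zero along $x_\eps$, the right-hand side of \eqref{eq:math} is a smooth function of $(u_\eps,z_\eps,\dot u_\eps,\dot z_\eps)$; hence $\ddot x_\eps\in\Con^0$, so $x_\eps\in\Con^2$, and iterating this bootstrap gives $x_\eps\in\Con^\infty$. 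Uniqueness within $\mathfrak{X}_\eps$ is part of Weissinger's conclusion; uniqueness among \emph{all} solutions of \eqref{eq:math} on $J_\eps$ with the prescribed data follows from the Picard--Lindel\"of theorem applied to this (locally Lipschitz, indeed smooth) right-hand side on the open region where the denominator is nonzero — which contains a neighbourhood of the graph of $x_\eps$ — together with a standard connectedness argument on the interval $J_\eps$.

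Finally, the ``moreover'' is immediate from the very definition of $\mathfrak{X}_\eps$ in \eqref{space}: we have $\|x_\eps-x^0\|_\infty\le C_1$, $\|\dot z_\eps-\dot z^0\|_\infty\le C_2$ and $\dot u_\eps\in[\tfrac12\dot u^0,\tfrac32\dot u^0]$, with $C_1,C_2$ and the seed data $x^0,\dot z^0,\dot u^0$ all independent of $\eps$, while by \eqref{eq:aeps} the domains $J_\eps=[\alpha_\eps,\alpha_\eps+\eta]$ are all contained in the fixed compact interval $[-C\eps_0,\eta]$; this gives the claimed uniform-in-$\eps$ bounds on $u_\eps,z_\eps$ and their first derivatives. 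The substantive work — the self-mapping estimate of Proposition~\ref{self} and, above all, the iterated contraction estimate of Proposition~\ref{prop:weiss} with its delicate $\eps$-independent summable sequence $(\alpha_n)$ — has already been carried out in the appendix; the only points that require attention when assembling them into Theorem~\ref{thm:exmat} are checking that Weissinger's hypotheses survive the $1/\eps$ factor and performing the regularity bootstrap, both of which are routine, so no genuine obstacle remains at this stage.
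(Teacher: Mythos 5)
Your proposal is correct and follows essentially the same route as the paper: apply Weissinger's fixed point theorem via Propositions \ref{self} and \ref{prop:weiss} (noting the $1/\eps$ factor only slows convergence of the iterates), identify the fixed point with a $\Con^1$ solution of \eqref{eq:math}, bootstrap smoothness from the smooth right-hand side, get uniqueness among all solutions by standard ODE theory, and read off the uniform-in-$\eps$ bounds directly from the definition \eqref{space} of $\mathfrak{X}_\eps$. The extra details you supply (completeness and nonemptiness of $\mathfrak{X}_\eps$, the explicit bootstrap) are fine elaborations of what the paper leaves implicit.
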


\begin{proof}
Propositions \ref{self} and \ref{prop:weiss} allow the application of
Weissinger's fixed point theorem (\cite{Wei:52}) for fixed $\eps\leq \eps_0$ and
suitable $\eta$, providing thus a unique fixed point for the operator $A_\eps$
on the space $\mathfrak{X}_\eps$ which in
turn gives a unique ${\mathcal C}^1$-solution $x_\eps=(u_\eps,z_\eps)$
on $[\alpha_\eps,\alpha_\eps+\eta]$  to the system \eqref{eq:math} with data \eqref{eq:zdata_eps}.
Moreover, since the right hand sides of \eqref{eq:math} are smooth the solution is smooth as well.

The solution obtained via the fixed point argument is unique in the space $\mathfrak{X}_\eps$
and thereby unique among all smooth solutions assuming this data by the usual
argument from ODE-theory.

Finally,  $u_\eps$, $\dot u_\eps$, $z_\eps$, and $\dot z_\eps$ are bounded
uniformly in $\eps$ on $[\alpha_\eps,\alpha_\eps+\eta]$ by the very definition of $\mathfrak{X}_\eps$.
\end{proof}
%%%%%%%%%%%%%%%%%%%%%%%%Appendix Limits%%%%%%%%%%%%%%%%%%%%%%%%%%%%%%%%%%%%%

\section{Limits}\label{B}

In this appendix we deal with the explicit
form of the limits ${\sf A_p}=\lim_{\eps\to 0} \dot Z_{p\eps}(\beta_\eps)$,
${\sf B}=\lim_{\eps\to 0} V_\eps(\beta_\eps)$, and ${\sf C}=\lim_{\eps\to 0} \dot
V_\eps(\beta_\eps)$ as stated in Proposition \ref{prop:explicit_limits}. Since
the actual calculations are overly technical we only sketch the main points.

Again the sign-difference between the $Z_i$-components and $Z_4$ is minor and
to simplify the notation  we will use a similar convention
as in Appendix \ref{A} and write $Z_\eps$ and $Z$ instead of $Z_{p\eps}$ and
$Z_p$ and analogously for their derivatives. Also we will write $DH$ instead of
$H_{,p}$.

Starting with ${\sf A_p}$ we use the differential equation \eqref{eq:geos}
for ${Z_{p\ep}}$ and the uniform converge of $Z_{p\ep}$ and $\dot{U}_\eps$
established in Theorem~\ref{prop:lim1} to show that
\begin{equation}\label{eq:simple_lim_a}
 {\sf A}=\lim_{\eps\to 0}\dot Z_{\eps}(\beta_\eps)\ =\ \frac{1}{2}\,\dot U^0\,
          \Bigl(DH(Z^0) + \frac{Z^0}{\sigma a^2}\bigl(H(Z^0)
           - DH(Z^0)Z^0 \bigr)\Bigr)+\dot Z^0.
\end{equation}

To begin with we express $\dot{Z}_\ep(\beta_\ep)$ according to \eqref{eq:geos}
\begin{align}
 \dot{Z}_\ep(\beta_\eps) &= \dot{Z}_\eps^0 + \int_{\alpha_\eps}^{\beta_\eps}
\ddot{Z}_\eps(r) \,\d r\nonumber\\
 &= \dot{Z}_\eps^0 + \frac{1}{2}\int_{\alpha_\eps}^{\beta_\eps} DH \delta_\eps
\dot{U}^2 \,\d r -  \int_{\alpha_\eps}^{\beta_\eps} \frac{e Z_\eps}{\sigma a^2 -
U_\eps^2 H(Z_\eps)\delta_\eps} \,\d r\nonumber\\
&\qquad- \frac{1}{2}\int_{\alpha_\eps}^{\beta_\eps} \frac{\dot{U}_\eps^2 DH
\delta_\eps Z_\eps^2}{\sigma a^2 - U_\eps^2 H(Z_\eps)\delta_\eps} \,\d r +
\frac{1}{2}\int_{\alpha_\eps}^{\beta_\eps} \frac{\dot{U}_\eps^2 H \delta'_\eps
U_\eps Z_\eps}{\sigma a^2 - U_\eps^2 H(Z_\eps)\delta_\eps} \,\d r\\
&\qquad+\int_{\alpha_\eps}^{\beta_\eps} \frac{\dot{U}_\eps DH \dot{Z}_\eps
\delta_\eps U_\eps Z_\eps}{\sigma a^2 - U_\eps^2 H(Z_\eps)\delta_\eps} \,\d r +
\int_{\alpha_\eps}^{\beta_\eps} \frac{\dot{U}_\eps^2 H \delta_\eps
Z_\eps}{\sigma a^2 - U_\eps^2 H(Z_\eps)\delta_\eps} \,\d r\nonumber\\
&=:\dot{Z}_\eps^0 + \rom{I}_\eps + \rom{II}_\eps + \rom{III}_\eps + \rom{IV}_\eps +
\rom{V}_\eps + \rom{VI}_\eps\,,\nonumber
\end{align}
where we have used that
\begin{align}\label{eq:terms}
 \frac{1}{2}\dot{U}_\ep^2 \tilde{G}_\ep -
\dot{U}_\ep(H(Z_\ep)\delta_\ep U_\ep)\dot{} = \frac{1}{2}\dot{U}_\ep^2 DH
\delta_\ep Z_\ep - \frac{1}{2}\dot{U}_\ep^2
H\delta_\ep' U_\ep - \dot{U}_\ep DH \dot{Z}_\ep\delta_\ep U_\ep - \dot{U}_\ep^2
H\delta_\ep\,.
\end{align}

Proceeding term by term we have
\begin{align}\nonumber\label{eq:1stterm}
 \Bigl|\rom{I}_\eps - \frac{1}{2}DH(Z^0)\dot{U}^0\Bigr|
 &=\frac{1}{2} \Bigl| \int_{-\ep}^\ep
\left(DH(Z_\ep(U_\ep^{-1}(s)))\delta_\ep(s)\dot{U}_\ep(U_\ep^{-1}
(s)) -
DH(Z^0)\delta_\ep(s)\dot{U}^0\right)\,\dd s \Bigr|\\ \nonumber
&\leq {\frac{1}{2}\sup}_{w\in U_\ep^{-1}([-\ep,\ep]) %= \Gamma_\ep(U_\ep)
} \Bigr|DH(Z_\ep(w))\dot{U}_\ep(w) -
DH(Z^0)\dot{U}^0\Bigr|\ \|\rho\|_{L^1}\ \to 0\,,
\end{align}
where we have used that $U_\ep^{-1}([-\ep,\ep]) = [\alpha_\eps,\beta_\eps]$
together with Lemma \ref{lem:gamma}.
%in the first equality we made the substitution $s=U_\ep(r)$ and used the fact that $\int_{-\ep}^{\ep}\delta_\ep(s) =1$.
%Now since $\|\delta_\ep\|_{L^1}$ is bounded uniformly in $\ep$ and $\Gamma_\ep(U_\ep)\to \{-b\}$ we conclude that the
%Righthand-side goes to zero as $\ep\searrow 0$.
The next term, $\rom{II}_\eps$, vanishes in the limit by the uniform boundedness of
the integrand, the same holds true for $\rom{V}_\eps$.
Now $\rom{III}_\eps$ can be treated as $\rom{I}_\eps$, additionally using
\eqref{eq:estdenom} to conclude
\begin{align}
  \rom{III}_\eps \to  -\frac{1}{2}\frac{\dot{U}^0 DH(Z^0) (Z^0)^2}{\sigma
a^2}\,.
% \Bigr|\\
% &\leq \frac{1}{2}\sup_{w\in U_\ep^{-1}([-\ep,\ep])}\Bigl|\frac{\dot{U}
% _\ep(w)DH(Z_\ep(w))Z^2_\ep(w)}{\sigma a^2 -
% U^2_\ep(w) H(Z_\ep(w))\delta_\ep(U_\ep(w))} - \frac{\dot{U}^0
% DH(Z^0)(Z^0)^2}{\sigma a^2}\Bigr|\ \|\rho \|_{L^1}\ \to 0\,.
\end{align}
We treat $\rom{IV}_\eps$ using $\int \delta_\ep'(s) s\,\dd s = -1$
to obtain
\begin{align}\nonumber
 \Bigl|\rom{IV}_\eps + \frac{\dot{U}^0 H(Z^0)Z^0}{2\sigma a^2}\Bigr|\\
&\hspace{-2cm}\leq {\frac{1}{2}\sup}_{w\in
U_\ep^{-1}([-\ep,\ep])}\Bigl|\frac{\dot{U}_\ep(w)H(Z_\ep(w))Z_\ep(w)}{\sigma a^2
-U^2_\ep(w) H(Z_\ep(w))\delta_\ep(U_\ep(w))} - \frac{\dot{U}^0 H(Z^0)Z^0}{\sigma a^2}\Bigr|\
\|\delta_\ep'(s)s\|_{L^1}\ \to 0\,.
\end{align}

Finally, the limit of $\rom{VI}_\eps$  is proportional to the limit of $\rom{IV}_\eps$,
\begin{align}\nonumber
 \Bigl|\rom{VI}_\eps - \frac{\dot{U}^0 H(Z^0)Z^0}{\sigma a^2}\Bigr|\\
&\hspace{-2cm}\leq  \sup_{w\in
U_\ep^{-1}([-\ep,\ep])}\Bigl|\frac{\dot{U}_\ep(w)H(Z_\ep(w))Z_\ep(w)}{\sigma a^2
-
U^2_\ep(w) H(Z_\ep(w))\delta_\ep(U_\ep(w))} - \frac{\dot{U}^0 H(Z^0)Z^0}{\sigma
a^2}\Bigr|\
\|\rho\|_{L^1}\to 0 \,.
\end{align}
By adding up the terms and using \eqref{eq:dataconv} we establish
\eqref{eq:simple_lim_a}.
\medskip

The calculations for ${\sf B}$ are relatively simple.
Using equation \eqref{eq:geos} for $\ddot{V}_\eps$ we
write (cf.\ \eqref{eq:Vepsteps})
\begin{align}
 V_\eps(\beta_\eps) = V^0_\eps + \frac{1}{2}
\int_{\alpha_\eps}^{\beta_\eps}\int_{\alpha_\eps}^s H(Z_\eps(r))\delta'_\eps(U_\eps(r)) \dot{U}_\eps(r)^2\,\dd r\, \dd s +
O(\eps)\,.
\end{align}
We substitute twice, use $\int_{-\eps}^{\eps}\int_{-\eps}^s
\delta_\eps'(r)\,\dd r\, \dd s = 1$ and insert appropriate terms to obtain
\begin{align}\nonumber
 \frac{1}{2}\biggl| \int_{\alpha_\eps}^{\beta_\eps}\int_{\alpha_\eps}^s
& H(Z_\eps(r))\delta'_\eps(U_\eps(r)) \dot{U}_\eps(r)^2\,\dd r\,
\dd s - H(Z^0)\biggr|\\
 = &\frac{1}{2}\biggl|
\int_{-\eps}^{\eps}\frac{1}{\dot{U}_\eps(U_\eps^{-1}(l))}\int_{-\eps}^l
H(Z_\eps(U_\eps^{-1}(\tau)))\delta'_\eps(\tau)
\dot{U}_\eps(U_\eps^{-1}(\tau))\,\dd \tau\,\dd l\nonumber \\
&\hspace*{4cm} -
\int_{-\eps}^{\eps}\frac{\dot{U}_\eps(U_\eps^{-1}(l))}{\dot{U}_\eps(U_\eps^{-1}(l))}\int_{-\eps}^l H(Z^0)
\delta_\eps'(\tau)\,\dd \tau\,\dd l\biggr|\\
\leq &\frac{4 \|\rho'\|_\infty}{\dot{U}^0}\ \biggl( \sup_{w\in
U_\eps^{-1}([-\eps,\eps])} |H(Z_\eps(w))\dot{U}_\eps(w) -
H(Z^0)\dot{U}^0| \nonumber\\
&\hspace*{4cm} + |H(Z^0)| \sup_{w\in U_\eps^{-1}([-\eps,\eps])}
|\dot{U}_\eps(w) -\dot{U}^0|\ \biggr)\,,
\nonumber
\end{align}
which goes to zero by the uniform convergence of $Z_\eps$ and $\dot{U}_\eps$,
establishing the claimed form of ${\sf B}$.
%
%
% Thus we obtain
% \begin{align}
%  B = \frac{1}{2}H(Z^0_p)+V^0\,.
%  \end{align}
%\todo{polish: how to term up to obtain the limit!}
%This shows that (\todo{apart from $\dot{U}^0$}) we get the same
%$A$ as in \cite[Formula (38), p.\ 2695]{PO:01} (note that
%$\frac{1}{\sigma a^2} = \frac{\Lambda}{3}$).
\medskip

Finally we turn to the calculation of ${\sf C}$ which is the most demanding one.
As above we express $\dot{V}_\eps(\beta_\eps)$ using the geodesic equation
\eqref{eq:geos} to obtain
\begin{align}\label{eq:V_dot}\nonumber
 \dot{V}_{\eps} (\beta_\eps)&= \dot{V}_{\eps}^0 + \int_{\alpha_\eps}^{\beta_\eps}
\frac{1}{2}H(Z_{\eps}(r))\delta_{\eps}'(U_\eps(r))\dot{U}^2_\eps(r)\,\dd r +\int_{\alpha_\eps}^{\beta_\eps}
DH(Z_{\eps}(r))\delta_{\eps}(U_\eps(r))\dot{U}_\eps(r)\dot{Z}_\eps(r)\,\dd r\\
 & - \int_{\alpha_\eps}^{\beta_\eps} \frac{e\left(V_\eps(r) + H(Z_\eps(r))\delta_{\eps}(U_\eps(r))U_\eps(r)\right)}{\sigma
a^2- H(Z_\eps(r))\delta_{\eps}(U_\eps(r))U^2_\eps(r)}\,\dd r \\\nonumber
 & -  \int_{\alpha_\eps}^{\beta_\eps} \frac{\left(\frac{1}{2}\dot{U}^2_\eps(r)\tilde{G}_\eps(r) -
\dot{U}_\eps(r)\big(H(Z_\eps(r))U_\eps(r)\delta_\eps(U_\eps(r))\dot{\big)}\right)V_\eps(r)}{\sigma a^2-
H(Z_\eps(r))\delta_{\eps}(U_\eps(r))U^2_\eps(r)}\,\dd r\, \\\nonumber
 & -  \int_{\alpha_\eps}^{\beta_\eps} \frac{\left(\frac{1}{2}\dot{U}^2_\eps(r)\tilde{G}_\eps(r) -
\dot{U}_\eps(r)\big(H(Z_\eps(r))U_\eps(r)\delta_\eps(U_\eps(r))\dot{\big)}\right)H(Z_\eps(r))\delta_{\eps}
(U_\eps(r))U_\eps(r)}{\sigma a^2-
H(Z_\eps(r))\delta_{\eps}(U_\eps(r))U^2_\eps(r)}\,\dd r\\
&=: \dot{V}_\eps^0 + \rom{I}_\eps + \rom{II}_\eps + \rom{III}_\eps + \rom{IV}_\eps + \rom{V}_\eps\nonumber\,.
\end{align}
Note that $\rom{III}_\eps\to 0$ because the integrand is uniformly bounded. Now
we rewrite $\rom{I}_\eps$ substituting $s=U_\eps(r)$, abbreviating
$w:=U_\eps^{-1}(s)$, and using equation \eqref{eq:geos} for $\ddot{U}_\eps$
\begin{align}\label{eq:I_eps}\nonumber
\rom{I}_\eps &= \frac{1}{2} \int\limits_{-\eps}^{\eps}
H(Z_{\eps}(w))\delta_{\eps}'(s)\dot{U}_\eps(w)\,\dd s
= 0 - \frac{1}{2}\int\limits_{-\eps}^\eps
\delta_\eps(s)\big(H(Z_\eps(w))\dot{U}_\eps(w)\dot{\big)}\,\dd s\\
&=-\frac{1}{2}\int\limits_{-\eps}^\eps
\delta_\eps(s)DH(Z_\eps(w))\dot{Z}_\eps(w)\,\dd s \\\nonumber
  &\quad + \frac{1}{2}\int\limits_{-\eps}^\eps
\delta_\eps(s)H(Z_\eps(w))\frac{\left(
\frac{1}{2}\dot{U}_\eps(w)\tilde{G}_\eps(w)-\big(H(Z_\eps)\delta_\eps U_\eps\dot{\big)}  (w) \right)s}{\sigma a^2 -
s^2H(Z_\eps(w))\delta_\eps(s)} \,\dd s + O(\eps)\,.
\end{align}
Now the integrals on the right-hand-side of \eqref{eq:I_eps} combine with
$\rom{II}_\eps$ and $\rom{V}_\eps$ to give

\begin{align}\label{eq:V_dot_neu}
  \dot{V}_{\eps} (\beta_\eps)&= \dot{V}_{\eps}^0 + \frac{1}{2}\rom{II}_\eps +
\rom{III}_\eps + \rom{IV}_\eps + \frac{1}{2}\rom{V}_\eps\,.
\end{align}
Now we insert equation \eqref{eq:geos} for $\dot{Z}_\eps$ into $\rom{II}_\eps$
and follow the same pattern as before. The only remarkable new point is the
occurrence of the regularisation-dependent term $\int_{-\eps}^\eps
\delta_\eps(s)^2 s\, \dd s$, whose prefactors cancel after a long and tedious
calculation, where we repeatedly use identities as e.g.\
$\int_{-\eps}^\eps\int_{-\eps}^s \delta_\eps(r)\,\dd r \,\dd s = \frac{1}{2}$.
% Finally, we obtain
% \begin{align*}
% {\sf C}=\lim_{\eps\to 0}\dot V_\eps(\beta_\eps)\
% &= \dot{V}^0+
% \frac{\dot{U}^0}{8} \bigg(H_{,2}^2(Z_p^0) + H_{,3}^2(Z_p^0) +
% \sigma H_{,4}^2(Z_p^0) \\ \nonumber
% & \qquad\qquad\qquad\qquad\quad +
% \frac{1}{\sigma a^2} H^2(Z_p^0) - \frac{1}{\sigma a^2}\left(Z_q^0
% H_{,q}(Z_p^0)\right)^2\bigg)\\\nonumber
% &\qquad -\frac{\dot{U}^0}{2 \sigma
% a^2}\left(H_{,q}(Z_p^0)Z_{q}^0 -
% H(Z_p^0)\right)V^0 + \frac{1}{2}H_{,q}(Z_p^0)\dot{Z}_q^0\,,
% \end{align*}
% in the same way, as we calculated ${\sf A_p}$.
For example we obtain for the term in \eqref{eq:V_dot_neu} related to
$\dot{Z}_\ep^0$
\begin{align*}\nonumber
 \Bigl| &\frac{1}{2}\int_{\alpha_\eps}^{\beta_\eps}
DH(Z_{\eps}(r))\delta_{\eps}(U_\eps(r))\dot{U}_\eps(r)\dot{Z}_\eps(r)\,\dd r - \frac{1}{2}DH(Z^0)\dot{Z}^0\Bigr| \\
 &=\frac{1}{2}\Bigl| \int_{-\ep}^\ep
\left(DH(Z_\ep(U_\ep^{-1}(s)))\delta_\ep(s)\dot{Z}_\ep(U_\ep^{-1}(s)) -
DH(Z^0)\delta_\ep(s)\dot{Z}^0\right)\,\dd s \Bigr| \\
&\qquad \leq \sup_{w\in U_\ep^{-1}([-\ep,\ep])}\frac{1}{2} \|\rho\|_{L^1}
\left( \left|DH(Z_\ep(w))\right| \left| \dot{Z}^0_\ep -
\dot{Z}^0 \right| + \left|\dot{Z}^0 \right| \left|DH(Z_\ep(w))-DH(Z^0)\right|\ \right) \to 0\,.
\end{align*}

\end{appendix}

\end{document}